\let\latexarabic\arabic
\let\latexdocument\document
\let\latexenddocument\enddocument
\let\document\latexdocument
\let\enddocument\latexenddocument
\let\arabic\latexarabic
\newcommand{\vast}{\bBigg@{4}}
\newcommand{\Vast}{\bBigg@{5}}
\DeclareMathAlphabet{\mathcal}{OMS}{cmsy}{m}{n}
\newcommand{\pK}{\ensuremath{p_{\text{selective}}}}
\newcommand{\pKhat}{\ensuremath{\hat{p}_{\text{selective}}}}
\newcommand{\pKSigma}{\ensuremath{p_{\Sigma,\text{selective}}}}
\newcommand{\pr}{\text{pr}}
\newcommand{\E}{\text{E}}
\renewcommand{\forall}{\text{ for all }}
\renewcommand{\algocf@captiontext}[2]{#1\algocf@typo. \AlCapFnt{}#2} 
\def\@algocf@capt@plain{top}
\renewcommand{\algocf@makecaption}[2]{%
  \addtolength{\hsize}{\algomargin}%
  \sbox\@tempboxa{\algocf@captiontext{#1}{#2}}%
  \ifdim\wd\@tempboxa >\hsize
    \hskip .5\algomargin%
    \parbox[t]{\hsize}{\algocf@captiontext{#1}{#2}}
  \else%
    \global\@minipagefalse%
    \hbox to\hsize{\box\@tempboxa}
  \fi%
  \addtolength{\hsize}{-\algomargin}%
}
\begin{document}
\nolinenumbers

\markboth{Y. Chen and D. Witten}{Selective inference for $k$-means clustering}
\title{Selective inference for $k$-means clustering}

\author{Yiqun T. Chen}
\affil{Department of Biostatistics, University of Washington, Seattle, Washington, 98195, U.S.A.
\email{yiqunc@uw.edu}}

\author{Daniela M. Witten}
\affil{Departments of Biostatistics and Statistics, University of Washington, Seattle, Washington, 98195, U.S.A.
\email{dwitten@uw.edu}}

\date{}
\maketitle
\begin{abstract}
We consider the problem of testing for a difference in means between clusters of observations identified via k-means clustering. In this setting, classical hypothesis tests lead to an inflated Type I error rate. To overcome this problem, we take a selective inference approach. We propose a finite-sample p-value that controls the selective Type I error for a test of the difference in means between a pair of clusters obtained using $k$-means clustering, and show that it can be efficiently computed. We apply our proposal in simulation, and on hand-written digits data and single-cell RNA-sequencing data.

\end{abstract}

\begin{keywords}
Post-selection inference; Unsupervised learning; Hypothesis testing; Type I error
\end{keywords}




\section{Introduction}
\label{section:intro}
Testing for a difference in means between two groups is one of the most fundamental tasks in statistics, with numerous applications. If the groups under investigation are \emph{pre-specified}, i.e., not a function of the observed data, then classical hypothesis tests will control the Type I error rate. However, it is increasingly common to want to test for a difference in means between groups that are \emph{defined through the observed data}, e.g., via the output of a clustering algorithm. For instance, in single-cell RNA-sequencing analysis, researchers often first cluster the cells, and then test for a difference in the expected gene expression levels between the clusters to quantify up- or down-regulation of genes, annotate known cell types, and identify new cell types~\citep{Grun2015-lh,Aizarani2019-xy,Lahnemann2020-xf,Zhang2019-lx,Doughty2020-ss}. In fact, the inferential challenges resulting from testing data-guided hypotheses have been described as a ``grand challenge'' in the field of genomics~\citep{Lahnemann2020-xf}, and papers in the field continue to overlook this issue: as an example, \texttt{seurat}~\citep{Stuart2019-on}, the state-of-the-art single-cell RNA sequencing analysis tool, tests for differential gene expression between groups obtained via clustering, with a note that ``$p$-values [from these hypotheses] should be interpreted cautiously, as the genes used for clustering are the same genes tested for differential expression.'' Testing data-guided hypothesis also arises in the field of neuroscience~\citep{Kriegeskorte2009-lo,Button2019-hz}, social psychology~\citep{Hung2020-no}, and physical sciences~\citep{Friederich2020-aw,Pollice2021-ap}. When the null hypothesis is a function of the data, classical tests that do not account for this will fail to control the Type I error. 

In this paper, we develop a test for a difference in means between two clusters estimated from applying $k$-means clustering~\citep{Lloyd2006-io,MacQueen1967-dd}, an extremely popular clustering algorithm with numerous applications \citep{Xu2008-lp,Wang2008-gy,Steinley2006-cs,Hand2015-gx}. We consider the following simple and well-studied model~\citep{Gao2020-yt,Loffler2021-vj,Lu2016-ic} for $n$ observations and $q$ features: 
\begin{align}
\label{eq:data_gen}
    {X} \sim \mathcal{MN}_{n\times q}\left({\mu}, \mathbf{I}_n, \sigma^2 \mathbf{I}_q \right),
\end{align} 
where ${\mu}\in\mathbb{R}^{n\times q}$ has unknown rows ${\mu}_i$, and $\sigma^2>0$ is known. Given a realization ${x}\in\mathbb{R}^{n\times q}$ of ${X}$, we first apply the $k$-means clustering algorithm to obtain $\mathcal{C}({x})$, a partition of the samples $\{1,\ldots,n\}$. We might then consider testing the null hypothesis that the mean is the same across two \emph{estimated} clusters, i.e.,
\begin{align}
 H_0: \sum_{i\in {\hat{\mathcal{C}}}_1}{\mu}_i/|\hat{\mathcal{C}}_1| = \sum_{i\in \hat{\mathcal{C}}_2}{\mu}_i/|\hat{\mathcal{C}}_2|   \mbox{ versus }  H_1: \sum_{i\in \hat{\mathcal{C}}_1}{\mu}_i/|\hat{\mathcal{C}}_1| \neq \sum_{i\in \hat{\mathcal{C}}_2}{\mu}_i/|\hat{\mathcal{C}}_2|,
  \label{eq:null_clustering_mean}
\end{align}
where $\hat{\mathcal{C}}_1,\hat{\mathcal{C}}_2 \in \mathcal{C}({x})$ are estimated clusters with cardinality $|\hat{\mathcal{C}}_1|$ and $|\hat{\mathcal{C}}_2|$. This is equivalent to testing $H_0: {\mu}^\top \nu = {0}_q  \mbox{ versus }  H_1: {\mu}^\top \nu \neq {0}_q$, where 
\begin{align}
\label{eq:nu_def}
    \nu_i = 1\qty{i\in\hat{\mathcal{C}}_1}/|\hat{\mathcal{C}}_1| - 1\qty{i\in\hat{\mathcal{C}}_2}/|\hat{\mathcal{C}}_2|, \quad i = 1,\ldots, n,
\end{align}
and $1\qty{A}$ equals 1 if the event $A$ holds, and 0 otherwise.
At first glance, we can test the hypothesis in \eqref{eq:null_clustering_mean} by applying a Wald test, with $p$-value given by
\begin{align} 
\label{eq:wald_pval}
p_{\text{Naive}} = \pr_{H_0}\qty( \Vert {X}^\top \nu \Vert_2 \geq  \Vert {x}^\top \nu \Vert_2 ),
\end{align}
where $\Vert {X}^\top \nu \Vert_2 \sim \qty(\sigma \Vert \nu\Vert_2) \chi_q $ under $H_0$. But this ``naive'' approach ignores the fact that $H_0$ is chosen based on the data, i.e., we constructed the contrast vector in \eqref{eq:nu_def} because $\hat{\mathcal{C}}_1$ and $\hat{\mathcal{C}}_2$ were obtained by clustering. Therefore, we will observe substantial differences between the cluster centroids $\sum_{i\in {\hat{\mathcal{C}}}_1}{x}_i/|\hat{\mathcal{C}}_1|$ and $\sum_{i\in \hat{\mathcal{C}}_2}{x}_i/|\hat{\mathcal{C}}_2|$, even in the absence of true differences in their population means (left panel Figure~\ref{fig:motivation}). The center panel of Figure~\ref{fig:motivation} illustrates that the test based on \eqref{eq:wald_pval} does not control the selective Type I error: that is, the probability of falsely rejecting a null hypothesis, given that we decided to test it~\citep{Fithian2014-ow}.
\begin{figure}[htbp!]
\begin{subfigure}{\linewidth}
  \centering
  \includegraphics[width=.3\linewidth]{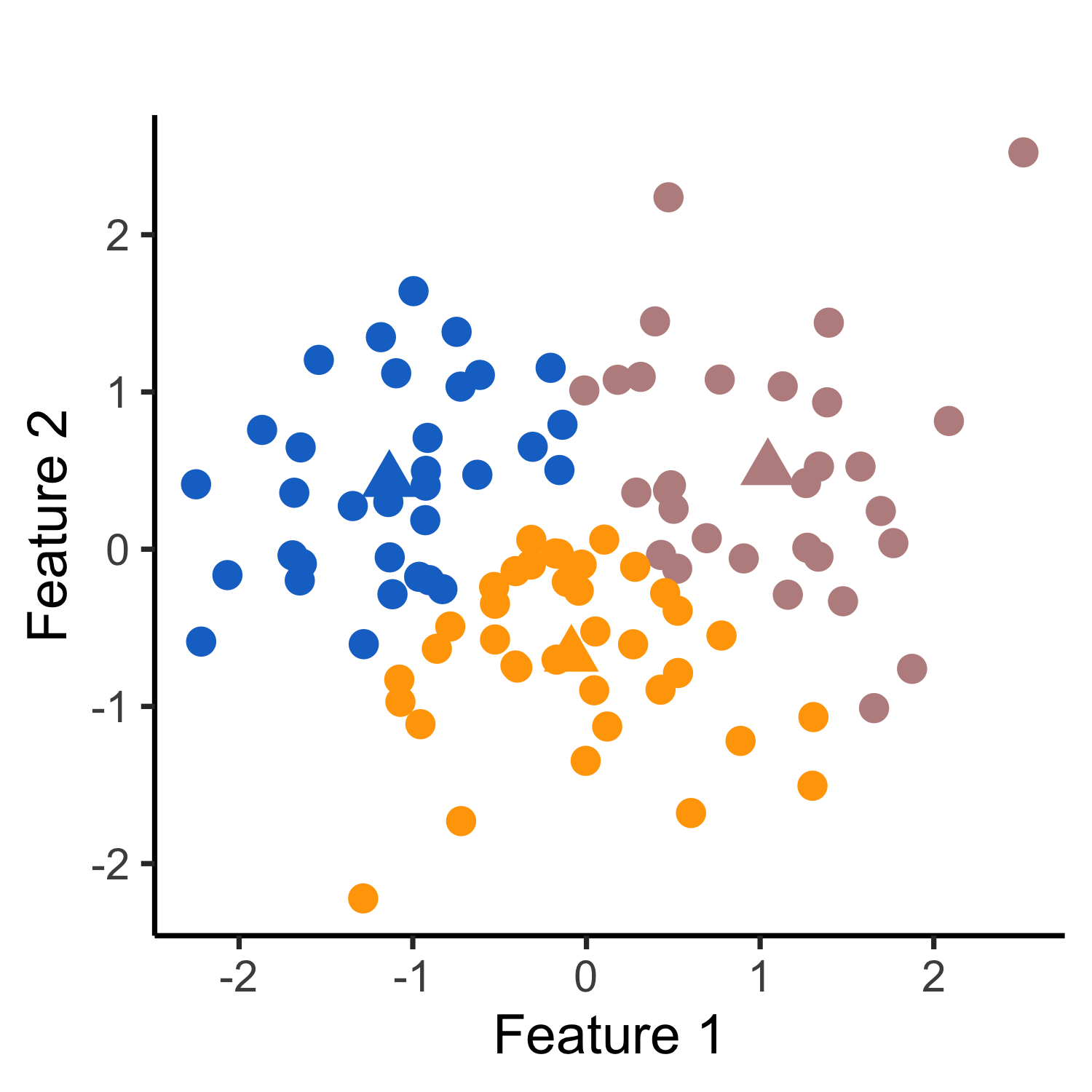}
  \includegraphics[width=.3\linewidth]{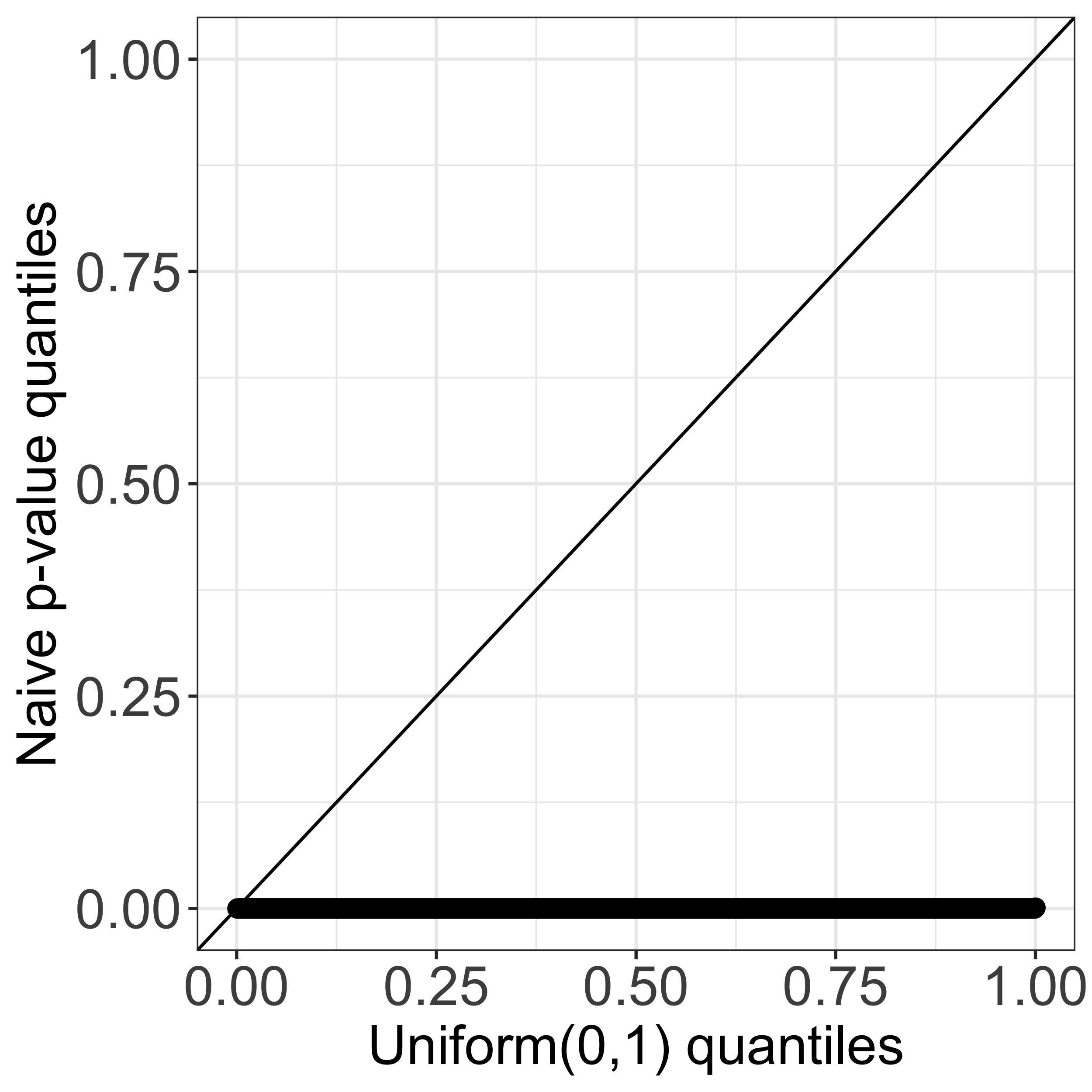}
  \includegraphics[width=.3\linewidth]{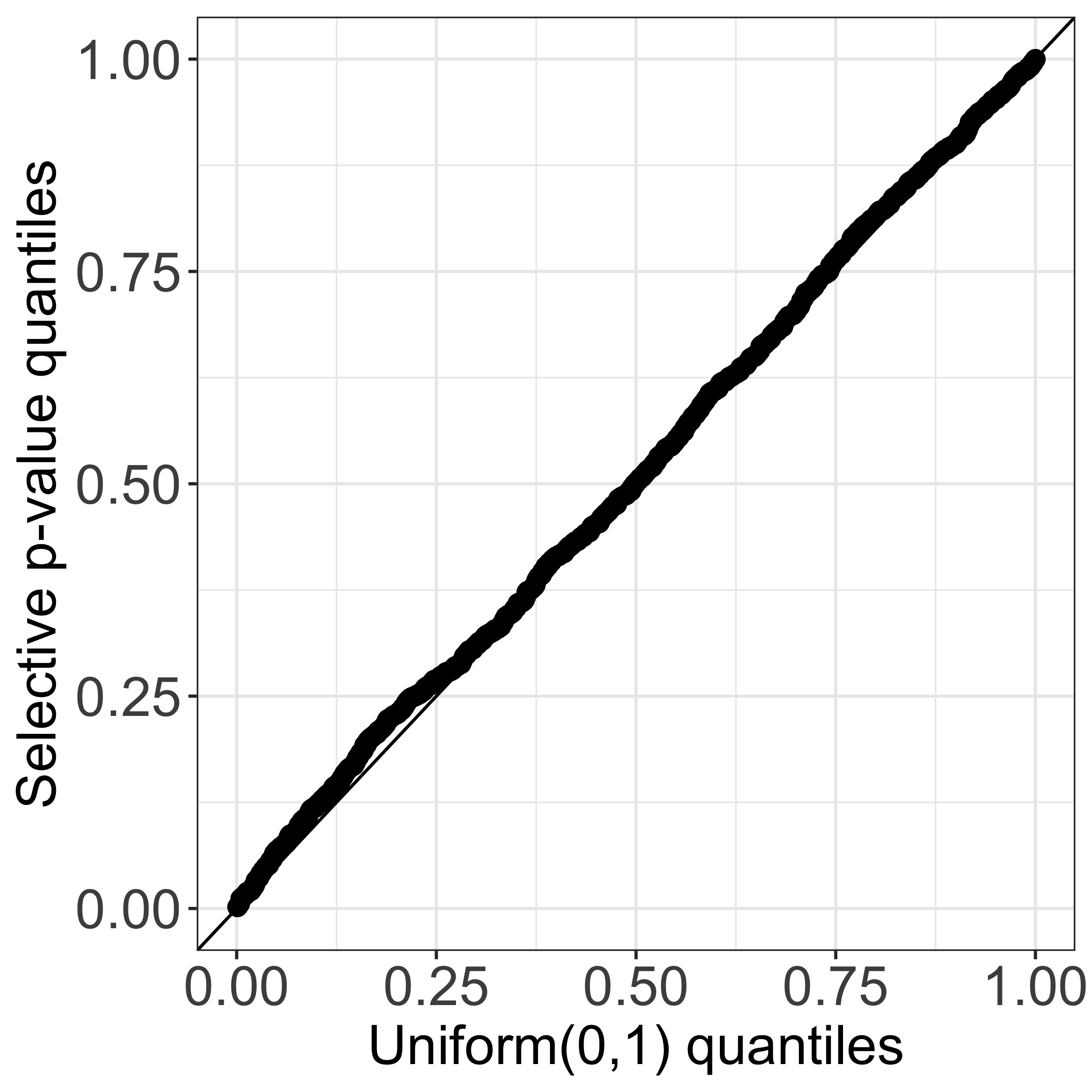}
\end{subfigure}
\vspace*{-4mm}
\caption{\textit{Left: } One simulated dataset generated from \eqref{eq:data_gen} with ${\mu} = {0}_{100\times 2}$ and $\sigma = 1$. We apply $k$-means clustering to obtain three clusters. The cluster centroids are displayed as triangles. \textit{Center: } Quantile-quantile plot of the naive $p$-values (defined in \eqref{eq:wald_pval}) applied to 2,000 simulated datasets from \eqref{eq:data_gen} with ${\mu} = {0}_{100\times 2}$ and $\sigma = 1$. 
\textit{Right: }Quantile-quantile plot of our proposed $p$-values (defined in \eqref{eq:p_val_k_means_chen}) applied to the same simulated datasets.}
\label{fig:motivation}
\end{figure}

Notably, the problem of testing for a difference in means between two groups obtained via clustering cannot be easily overcome by sample splitting, as pointed out in~\citet{Gao2020-yt} and \citet{Zhang2019-lx}. To see why, we divide the observations into a training and a test set. We apply $k$-means clustering on {only} the training set (left panel of Figure~\ref{fig:sample_split}), and then assign the test set observations to those clusters (to obtain the center panel of Figure~\ref{fig:sample_split}, we applied a 3-nearest neighbor classifier). Finally, we compute the naive $p$-values \eqref{eq:wald_pval} \emph{only} on the test set. Unfortunately, this approach does not work: while we clustered only the training data, we still used the test data to label the test observations, and consequently to construct the contrast vector $\nu$ in \eqref{eq:nu_def}. Therefore, the Wald test based on sample-splitting remains extremely anti-conservative, as shown in the right panel of Figure~\ref{fig:sample_split}, and does not lead to a valid test of $H_0$ in \eqref{eq:null_clustering_mean}.

\noindent
\begin{figure}[htbp!]
\begin{subfigure}{\linewidth}
  \centering
  \includegraphics[width=.3\linewidth]{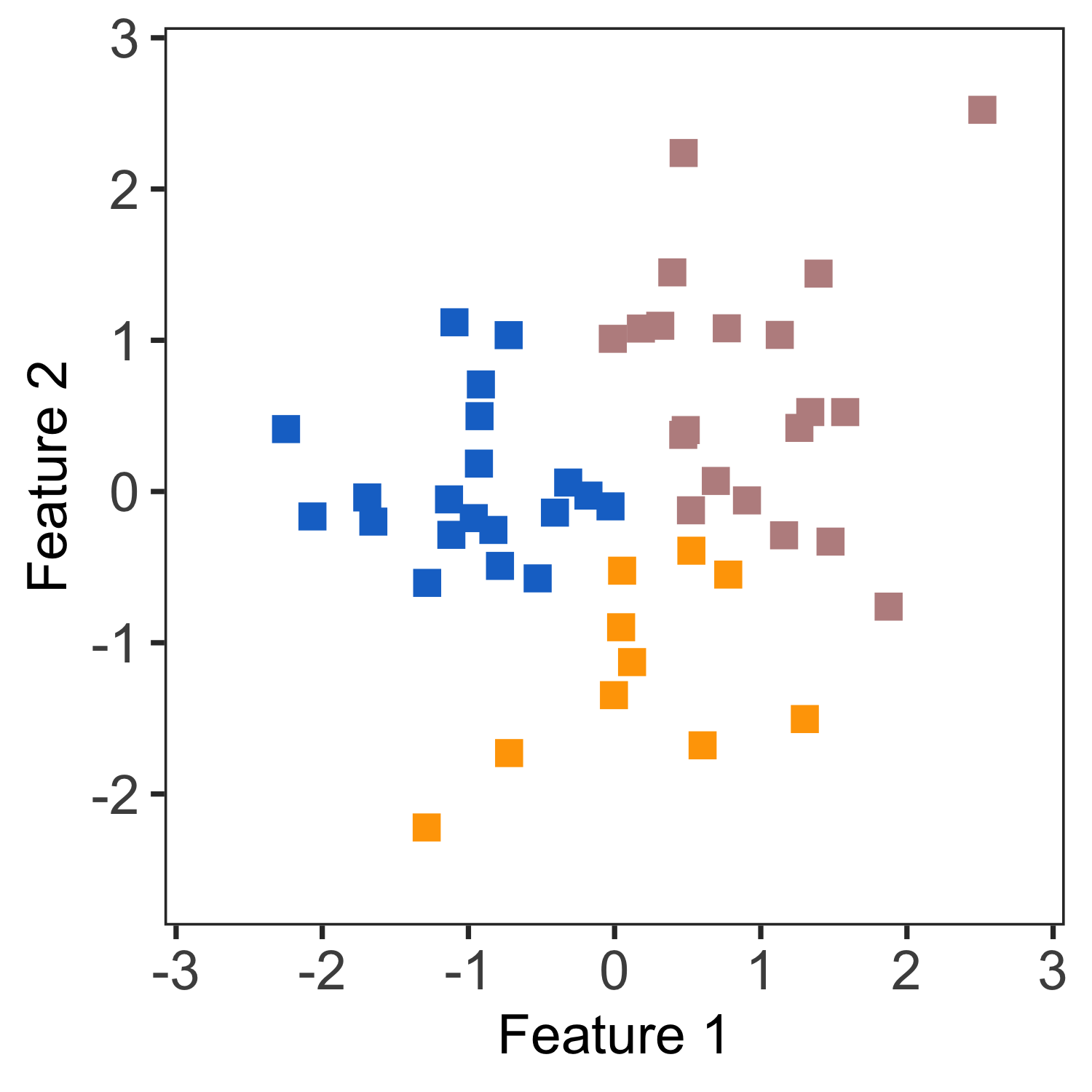}
  \includegraphics[width=.3\linewidth]{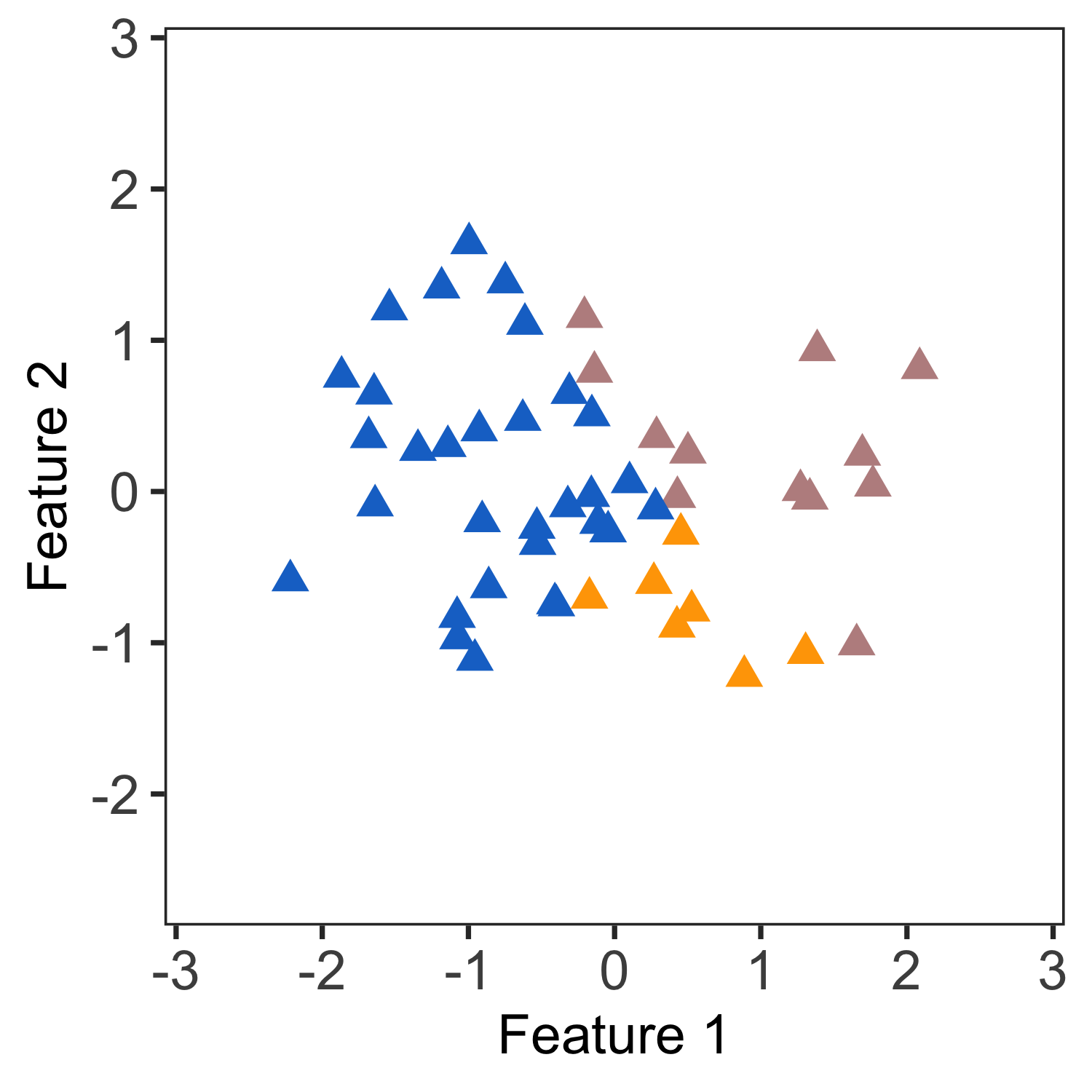}
  \includegraphics[width=.3\linewidth]{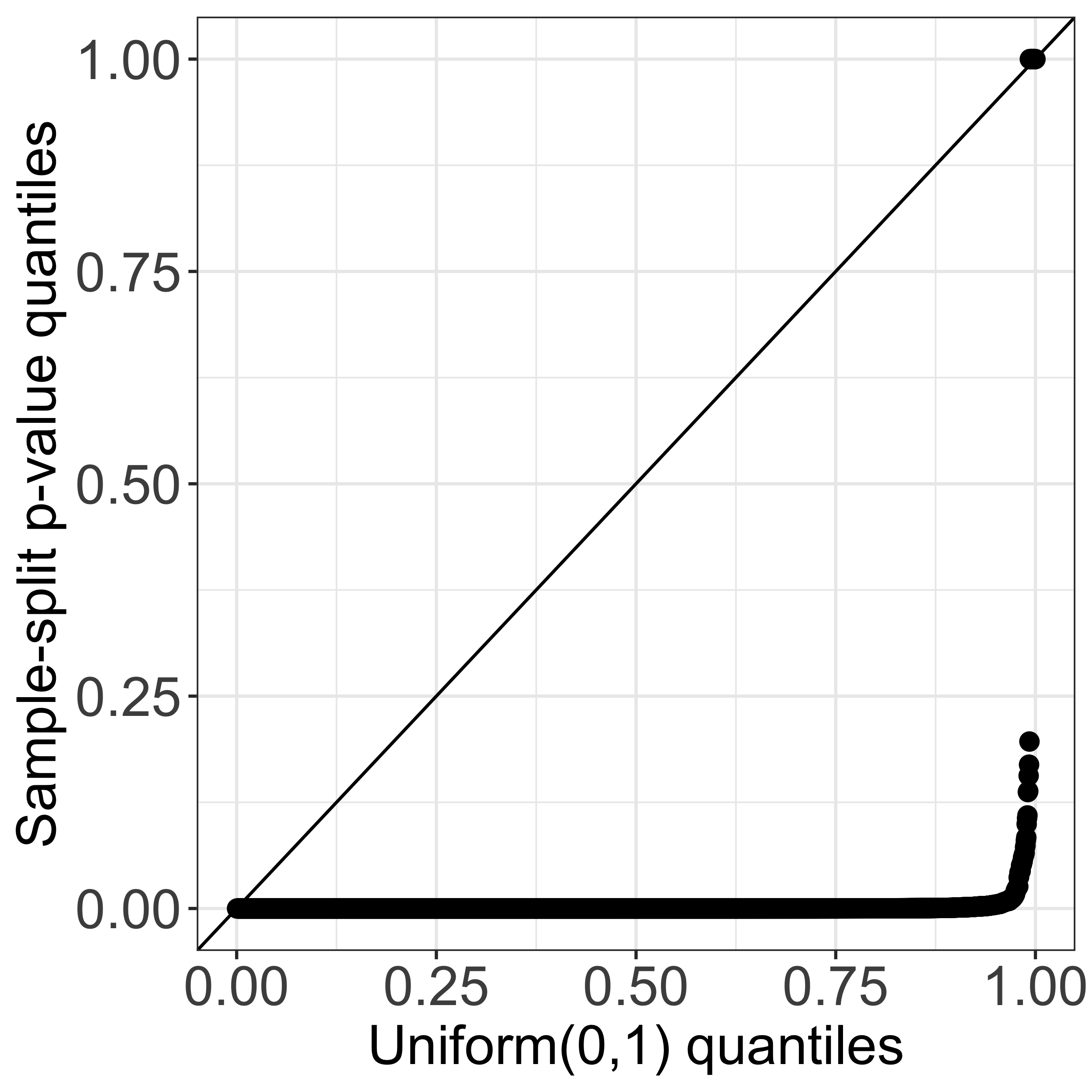}
\end{subfigure}
\vspace*{-4mm}
\caption{\textit{Left: } One simulated dataset generated from \eqref{eq:data_gen} with ${\mu} = {0}_{100\times 2}$ and $\sigma = 1$. We apply $k$-means clustering on the training set to obtain three clusters. \textit{Center: } We apply the training set clusters to the test set using a 3-nearest neighbors classifier. \textit{Right: } Quantile-quantile plot of the naive $p$-values \eqref{eq:wald_pval} applied to the test set, aggregated over 2,000 simulated datasets.}
\label{fig:sample_split}
\end{figure}

In this paper, we develop a test of $H_0$ that controls the selective Type I error. That is, we wish to ensure that the probability of rejecting $H_0$ at level $\alpha$, given that $H_0$ holds and we decided to test it, is no greater than $\alpha$:
\begin{align}
\label{eq:selective_type_1}
\pr_{H_0}\qty(\text{reject $H_0$ at level $\alpha$} \;\middle\vert\; H_0 \text{ is tested}   )\leq \alpha, \;\forall \alpha \in (0,1).
\end{align}
To develop the test, we leverage the selective inference framework, which has been applied extensively in high-dimensional linear modeling~\citep{Lee2016-te,Tibshirani2016-bx,Fithian2014-ow,Rugamer2022-yz,Schultheiss2021-zb,Taylor2018-mx,Charkhi2018-lz,Yang2016-km,Loftus2014-eq}, changepoint detection~\citep{jewell2019testing,Hyun2018-pe,Hyun2018-ta,Chen2021-qr,Le_Duy2021-iy,Duy2020-jw,Benjamini2019-yc}, and clustering~\citep{Zhang2019-lx,Gao2020-yt,Watanabe2021-dq}. The key insight behind selective inference is as follows: to obtain a valid test of $H_0$, we need to \emph{condition} on the aspect of the data that led us to test it. In our case, we chose to test the null hypothesis in \eqref{eq:null_clustering_mean} because $\hat{\mathcal{C}}_1$ and $\hat{\mathcal{C}}_2$ were obtained via $k$-means clustering. Therefore, we compute a $p$-value conditional on the event that $k$-means clustering yields $\hat{\mathcal{C}}_1$ and $\hat{\mathcal{C}}_2$. This results in selective Type I error control \eqref{eq:selective_type_1}, as seen in the right panel of Figure~\ref{fig:motivation}.

There is a rich literature on estimating and quantifying the uncertainty in the number of clusters~\citep{Li2010-tc,Chen2009-kn,Chen2004-th,McLachlan2019-ey,Dobriban2020-lm}, as well as assessing cluster stability and heterogeneity~\citep{Suzuki2006-rd,Kerr2001-ph,Kimes2017-qq,Chung2020-ia,Jin2016-iq,Aw2021-zv,Chung2015-le}. Others have examined the asymptotic properties of clustering models from a Bayesian perspective~\citep{Guha2019-ec,Nobile2004-kf,Cai2020-zp}. In addition, $k$-means clustering is a special case of the expectation-maximization algorithm, which allows us to 
tap into the active line of research on the statistical guarantees of the expectation-maximization algorithm~\citep{Zhang2014-fj,Wang2015-gy,Tony_Cai2019-ud,Yi2015-an,Balakrishnan2017-eg}. However, most prior work focused on scenarios where the number of clusters is correctly specified, and the estimated clusters memberships are close to the truth. By contrast, we are interested in a correctly-sized test for the null hypothesis \eqref{eq:null_clustering_mean}, even when $\hat{\mathcal{C}}_1,\hat{\mathcal{C}}_2$ do not correspond to true clusters. In addition, existing work often relies on asymptotic approximations and bootstrap resampling. Two recent exceptions include \citet{Zhang2019-lx} and~\citet{Gao2020-yt}, who took a selective inference approach and computed finite-sample $p$-values for testing the difference in means between estimated clusters obtained via linear classification rules and hierarchical clustering, respectively. Our work is closest to~\citet{Gao2020-yt}, and extends their framework to $k$-means clustering. We provide an exact, finite-sample test of the difference in means between a pair of clusters estimated via $k$-means clustering under model~\eqref{eq:data_gen}, without the need for sample splitting. 

Methods developed in this paper are implemented in the \texttt{R} package \texttt{KmeansInference}, available at \texttt{https://github.com/yiqunchen/KmeansInference}. Data and code for reproducing the results in this paper can be found at \texttt{https://github.com/yiqunchen/KmeansInference-experiments}. 


Throughout this paper, we will use the following notational conventions. For a matrix ${A}$, ${A}_i$ denotes the $i$th row and $A_{ij}$ denotes the $(i,j)$th entry. For a vector $\nu \in \mathbb{R}^n$, $\Vert\nu\Vert_2$ denotes its $\ell_2$ norm, and ${\Pi}_\nu^\perp$ is the projection matrix onto the orthogonal complement of $\nu$, i.e., ${\Pi}_\nu^\perp = \textbf{I}_n-\nu\nu^{\top}/\Vert\nu\Vert_2^2$, where $\textbf{I}_n$ is the $n$-dimensional identity matrix. Moreover, $\text{dir}(\nu) = \nu/\Vert \nu \Vert_2$ if $\nu \neq {0}_n$ and ${0}_n$ otherwise, where $0_n$ is the $n$-vector of zeros. We let $\langle \cdot, \cdot\rangle$ and $1\qty{\cdot}$ denote the inner product of two vectors and the indicator function, respectively.

\section{Selective inference for k-means clustering}
\label{section:hypothesis}


\subsection{A brief review of $k$-means clustering}
\label{section:k_means_review}
In this section, we review the $k$-means clustering algorithm. 
Given samples $x_1,\ldots,x_n \in \mathbb{R}^q$ and a positive integer $K$, $k$-means clustering partitions the $n$ samples into disjoint subsets $\hat{\mathcal{C}}_1,\ldots,\hat{\mathcal{C}}_K$  by solving the following optimization problem:
\begin{align}
\begin{split}
\label{eq:k_means_objective}
  &\underset{\mathcal{C}_1,\ldots,\mathcal{C}_K}{\text{minimize}}\;\qty{ \sum_{k=1}^K \sum_{i \in \mathcal{C}_k} \bigg\Vert x_i -  \sum_{i \in \mathcal{C}_k} x_i/|\mathcal{C}_k| \bigg\Vert_2^2 } \\
  &\text{subject to} \;
 \bigcup_{k=1}^K \mathcal{C}_k = \{1,\ldots, n\},\mathcal{C}_k\cap \mathcal{C}_{k'} = \emptyset ,\forall k\neq k'.
\end{split}
\end{align} 
It is not typically possible to solve for the global optimum in \eqref{eq:k_means_objective}~\citep{Aloise2009-on}. A number of algorithms are available to find a local optimum~\citep{Hartigan1979-up,Zha2002-cm,Arthur2007-zc}; one such approach is Lloyd’s algorithm~\citep{Lloyd2006-io}, given in Algorithm~\ref{algo:k_means_alt_min}. We first sample $K$ out of $n$ observations as initial centroids (step 1 in Algorithm~\ref{algo:k_means_alt_min}). We then assign each observation to the closest centroid (step 2). Next, we iterate between re-computing the centroids and updating the cluster assignments (steps 3a. and 3b.) until the cluster assignments stop changing. The algorithm is guaranteed to converge to a local optimum~\citep{Hastie2001-mr}.

In what follows, we will sometimes use $c_i^{(t)}({x})$ and $m_k^{(t)}({x})$ rather than $c_i^{(t)}$ and $m_k^{(t)}$ to emphasize the dependence of the cluster labels and centroids on the data ${x}$.

\begin{algorithm}
\DontPrintSemicolon 
\KwIn{Data $x_1,\ldots,x_n \in \mathbb{R}^q$, number of output clusters $K$, maximum iteration $T$, random seed $s$.}
\KwOut{Cluster assignments $\qty(c_1^{(t)},\ldots, c_n^{(t)})$.}
1. Initialize the centroids $\qty(m_1^{(0)},\ldots,m_K^{(0)})$ by sampling 
 $K$ observations from $x_1,\ldots, x_n$ without replacement, using the random seed $s$. \\ 
2. Compute assignments  $c_i^{(0)} \leftarrow \underset{1\leq k\leq K}{\text{argmin}} \left\Vert x_{i}-m_k^{(0)}\right\Vert_2^2,  \, i = 1,\ldots, n. $ \; \label{algo:step2}
3. Initialize $t=0$. \;
\While{$t\leq T$}{
  a. Update centroids:  $m_k^{(t+1)} \leftarrow \qty(\sum_{i:c_i^{(t)}=k} x_i)/\sum_{i=1}^n 1\qty{c_i^{(t)}=k}, \,  k = 1,\ldots, K. $\;
  b. Update assignment:  $c_i^{(t+1)} \leftarrow \underset{1\leq k\leq K}{\text{argmin}} \left\Vert x_{i}-m_k^{(t+1)}\right\Vert_2^2,  \, i = 1,\ldots, n. $\;
  c. \textbf{if} $c_i^{(t+1)} = c_i^{(t)}$ \emph{for all} $1\leq i \leq n$ \\
  \quad\quad \textbf{break} \\
  \textbf{else} \\
  \quad\quad $t \leftarrow t+1. $\;
}
\Return{$\qty(c_1^{(t)},\ldots, c_n^{(t)})$}.\;
\caption{Lloyd's algorithm for $k$-means clustering~\citep{Lloyd2006-io}}
\label{algo:k_means_alt_min}
\end{algorithm}

\subsection{A test of \eqref{eq:null_clustering_mean} for clusters obtained via $k$-means clustering}
\label{section:selective_k_means_review}
Here, we briefly review the proposal of \citet{Gao2020-yt} for selective inference for hierarchical clustering, and outline a selective test for \eqref{eq:null_clustering_mean} for $k$-means clustering.  

 \citet{Gao2020-yt} proposed a selective inference framework for testing hypotheses based on the output of a clustering algorithm. Let $\mathcal{C}(\cdot)$ denote the clustering operator, i.e., a partition of the observations resulting from a clustering algorithm. Since $H_0$ in \eqref{eq:null_clustering_mean} is chosen because $\qty{ \hat{\mathcal{C}}_1, \hat{\mathcal{C}}_2 \in \mathcal{C}({x})}$, where $\hat{\mathcal{C}}_1,\hat{\mathcal{C}}_2$ are the two estimated clusters under consideration in \eqref{eq:null_clustering_mean}, \citet{Gao2020-yt} proposed to reject $H_0$ if
\begin{align}
\pr_{H_0}\qty{ \Vert {X}^{\top}\nu \Vert_2 \geq  \Vert {x}^{\top}\nu  \Vert_2 \;\middle\vert\;  \hat{\mathcal{C}}_1, \hat{\mathcal{C}}_2 \in \mathcal{C}({X}),{\Pi}_{\nu}^\perp {X} = {\Pi}_{\nu}^\perp {x}, \text{dir}({X}^{\top}\nu) =  \text{dir}({x}^{\top}\nu) }
\label{eq:p_val_gao}
\end{align} 
is small. In \eqref{eq:p_val_gao}, conditioning on $\qty{{\Pi}_{\nu}^\perp {X} = {\Pi}_{\nu}^\perp {x}, \text{dir}({X}^{\top}\nu) =  \text{dir}({x}^{\top}\nu)}$ eliminates the nuisance parameters ${\Pi}_{\nu}^\perp {\mu}$ and $\text{dir}({\mu}^{\top}\nu)$, where ${\Pi}_{\nu}^\perp = \textbf{I}_n- \nu\nu^\top/\Vert \nu \Vert_2 $ and $\text{dir}({\mu}^{\top}\nu) = {\mu}^{\top}\nu/\Vert {\mu}^{\top}\nu \Vert_2$ (see, e.g., Section 3.1 of~\citet{Fithian2014-ow}).~\citet{Gao2020-yt} showed that the test that rejects $H_0$ when \eqref{eq:p_val_gao} is below $\alpha$ controls the selective Type I error at level $\alpha$, in the sense of \eqref{eq:selective_type_1}. Furthermore, under \eqref{eq:data_gen}, the conditional distribution of $\Vert {X}^{\top}\nu \Vert_2$ in \eqref{eq:p_val_gao} is $\qty(\sigma \Vert \nu\Vert_2) \chi_q$, truncated to a set. When the operator $\mathcal{C}(\cdot)$ denotes hierarchical clustering, this set can  be analytically characterized and efficiently computed, leading to an efficient algorithm for computing \eqref{eq:p_val_gao}.

We now extend these ideas to $k$-means clustering \eqref{eq:k_means_objective}. Since the $k$-means algorithm partitions all $n$ observations, it is natural to condition on the cluster assignments of \emph{all} observations rather than just on $\qty{ \hat{\mathcal{C}}_1, \hat{\mathcal{C}}_2 \in \mathcal{C}({X})}$. This leads to the $p$-value
{\small
\begin{equation}
\begin{aligned} 
\label{eq:p_val_k_means_ideal}
\pr_{H_0}\bigg[ \Vert {X}^{\top}\nu \Vert_2 \geq  \Vert {x}^{\top}\nu  \Vert_2 \;\bigg\vert\;   \bigcap_{i=1}^{n}\left\{c_i^{(T)}\left({X}\right) = c_i^{(T)}\left({x}\right)\right\},\,{\Pi}_{\nu}^\perp {X} = {\Pi}_{\nu}^\perp {x},\,
  \text{dir}({X}^{\top}\nu) =  \text{dir}({x}^{\top}\nu)\bigg],
\end{aligned} 
\end{equation}
}
where $c_i^{(T)}\left({X}\right)$ is the cluster assigned  to the $i$th observation at the final iteration of Algorithm~\ref{algo:k_means_alt_min}. However, computing \eqref{eq:p_val_k_means_ideal} requires characterizing $\bigcap_{i=1}^{n}\left\{c_i^{(T)}\left({X}\right) = c_i^{(T)}\left({x}\right)\right\}$, which is not straightforward, and may necessitate enumerating over possibly an exponential number of intermediate cluster assignments $c_i^{(t)}(\cdot)$ for $t=1,\ldots,T-1$. Hence, we also condition on \emph{all of the intermediate clustering assignments} in Algorithm~\ref{algo:k_means_alt_min}:
\begin{equation}
\begin{aligned} 
\label{eq:p_val_k_means_chen}
\pK =  \pr_{H_0}\bigg[ \Vert {X}^{\top}\nu \Vert_2 \geq  \Vert {x}^{\top}\nu  \Vert_2 \;\bigg\vert\;  \bigcap_{t=0}^{T} \bigcap_{i=1}^{n}\left\{c_i^{(t)}\left({X}\right) = c_i^{(t)}\left({x}\right)\right\},\,{\Pi}_{\nu}^\perp {X} = {\Pi}_{\nu}^\perp {x},\,  \\
\text{dir}({X}^{\top}\nu) =  \text{dir}({x}^{\top}\nu)\bigg].
\end{aligned} 
\end{equation}

In \eqref{eq:p_val_k_means_chen}, $c_i^{(t)}\left({X}\right)$ is the cluster assigned to the $i$th observation at the $t$th iteration of Algorithm~\ref{algo:k_means_alt_min}. Roughly speaking, this $p$-value answers the question:
\begin{quote}
Assuming that there is no difference between the population means of $\hat{\mathcal{C}}_1$ and $\hat{\mathcal{C}}_2$, what is the probability of observing such a large difference between their centroids, {among all the realizations of ${X}$ that yield identical results in every iteration of the $k$-means algorithm}?
\end{quote}
The $p$-value in \eqref{eq:p_val_k_means_chen} is the focus of this paper. We establish its key properties below. 

\begin{proposition}
\label{prop:single_param_p}
Suppose that ${x}$ is a realization from \eqref{eq:data_gen}, and let $\phi\sim (\sigma\Vert\nu\Vert_2)\chi_q$. Then, under $H_0: {\mu}^\top \nu = 0$ with $\nu$  defined in \eqref{eq:nu_def},
\begin{equation}
\label{eq:p_val_single_param}
\pK = \emph{\pr}\left[  \phi \geq \Vert{x}^\top \nu \Vert_2 \;\middle\vert\; \bigcap_{t=0}^{T}\bigcap_{i=1}^{n} \left\{c_i^{(t)}\left({x}'(\phi)\right) = c_i^{(t)}\left({x}\right)\right\}   \right],
\end{equation} 
where $\pK$ is defined in \eqref{eq:p_val_k_means_chen}, and 
 \begin{equation} {x}'(\phi) =  {x} + \qty( \phi-\Vert {x}^{\top}\nu  \Vert_2) \qty(\nu/\Vert \nu  \Vert_2^2 ) \left\{\emph{\text{dir}}({x}^{\top}\nu)\right\}^{\top}.
\label{eq:xphi}
\end{equation} 
Moreover, the test that rejects $H_0: {\mu}^\top \nu = 0$ when $\pK\leq \alpha$ controls the selective Type I error at level $\alpha$, in the sense of \eqref{eq:selective_type_1}.
\end{proposition}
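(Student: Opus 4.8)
The plan is to follow the selective-inference template of \citet{Fithian2014-ow} and \citet{Gao2020-yt}, specialized to the conditioning set of \eqref{eq:p_val_k_means_chen}. The first step is a change of variables that isolates a single scalar. Writing $X = \Pi_\nu^\perp X + (\nu/\|\nu\|_2^2)(X^\top\nu)^\top$ and then $X^\top\nu = \|X^\top\nu\|_2\,\text{dir}(X^\top\nu)$, the matrix $X$ is recovered from the triple $(\Pi_\nu^\perp X,\ \text{dir}(X^\top\nu),\ \phi)$ with $\phi := \|X^\top\nu\|_2$. On the event $\{\Pi_\nu^\perp X = \Pi_\nu^\perp x,\ \text{dir}(X^\top\nu) = \text{dir}(x^\top\nu)\}$, this reconstruction is exactly $x'(\phi)$, and a short computation using $x = \Pi_\nu^\perp x + (\nu/\|\nu\|_2^2)\|x^\top\nu\|_2\{\text{dir}(x^\top\nu)\}^\top$ shows it coincides with \eqref{eq:xphi}; moreover $\|x'(\phi)^\top\nu\|_2 = \phi$. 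Because Lloyd's algorithm with a fixed seed $s$ is a deterministic function of its input, on this same event the entire iteration history satisfies $c_i^{(t)}(X) = c_i^{(t)}(x'(\phi))$ for all $t,i$, so the clustering part of the conditioning set becomes a statement about the single number $\phi$.

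Next I would record the relevant distribution theory. Under \eqref{eq:data_gen}, $X^\top\nu \sim \mathcal{N}_q(\mu^\top\nu,\ \sigma^2\|\nu\|_2^2 \mathbf{I}_q)$; under $H_0$ the mean vanishes, so $X^\top\nu$ is isotropic Gaussian, whence $\phi \sim (\sigma\|\nu\|_2)\chi_q$ and $\phi$ is independent of $\text{dir}(X^\top\nu)$. Since $X$ has independent rows and $\Pi_\nu^\perp\nu = 0$, the images $\Pi_\nu^\perp X$ and $X^\top\nu$ are uncorrelated and jointly Gaussian, hence independent; combining, $\phi$ is independent of the nuisance pair $(\Pi_\nu^\perp X,\ \text{dir}(X^\top\nu))$. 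Substituting the change of variables into \eqref{eq:p_val_k_means_chen}, conditioning on the nuisance values turns the clustering event into $\bigcap_{t,i}\{c_i^{(t)}(x'(\phi)) = c_i^{(t)}(x)\}$, and the independence of $\phi$ from the nuisance means the conditional law of $\phi$ given only the nuisance values is still $(\sigma\|\nu\|_2)\chi_q$; what remains is precisely the conditional probability in \eqref{eq:p_val_single_param}. The point needing care is that $\{\Pi_\nu^\perp X = \Pi_\nu^\perp x\}$ and $\{\text{dir}(X^\top\nu) = \text{dir}(x^\top\nu)\}$ have probability zero, so the argument must be phrased through the joint density of $(\Pi_\nu^\perp X,\ \text{dir}(X^\top\nu),\ \phi)$ and a disintegration, as in Section 3.1 of \citet{Fithian2014-ow}; I would also note $x^\top\nu\neq 0$ almost surely, so that $\text{dir}(x^\top\nu)$ and $x'(\phi)$ are well defined off a null set.

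For selective Type I error control, the structural fact to exploit is that the conditioning set in \eqref{eq:p_val_k_means_chen} refines the selection event: it contains $\bigcap_i\{c_i^{(T)}(X) = c_i^{(T)}(x)\}$, which forces $\mathcal{C}(X) = \mathcal{C}(x)$ and in particular $\hat{\mathcal{C}}_1, \hat{\mathcal{C}}_2 \in \mathcal{C}(X)$, while $\nu$ is a fixed vector once the pair $\hat{\mathcal{C}}_1,\hat{\mathcal{C}}_2$ is fixed. I would therefore partition the selection event $\{\hat{\mathcal{C}}_1, \hat{\mathcal{C}}_2 \in \mathcal{C}(X)\}$ according to the realized iteration history and the realized nuisance value; on each piece, \eqref{eq:p_val_single_param} shows that $\pK$ equals the survival function of $\phi\sim(\sigma\|\nu\|_2)\chi_q$ truncated to the corresponding fixed set, evaluated at $\phi$ itself, and since that truncated law is continuous, $\pK$ is conditionally $\text{Uniform}(0,1)$, so $\pr_{H_0}(\pK \le \alpha \mid \text{piece}) = \alpha$. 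Averaging over pieces then gives $\pr_{H_0}(\pK \le \alpha \mid \hat{\mathcal{C}}_1, \hat{\mathcal{C}}_2 \in \mathcal{C}(X)) \le \alpha$, which is \eqref{eq:selective_type_1}. I expect the main obstacle to be the bookkeeping in this last step — verifying that the conditioning set is a genuine refinement of the selection event with $\nu$ held constant on each atom, and making the conditioning on the null events rigorous — rather than any single hard computation.
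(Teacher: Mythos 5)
Your proposal is correct and follows essentially the same route as the paper: the decomposition $X = \Pi_\nu^\perp X + (\Vert X^\top\nu\Vert_2/\Vert\nu\Vert_2^2)\,\nu\,\{\text{dir}(X^\top\nu)\}^\top$, the independence of $\Vert X^\top\nu\Vert_2$ from the nuisance pair $(\Pi_\nu^\perp X, \text{dir}(X^\top\nu))$ under $H_0$ (the paper's Lemma A1), the identification of the conditional law as a truncated scaled $\chi_q$, and the probability-integral-transform plus tower-property argument for selective Type I error control. Your additional remarks --- handling the measure-zero conditioning by disintegration and checking that the conditioning set refines the selection event with $\nu$ fixed on each atom --- are points the paper treats implicitly rather than differently.
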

 Proposition~\ref{prop:single_param_p} states that $\pK$ can be recast as the survival function of a scaled $\chi_q$ random variable, truncated to the set 
\begin{equation}
\label{eq:s_set}
  \mathcal{S}_T = \left\{ \phi\in \mathbb{R}: \bigcap_{t=0}^{T}\bigcap_{i=1}^{n} \left\{c_i^{(t)}\left({x}'(\phi)\right) = c_i^{(t)}\left({x}\right)\right\}   \right\},
\end{equation}
where ${x}'(\phi)$ is defined in \eqref{eq:xphi}. Therefore, to compute $\pK$, it suffices to characterize the set $ \mathcal{S}_T$. 
\noindent
\begin{figure}[htbp!]
\begin{subfigure}{\linewidth}
  \centering
  \includegraphics[width=.3\linewidth]{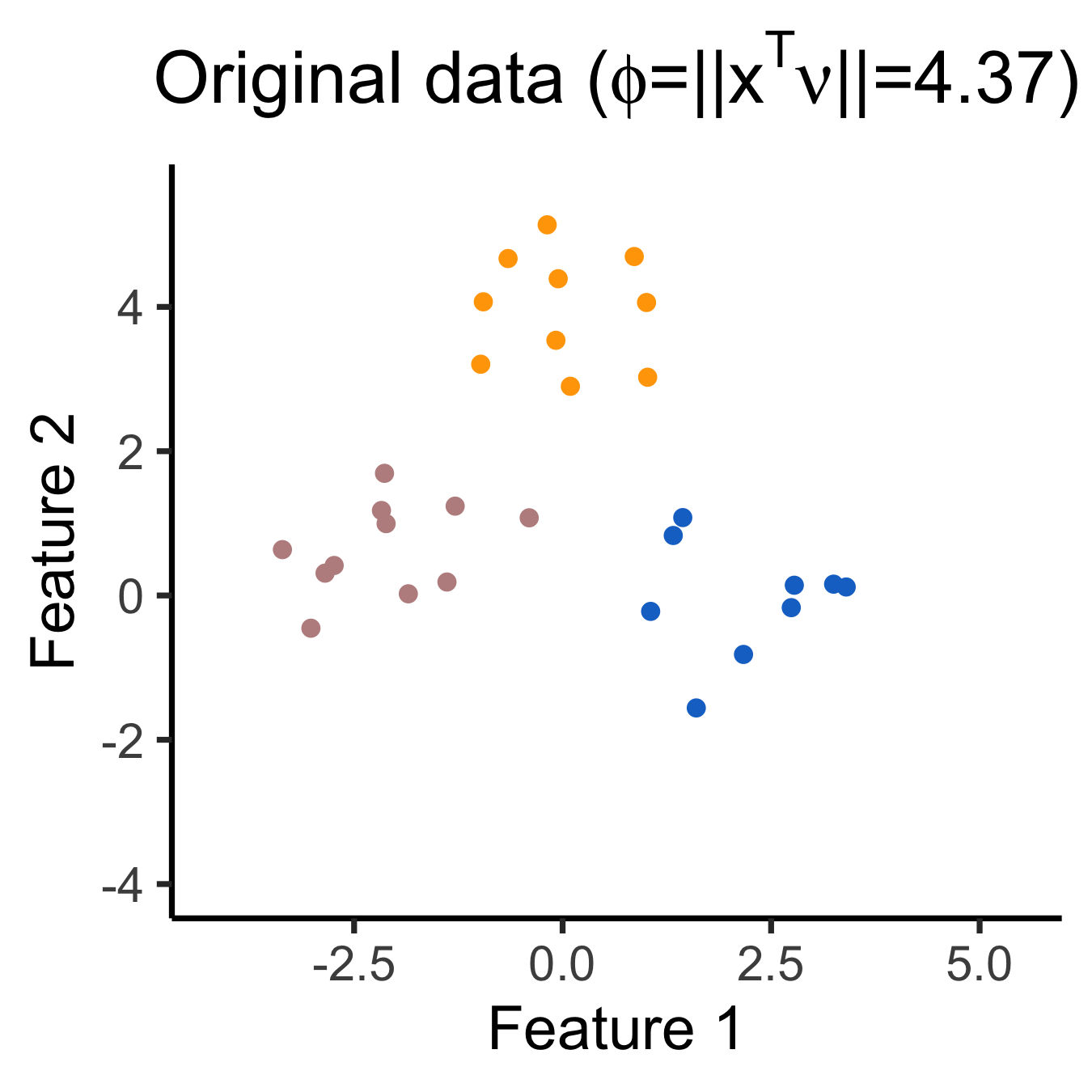}
  \includegraphics[width=.3\linewidth]{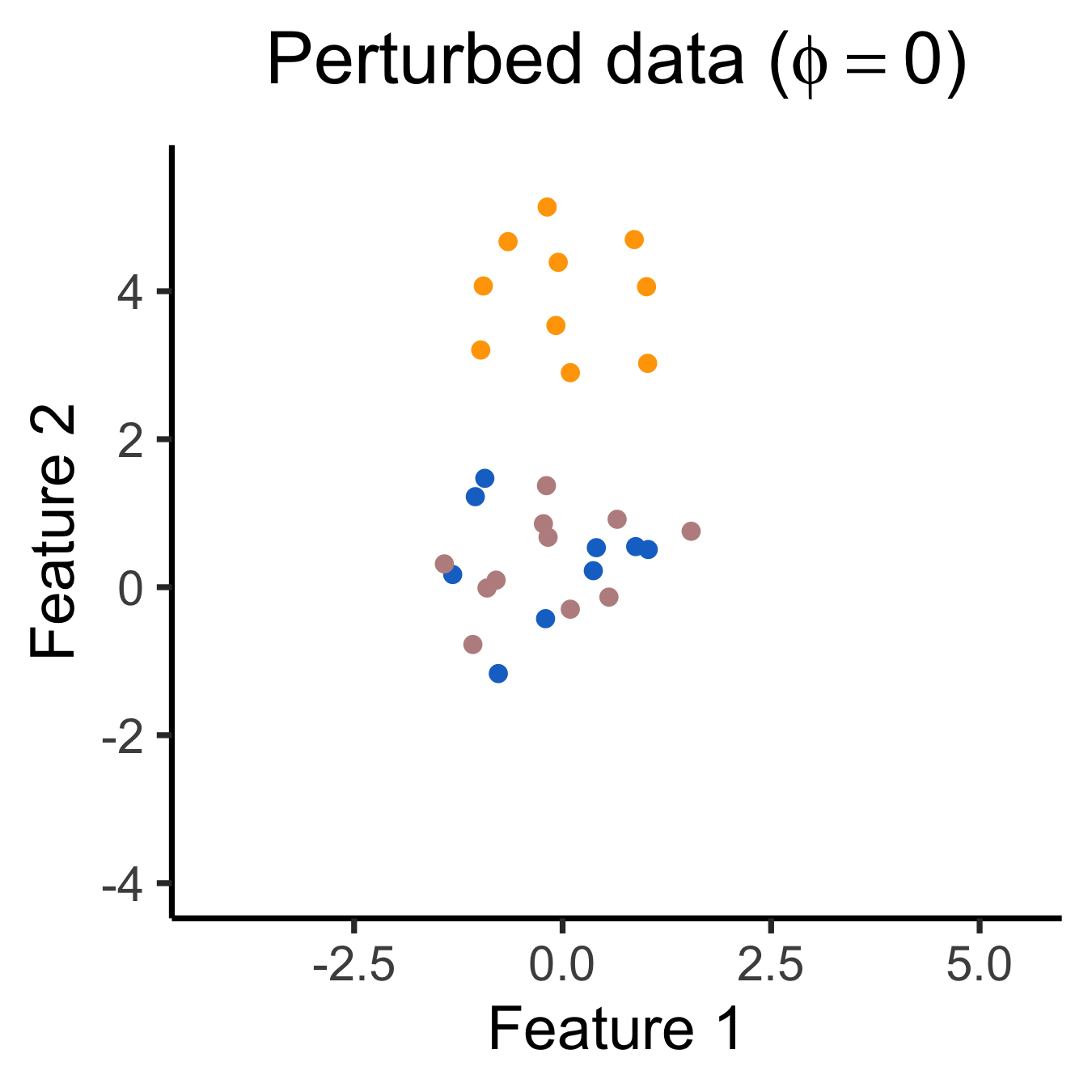}
  \includegraphics[width=.3\linewidth]{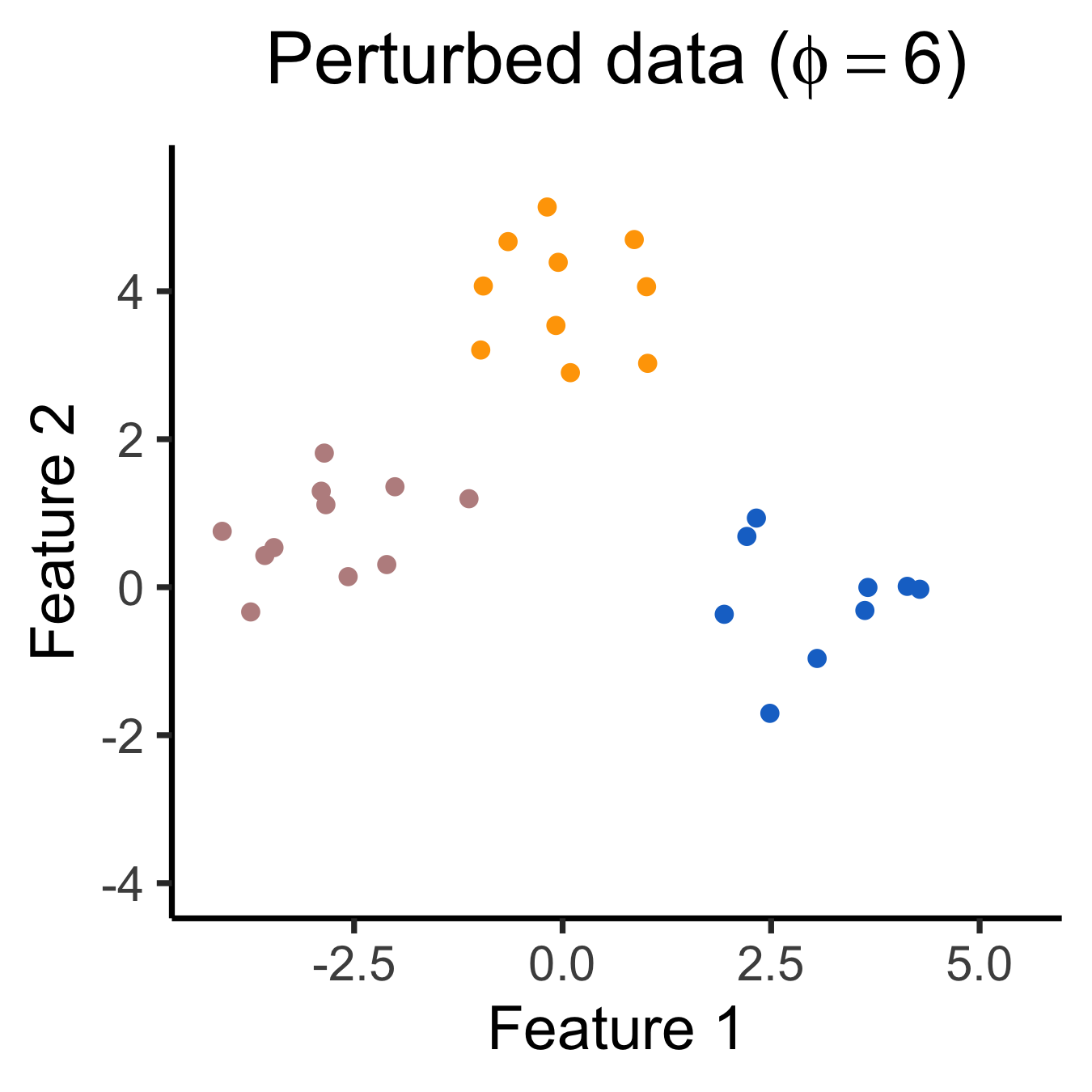}
\end{subfigure}
\vspace*{-4mm}
\caption{One simulated dataset generated from model \eqref{eq:data_gen} with ${\mu}_{i} = 1\qty{1\leq i\leq 10}[\begin{smallmatrix}2.5 \\ 0 \end{smallmatrix}] + 1\qty{11 \leq i \leq 20} [\begin{smallmatrix}0 \\ -2.5\end{smallmatrix}] + 1\qty{21\leq i\leq 30} [\begin{smallmatrix}\surd{18.75} \\ 0 \end{smallmatrix}]$ and $\sigma = 1$. \textit{Left: } The original data ${x}$ corresponds to $\phi = \Vert {x}^\top \nu \Vert_2 = 4.37$. Applying $k$-means clustering with $K=3$ yields three clusters, displayed in pink, blue, and orange. Here, $\nu$ is chosen to test for a difference in means between $\hat{\mathcal{C}}_1$ (pink) and $\hat{\mathcal{C}}_2$ (blue). \textit{Center: } The perturbed data ${x}'(\phi)$ with $\phi = 0$. Applying $k$-means clustering with $K=3$ does not yield the same set of clusters as in the left panel. \textit{Right: } The perturbed data ${x}'(\phi)$ with $\phi = 6$. Applying $k$-means clustering with $K=3$ yields the same set of clusters as in the left panel.}
\label{fig:x_phi}
\end{figure}
In \eqref{eq:xphi}, ${x}'(\phi)$ results from applying a perturbation to the observed data ${x}$, along the direction of ${x}^\top \nu$, the difference between the two cluster centroids of interest. Figure~\ref{fig:x_phi} illustrates a realization of \eqref{eq:data_gen} for $k$-means clustering with $K=3$. The left panel displays the observed data ${x}$, which corresponds to ${x}'(\phi)$ with $\phi=\Vert {x}^\top \nu \Vert_2 =4.37$. Here, $\nu$ defined in \eqref{eq:nu_def} was chosen to test the difference between $\hat{\mathcal{C}}_1$ (shown in pink) and $\hat{\mathcal{C}}_2$ (shown in blue). The center and right panels of Figure~\ref{fig:x_phi} display ${x}'(\phi)$ with $\phi = 0$ and $\phi = 6$, respectively. In the center panel, with $\phi=0$, the blue and pink clusters are ``pushed together'', resulting in $\Vert {x}'(\phi)^\top \nu \Vert_2 = 0$; that is, there is no difference in empirical means between the two clusters under consideration. Applying $k$-means clustering no longer results in these clusters. By contrast, in the right panel, with $\phi=6$, the blue and pink clusters are ``pulled apart'' along the direction of ${x}^\top \nu$, which results in an increased distance between the centroids of the blue and pink clusters, and $k$-means clustering does yield the same clusters as on the original data. In this example, $\mathcal{S}_T=(3.59,\infty)$.

\section{Computation of the selective p-value}
\label{section:method}

In Section~\ref{section:hypothesis}, we have shown that the $p$-value $\pK$ \eqref{eq:p_val_k_means_chen} involves the set $\mathcal{S}_T$ in \eqref{eq:s_set}.
Here, we start with a high-level summary of our approach to characterizing $\mathcal{S}_T$ in \eqref{eq:s_set}. First, we rewrite $\mathcal{S}_T$ as $\left\{ \phi\in\mathbb{R}:\bigcap_{i=1}^{n} \left\{c_i^{(0)}\left({x}'(\phi)\right) = c_i^{(0)}\left({x}\right)\right\}   \right\} \cap \left\{ \phi\in\mathbb{R}: \bigcap_{t=1}^{T}\bigcap_{i=1}^{n} \left\{c_i^{(t)}\left({x}'(\phi)\right) = c_i^{(t)}\left({x}\right)\right\} \right\}$. Next, we consider the first term in the intersection: according to step 2. of Algorithm~\ref{algo:k_means_alt_min}, for $i=1,\ldots,n$, $c_i^{(0)}\left({x}'(\phi)\right) = c_i^{(0)}\left({x}\right)$ if and only if for $i=1,\ldots, n$, the initial randomly-sampled centroid to which $\qty[{x}'(\phi)]_i$ is closest coincides with the initial centroid to which ${x}_i$ is closest. This condition can be expressed using $K-1$ inequalities. Furthermore, the same intuition holds for the second term in the intersection, except that the centroids are a function of the cluster assignments in the previous iteration. We formalize this intuition in Proposition~\ref{prop:phi_ineq}.

\begin{proposition}
\label{prop:phi_ineq}
Suppose that we apply the $k$-means clustering algorithm (Algorithm~\ref{algo:k_means_alt_min}) to a matrix ${x}\in\mathbb{R}^{n\times q}$, to obtain $K$ clusters in at most $T$ steps. Define
 \begin{align}
 \label{eq:w_i_t_k}
 w_{i}^{(t)}(k) = 1{\left\{c_{i}^{(t)}({x}) = k\right\}} /\sum_{i'=1}^n 1\qty{c_{i'}^{(t)}({x})= k}.
 \end{align}
Then, for the set $\mathcal{S}_T$ defined in \eqref{eq:s_set}, 
we have that
{\footnotesize
\begin{align}
&\mathcal{S}_T = \qty( \bigcap_{i=1}^{n}\bigcap_{k=1}^{K} \qty{\phi:  \left\Vert \qty[{x}'(\phi)]_{i} - m^{(0)}_{c_i^{(0)}({x})}\qty({x}'(\phi)) \right\Vert_2^2 \leq  \left\Vert \qty[{x}'(\phi)]_{i} - m^{(0)}_{k}\qty({x}'(\phi))  \right\Vert_2^2}) \cap \label{eq:S_T_quad_ineq_part_1} \\
&\hspace{-8mm}\qty(\bigcap_{t=1}^{T}\bigcap_{i=1}^{n}\bigcap_{k=1}^{K}\qty{ \phi: \left\Vert \qty[{x}'(\phi)]_i -  \sum_{i'=1}^n w_{i'}^{(t-1)}\qty(c_{i}^{(t)}({x}))\qty[{x}'(\phi)]_{i'}   \right\Vert_2^2
 \leq \left\Vert \qty[{x}'(\phi)]_i -  \sum_{i'=1}^n w_{i'}^{(t-1)}\qty(k)\qty[{x}'(\phi)]_{i'} \right\Vert_2^2  }) 
 \label{eq:S_T_quad_ineq_part_2}.
\end{align} }
\end{proposition}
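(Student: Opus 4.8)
The plan is to peel the iterations of Algorithm~\ref{algo:k_means_alt_min} apart one round at a time: on the event that the first $t$ rounds of cluster assignments for ${x}'(\phi)$ coincide with those for ${x}$, the centroids produced in round $t$ become a \emph{fixed} affine function of ${x}'(\phi)$ (hence quadratic in $\phi$ after substituting \eqref{eq:xphi}), so the constraint $c_i^{(t)}({x}'(\phi)) = c_i^{(t)}({x})$ collapses to the ``nearest-centroid'' comparisons displayed in the proposition.

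First I would split $\mathcal{S}_T$ into the $t=0$ event and the $t\in\{1,\dots,T\}$ event, as in the paragraph preceding the proposition. For $t=0$: step~1 of Algorithm~\ref{algo:k_means_alt_min} samples the indices $j_1,\dots,j_K$ using only the random seed $s$, so $m_k^{(0)}({x}'(\phi)) = [{x}'(\phi)]_{j_k}$ and these indices do not depend on $\phi$. By step~2, $c_i^{(0)}({x}'(\phi)) = \mathrm{argmin}_{1\leq k\leq K}\lVert [{x}'(\phi)]_i - m_k^{(0)}({x}'(\phi))\rVert_2^2$, so $c_i^{(0)}({x}'(\phi)) = c_i^{(0)}({x})$ holds exactly when the minimum is attained at $k = c_i^{(0)}({x})$, that is, when $\lVert [{x}'(\phi)]_i - m^{(0)}_{c_i^{(0)}({x})}({x}'(\phi))\rVert_2^2 \leq \lVert [{x}'(\phi)]_i - m_k^{(0)}({x}'(\phi))\rVert_2^2$ for every $k$; intersecting over $i$ and $k$ yields \eqref{eq:S_T_quad_ineq_part_1}.

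For $t\geq 1$ I would induct on $t$. Write $E_{t-1} := \bigcap_{s=0}^{t-1}\bigcap_{i=1}^{n}\{c_i^{(s)}({x}'(\phi)) = c_i^{(s)}({x})\}$. On $E_{t-1}$ the index set $\{i' : c_{i'}^{(t-1)}({x}'(\phi)) = k\}$ equals $\{i' : c_{i'}^{(t-1)}({x}) = k\}$, which is exactly the support of $w_{\cdot}^{(t-1)}(k)$ from \eqref{eq:w_i_t_k}; hence step~3a gives $m_k^{(t)}({x}'(\phi)) = \sum_{i'=1}^{n} w_{i'}^{(t-1)}(k)\,[{x}'(\phi)]_{i'}$, the claimed fixed affine function of ${x}'(\phi)$. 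Step~3b then reads $c_i^{(t)}({x}'(\phi)) = \mathrm{argmin}_{1\leq k\leq K}\lVert [{x}'(\phi)]_i - \sum_{i'} w_{i'}^{(t-1)}(k)[{x}'(\phi)]_{i'}\rVert_2^2$ on $E_{t-1}$, so, exactly as when $t=0$, the event $\{c_i^{(t)}({x}'(\phi)) = c_i^{(t)}({x})\}\cap E_{t-1}$ equals $E_{t-1}$ intersected with the $K$ quadratic inequalities appearing in \eqref{eq:S_T_quad_ineq_part_2} for that pair $(i,t)$. Substituting these identities iteratively into the nested intersection \eqref{eq:s_set} defining $\mathcal{S}_T$ for $t=1,\dots,T$, and combining with the $t=0$ part, gives the stated formula.

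The main obstacle is just the bookkeeping of this induction: the centroids that determine the round-$t$ assignment depend on the round-$(t-1)$ assignment, so the quadratic constraint for round $t$ is only the right one \emph{after} conditioning on $E_{t-1}$; one must manipulate $\mathcal{S}_T$ as a nested intersection and substitute iteratively rather than treat the rounds independently. Two minor points round out the argument. If the assignments on ${x}$ stabilize at some $T'\leq T$, then $c_i^{(t)}({x}) = c_i^{(T')}({x})$ for $t\geq T'$ and, on $E_{T'}$, the same holds for ${x}'(\phi)$, so rounds $t>T'$ add nothing and the characterization is unaffected. And ties in the $\mathrm{argmin}$ occur only on a measure-zero set of $\phi$ and are handled by the convention used inside the algorithm, consistent with the non-strict inequalities in \eqref{eq:S_T_quad_ineq_part_1}--\eqref{eq:S_T_quad_ineq_part_2}.
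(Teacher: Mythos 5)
Your proposal is correct and follows essentially the same route as the paper's proof in Appendix~\ref{appendix:proof_S_T_induction}: an induction over iterations in which, on the event that all earlier assignments agree, the round-$t$ centroids of ${x}'(\phi)$ become the fixed weighted averages $\sum_{i'} w_{i'}^{(t-1)}(k)[{x}'(\phi)]_{i'}$, so each assignment constraint reduces to $K$ nearest-centroid inequalities. The paper formalizes exactly this via base cases at $t=0$ and $t=1$ and an inductive step (Lemmas~\ref{lemma:base_t_0}, \ref{lemma:base_t_1}, \ref{lemma:inductive}), proving each set equality by two inclusions, which is the careful bookkeeping you correctly identify as the main obstacle.
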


Recall that $c_i^{(t)}({x})$ denotes the cluster to which the $i$th observation is assigned in step 3b. of Algorithm~\ref{algo:k_means_alt_min} during the $t$th iteration, and that $m^{(0)}_{k}\qty({x})$ denotes the $k$th centroid sampled from the data ${x}$ during step 1 of Algorithm~\ref{algo:k_means_alt_min}. In words, Proposition~\ref{prop:phi_ineq} says that $\mathcal{S}_T$ can be expressed as the intersection of $\mathcal{O}(nKT)$ sets. Therefore, it suffices to characterize the sets in \eqref{eq:S_T_quad_ineq_part_1} and \eqref{eq:S_T_quad_ineq_part_2}.
We now present two lemmas.

\begin{lemma}[Lemma 2 in \citet{Gao2020-yt}]
\label{lemma:simple_norm}
For $\nu$ in \eqref{eq:nu_def} and ${x}'(\phi)$ in \eqref{eq:xphi}, we have that 
{\small $\left\Vert\qty[{x}'(\phi)]_i- \qty[{x}'(\phi)]_j\right\Vert_2^2 = a \phi^2 + b \phi + \gamma$}, where {\small $a = \qty{(\nu_i-\nu_j)/\Vert\nu\Vert_2^2 }^2 $}, {\small $b=2\big[ (\nu_i-\nu_j)/\Vert\nu\Vert_2^2 \left\langle  {x}_i-{x}_j,  \emph{\text{dir}}({x}^{\top}\nu) \right\rangle-\qty{(\nu_i-\nu_j)/\Vert\nu\Vert_2^2 }^2 \Vert{x}^{\top}\nu\Vert_2 \big]$}, and {\small $\gamma=\left\Vert {x}_i  -  {x}_j -(\nu_i-\nu_j) ({x}^{\top}\nu)/||\nu||_2^2 \right\Vert_2^2$}. 
\end{lemma}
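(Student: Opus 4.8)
The plan is to establish the identity by a direct expansion starting from the definition of ${x}'(\phi)$ in \eqref{eq:xphi}. Abbreviating $d = \text{dir}({x}^\top\nu)$, $r = \Vert {x}^\top\nu\Vert_2$, and $\delta_{ij} = (\nu_i-\nu_j)/\Vert\nu\Vert_2^2$, equation \eqref{eq:xphi} gives $[{x}'(\phi)]_i = {x}_i + (\phi - r)\,(\nu_i/\Vert\nu\Vert_2^2)\, d$, and subtracting the analogous expression for row $j$ yields
\[
[{x}'(\phi)]_i - [{x}'(\phi)]_j = ({x}_i - {x}_j) + (\phi - r)\,\delta_{ij}\, d .
\]
First I would expand the squared Euclidean norm of the right-hand side. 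Since ${X}^\top\nu = {0}_q$ has probability zero under \eqref{eq:data_gen}, we may treat $d = \text{dir}({x}^\top\nu)$ as a genuine unit vector, so $\Vert d\Vert_2 = 1$ and the cross term is linear in $\phi$; this gives
\[
\Vert [{x}'(\phi)]_i - [{x}'(\phi)]_j \Vert_2^2 = \Vert {x}_i-{x}_j\Vert_2^2 + 2(\phi - r)\,\delta_{ij}\,\langle {x}_i-{x}_j, d\rangle + (\phi - r)^2\,\delta_{ij}^2 .
\]

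Next I would collect powers of $\phi$. The coefficient of $\phi^2$ is $\delta_{ij}^2 = \{(\nu_i-\nu_j)/\Vert\nu\Vert_2^2\}^2$, which is $a$. The coefficient of $\phi$ is $2\delta_{ij}\langle {x}_i-{x}_j, d\rangle - 2r\delta_{ij}^2$, and substituting $r = \Vert{x}^\top\nu\Vert_2$ and $d = \text{dir}({x}^\top\nu)$ shows this equals $b$. The remaining constant term is $\Vert {x}_i-{x}_j\Vert_2^2 - 2r\delta_{ij}\langle {x}_i-{x}_j,d\rangle + r^2\delta_{ij}^2$, which I would recognize as $\Vert {x}_i - {x}_j - \delta_{ij}\, r\, d\Vert_2^2$. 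To finish, I would use the identity $r\,d = \Vert {x}^\top\nu\Vert_2\,\text{dir}({x}^\top\nu) = {x}^\top\nu$ (valid because $d$ is a unit vector) to rewrite the constant term as $\Vert {x}_i - {x}_j - (\nu_i-\nu_j)({x}^\top\nu)/\Vert\nu\Vert_2^2\Vert_2^2 = \gamma$, which completes the proof.

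There is essentially no real obstacle here; the result is routine algebra, and indeed it is Lemma~2 of \citet{Gao2020-yt}. The only point worth flagging is the degenerate case ${x}^\top\nu = {0}_q$, where $\text{dir}({x}^\top\nu) = {0}_q$ by the convention fixed in the notation section; in that case ${x}'(\phi) = {x}$ for every $\phi$ and the claimed quadratic need not hold unless $\nu_i = \nu_j$. This case is irrelevant, however: it occurs with probability zero under \eqref{eq:data_gen}, and in any event in \eqref{eq:p_val_single_param}--\eqref{eq:s_set} the variable $\phi$ ranges over the support of a $(\sigma\Vert\nu\Vert_2)\chi_q$ random variable with $\sigma\Vert\nu\Vert_2 > 0$, along which $\Vert {x}^\top\nu\Vert_2 > 0$. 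I would state this once and otherwise treat $\text{dir}({x}^\top\nu)$ as a unit vector throughout.
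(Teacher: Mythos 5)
Your proof is correct and follows essentially the same route as the paper's: both write $[{x}'(\phi)]_i-[{x}'(\phi)]_j$ as a constant vector plus a term linear in $\phi$ along the unit vector $\text{dir}({x}^\top\nu)$ and expand the squared norm, matching coefficients of $\phi^2$, $\phi$, and the constant. Your explicit handling of the degenerate case ${x}^\top\nu={0}_q$ is a minor addition the paper omits, but the core argument is the same.
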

\begin{lemma}
\label{lemma:canonical_norm}
For $\nu$ in \eqref{eq:nu_def}, ${x}'(\phi)$ in \eqref{eq:xphi}, and $ w_{i}^{(t)}(k)$ in \eqref{eq:w_i_t_k},  we have that
{\small $\left\Vert \qty[{x}'(\phi)]_i - \sum_{i'=1}^n w_{i'}^{(t-1)}(k)\qty[{x}'(\phi)]_{i'} \right\Vert_2^2 = \tilde{a} \phi^2 + \tilde{b}\phi +\tilde{\gamma}$}, where 
 {\small$\tilde{a} = \qty(\nu_i - \sum_{i'=1}^n w_{i'}^{(t-1)}(k) \nu_{i'} )^2/\Vert\nu\Vert_2^4$}, 
{\small $ \tilde{b} =  \qty(2/\Vert\nu\Vert_2^2)\bigg\{  \qty(\nu_i - \sum_{i'=1}^n  w_{i'}^{(t-1)}(k) \nu_{i'} ) \left\langle {x}_i - \sum_{i'=1}^n  w_{i'}^{(t-1)}(k) {x}_{i'},\emph{\text{dir}}({x}^{\top}\nu) \right\rangle - \qty(\nu_i - \sum_{i'=1}^n w_{i'}^{(t-1)}(k) \nu_{i'} )^2 $\\
$(\Vert{x}^{\top}\nu \Vert_2)/\Vert\nu\Vert_2^4\bigg\}$} 
and {\small $\tilde{\gamma} =  \left\Vert {x}_i - \sum_{i'=1}^n w_{i'}^{(t-1)}(k)  {x}_{i'} - \qty(\nu_i - \sum_{i'=1}^n w_{i'}^{(t-1)}(k) \nu_{i'} )({x}^\top \nu)/\Vert\nu\Vert_2^2 \right\Vert_2^2$}.
\end{lemma}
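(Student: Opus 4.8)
The plan is to prove this by a direct expansion, exactly paralleling the argument behind Lemma~\ref{lemma:simple_norm}. The key structural observation is that $\phi\mapsto {x}'(\phi)$ is affine in $\phi$: from \eqref{eq:xphi}, $[{x}'(\phi)]_i = {x}_i + (\phi - \Vert{x}^\top\nu\Vert_2)(\nu_i/\Vert\nu\Vert_2^2)\,\text{dir}({x}^\top\nu)$ for each $i$, and — crucially — the weights $w_{i'}^{(t-1)}(k)$ in \eqref{eq:w_i_t_k} are determined by the clustering of the \emph{observed} data ${x}$, so they do not depend on $\phi$. Consequently
\[
[{x}'(\phi)]_i - \sum_{i'=1}^n w_{i'}^{(t-1)}(k)\,[{x}'(\phi)]_{i'} \;=\; u \;+\; (\phi - \Vert{x}^\top\nu\Vert_2)\,\frac{\delta}{\Vert\nu\Vert_2^2}\,\text{dir}({x}^\top\nu),
\]
where $u := {x}_i - \sum_{i'} w_{i'}^{(t-1)}(k)\,{x}_{i'} \in\mathbb{R}^q$ and $\delta := \nu_i - \sum_{i'} w_{i'}^{(t-1)}(k)\,\nu_{i'}\in\mathbb{R}$ are both free of $\phi$.

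Next I would square the $\ell_2$ norm of the right-hand side. Since $\text{dir}({x}^\top\nu)$ is a unit vector (assuming ${x}^\top\nu\neq 0_q$, which holds almost surely under \eqref{eq:data_gen}), expanding
\[
\Vert u\Vert_2^2 + 2(\phi - \Vert{x}^\top\nu\Vert_2)\frac{\delta}{\Vert\nu\Vert_2^2}\langle u,\text{dir}({x}^\top\nu)\rangle + (\phi - \Vert{x}^\top\nu\Vert_2)^2\frac{\delta^2}{\Vert\nu\Vert_2^4}
\]
in powers of $\phi$ immediately gives $\tilde a = \delta^2/\Vert\nu\Vert_2^4$ as the coefficient of $\phi^2$ and the stated expression as the coefficient of $\phi$. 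For the constant term, using $\langle u,{x}^\top\nu\rangle = \Vert{x}^\top\nu\Vert_2\langle u,\text{dir}({x}^\top\nu)\rangle$ lets one repackage the $\phi$-free part as $\Vert u - \delta\,({x}^\top\nu)/\Vert\nu\Vert_2^2\Vert_2^2 = \tilde\gamma$, matching the claim.

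That is essentially the whole proof, and there is no genuine obstacle. The only thing that demands care is the bookkeeping of what is constant in $\phi$: recognizing that $u$, $\delta$, $\text{dir}({x}^\top\nu)$ and $\Vert{x}^\top\nu\Vert_2$ are all fixed while only a single scalar multiple of the \emph{unit} vector $\text{dir}({x}^\top\nu)$ carries the $\phi$-dependence, so that the squared norm is manifestly a quadratic in $\phi$ whose coefficients can simply be read off. One could alternatively deduce Lemma~\ref{lemma:canonical_norm} from Lemma~\ref{lemma:simple_norm} by writing the weighted average $\sum_{i'} w_{i'}^{(t-1)}(k)[{x}'(\phi)]_{i'}$ in terms of pairwise differences and invoking bilinearity of $\langle\cdot,\cdot\rangle$, but the direct computation above is no longer than that route.
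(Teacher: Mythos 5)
Your proof is correct and follows essentially the same route as the paper: the paper's own argument simply observes that $\qty[{x}'(\phi)]_i - \sum_{i'=1}^n w_{i'}^{(t-1)}(k)\qty[{x}'(\phi)]_{i'}$ is affine in $\phi$ (with the weights fixed by the observed data) and then repeats the squared-norm expansion from the proof of Lemma~\ref{lemma:simple_norm}, which is precisely your computation with $u$ and $\delta$. No gaps.
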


 It follows from Lemmas~\ref{lemma:simple_norm} and \ref{lemma:canonical_norm} that all of the inequalities in \eqref{eq:S_T_quad_ineq_part_1} and \eqref{eq:S_T_quad_ineq_part_2} are in fact \emph{quadratic} in $\phi$, with coefficients that can be analytically computed. Therefore, computing the set $\mathcal{S}_T$ requires solving $\mathcal{O}(nKT)$ quadratic inequalities of $\phi$.

\begin{proposition}
\label{prop:time_complexity}
Suppose that we apply the $k$-means clustering algorithm (Algorithm~\ref{algo:k_means_alt_min}) to a matrix ${x}\in\mathbb{R}^{n\times q}$, to obtain $K$ clusters in at most $T$ steps. Then, the set $\mathcal{S}_T$ defined in \eqref{eq:s_set} can be computed in $\mathcal{O}\qty(nKT\qty(n+q)+nKT\log\qty(nKT))$ operations.
\end{proposition}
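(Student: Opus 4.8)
The plan is to turn Proposition~\ref{prop:phi_ineq} into an algorithm: it already expresses $\mathcal{S}_T$ as an intersection of $\mathcal{O}(nKT)$ sets, and after invoking Lemmas~\ref{lemma:simple_norm} and~\ref{lemma:canonical_norm} each of these sets becomes the solution set of a single quadratic inequality in the scalar $\phi$. The cost then splits into three stages: (i) running $k$-means on $x$ and forming the quadratic coefficients of all $\mathcal{O}(nKT)$ inequalities; (ii) solving each quadratic inequality for its subset of $\mathbb{R}$; and (iii) intersecting the $\mathcal{O}(nKT)$ resulting subsets. I would bound the three stages separately and add them.

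For stage (i), I would first run Algorithm~\ref{algo:k_means_alt_min} on $x$, caching, for every $t=0,\ldots,T$ and $k=1,\ldots,K$, the assignments $c_i^{(t)}(x)$, the cluster sizes $\sum_{i'} 1\{c_{i'}^{(t)}(x)=k\}$, and the centroids $m_k^{(t)}(x)=\sum_{i'} w_{i'}^{(t-1)}(k)x_{i'}$; this takes $\mathcal{O}(nKTq)$ operations. I would also precompute $x^\top\nu$, $\Vert x^\top\nu\Vert_2$, $\text{dir}(x^\top\nu)$, and $\Vert\nu\Vert_2$ in $\mathcal{O}(nq)$ operations. By Proposition~\ref{prop:phi_ineq}, each of the $\mathcal{O}(nKT)$ sets defining $\mathcal{S}_T$ has the form $\{\phi : \Vert [x'(\phi)]_i - m\Vert_2^2 \le \Vert [x'(\phi)]_i - m'\Vert_2^2\}$, where each of $m,m'$ is either a single row of $x'(\phi)$ (the $t=0$ term~\eqref{eq:S_T_quad_ineq_part_1}, a degenerate weighted mean) or a $w^{(t-1)}$-weighted average of the rows of $x'(\phi)$ (the $t\geq 1$ terms~\eqref{eq:S_T_quad_ineq_part_2}). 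Applying Lemma~\ref{lemma:simple_norm} in the first case and Lemma~\ref{lemma:canonical_norm} in the second to each side, every such set becomes $\{\phi : A\phi^2 + B\phi + C \le 0\}$ with coefficients given explicitly in terms of the cached quantities. Given the cached centroids, assembling $(A,B,C)$ for one inequality costs $\mathcal{O}(n+q)$: the $q$-dimensional norms and the inner products against $\text{dir}(x^\top\nu)$ appearing in Lemmas~\ref{lemma:simple_norm} and~\ref{lemma:canonical_norm} take $\mathcal{O}(q)$, and forming the scalar $\nu$-weighted cluster mean $\sum_{i'} w_{i'}^{(t-1)}(k)\nu_{i'}$ takes $\mathcal{O}(n)$. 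Over all $\mathcal{O}(nKT)$ inequalities, stage (i) is $\mathcal{O}(nKT(n+q))$.

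For stage (ii), each quadratic inequality $A\phi^2 + B\phi + C \le 0$ is solved in $\mathcal{O}(1)$ time via the quadratic formula: depending on the sign of $A$ and of the discriminant, its solution set is $\mathbb{R}$, $\emptyset$, a single closed interval, or the complement of an open interval --- in every case a union of at most two intervals with explicit endpoints --- so stage (ii) is $\mathcal{O}(nKT)$. For stage (iii), I would collect the $\mathcal{O}(nKT)$ interval endpoints, sort them in $\mathcal{O}(nKT\log(nKT))$ time, and run a single left-to-right sweep that tracks how many of the constraints are currently satisfied, emitting the maximal sub-intervals on which all of them hold; the sweep is $\mathcal{O}(nKT)$ and returns $\mathcal{S}_T$ as a union of $\mathcal{O}(nKT)$ intervals. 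Adding the three stages gives the claimed $\mathcal{O}(nKT(n+q) + nKT\log(nKT))$.

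The main obstacle is the bookkeeping in stage (i): one must ensure that the weighted centroids $m_k^{(t)}(x)$ are computed once, as a byproduct of running $k$-means, and then reused, rather than recomputed inside each of the $\mathcal{O}(nKT)$ comparisons (which would inflate the bound), and one must check that Lemmas~\ref{lemma:simple_norm} and~\ref{lemma:canonical_norm} genuinely cover both the $t=0$ term~\eqref{eq:S_T_quad_ineq_part_1}, where the relevant ``centroid'' is a single sampled row of $x'(\phi)$, and the $t\geq 1$ terms~\eqref{eq:S_T_quad_ineq_part_2}. Stages (ii) and (iii) are routine interval arithmetic.
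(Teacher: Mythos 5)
Your proposal is correct and follows essentially the same route as the paper: decompose $\mathcal{S}_T$ via Proposition~\ref{prop:phi_ineq} into $\mathcal{O}(nKT)$ quadratic inequalities whose coefficients come from Lemmas~\ref{lemma:simple_norm} and~\ref{lemma:canonical_norm} at $\mathcal{O}(n+q)$ each, solve each in $\mathcal{O}(1)$, and intersect in $\mathcal{O}(nKT\log(nKT))$. The only difference is cosmetic — you spell out the centroid caching and a sort-and-sweep intersection where the paper simply cites an interval-intersection routine.
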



\section{Extensions}
\label{section:extension}

\subsection{Non-spherical covariance matrix}
\label{section:full_cov}

Thus far, we have assumed that the observed data ${x}$ is a realization of \eqref{eq:data_gen}, which implies that $\text{cov}({X}_i) = \sigma^2 \textbf{I}_q$. However, this assumption is often violated in practice. For example, expression levels of genes are highly correlated, and neighbouring pixels in an image tend to be more similar. For a known positive definite matrix ${\Sigma}$, we now let
\begin{align}
\label{eq:data_full_cov_gen}
    {X} \sim \mathcal{MN}_{n\times q}\left({\mu}, \mathbf{I}_n, {\Sigma} \right).
\end{align} 
Under \eqref{eq:data_full_cov_gen}, we can whiten the data by applying the transformation ${x}_i \to {\Sigma}^{-\frac{1}{2}}{x}_i$~\citep{Bell1997-op}, where ${\Sigma}^{-\frac{1}{2}}$ is the unique symmetric  positive definite square root of ${\Sigma}^{-1}$~\citep{Horn2012-vd}. Note that ${\Sigma}^{-\frac{1}{2}}{X}_i \sim \mathcal{N}({\Sigma}^{-\frac{1}{2}}{\mu}_i, \textbf{I}_q)$. Moreover, as ${\Sigma}^{-\frac{1}{2}}\succ 0$, testing the null hypothesis in \eqref{eq:null_clustering_mean} is equivalent to testing 
{ 
\begin{align}
\hspace{-5mm}
 H_0: \sum_{i\in {\hat{\mathcal{C}}}_1}{\Sigma}^{-\frac{1}{2}}{\mu}_i/|\hat{\mathcal{C}}_1| = \sum_{i\in \hat{\mathcal{C}}_2}{\Sigma}^{-\frac{1}{2}}{\mu}_i/|\hat{\mathcal{C}}_2|   \mbox{ versus }  H_1: \sum_{i\in \hat{\mathcal{C}}_1}{\Sigma}^{-\frac{1}{2}}{\mu}_i/|\hat{\mathcal{C}}_1| \neq \sum_{i\in \hat{\mathcal{C}}_2}{\Sigma}^{-\frac{1}{2}}{\mu}_i/|\hat{\mathcal{C}}_2|.
  \label{eq:null_mean_zca}
\end{align}
}
Therefore, to get a correctly-sized test under model~\eqref{eq:data_full_cov_gen}, we can simply carry out our proposal in Section~\ref{section:hypothesis} on the transformed data ${\Sigma}^{-\frac{1}{2}}{x}_i$ instead of the original data ${x}_i$. 

Instead of applying the whitening transformation, we can directly accommodate a known covariance matrix ${\Sigma}$ by considering the following extension of $\pK$ in \eqref{eq:p_val_k_means_chen}:
{
\begin{align} 
\label{eq:full_cov_p_val_k_means_chen}
\begin{split}
\pKSigma &= \pr_{H_0}\Big[ \Vert {\Sigma}^{-\frac{1}{2}}{X}^{\top}\nu \Vert_2 \geq  \Vert {\Sigma}^{-\frac{1}{2}}{x}^{\top}\nu  \Vert_2 \;\big\vert\;  \bigcap_{t=0}^{T} \bigcap_{i=1}^{n}\left\{c_i^{(t)}\left({X}\right) = c_i^{(t)}\left({x}\right)\right\},\, \\
&{\Pi}_{\nu}^\perp {X} = {\Pi}_{\nu}^\perp {x},\, \text{dir}\qty({\Sigma}^{-\frac{1}{2}}{X}^{\top}\nu) =  \text{dir}\qty({\Sigma}^{-\frac{1}{2}}{x}^{\top}\nu)\Big].
\end{split}
\end{align} 
}
\begin{proposition}
\label{prop:full_cov_single_param_p}
Suppose that ${x}$ is a realization from \eqref{eq:data_full_cov_gen}, and let $\phi\sim ( \Vert\nu\Vert_2)\chi_q$. Then, under $H_0: {\mu}^\top \nu = 0$ with $\nu$ defined in \eqref{eq:nu_def},
{\small
\begin{equation}
\label{eq:p_val_single_param_full_cov}
\hspace{-5mm}\pKSigma = \emph{\pr}\left[  \phi \geq \Vert{\Sigma}^{-\frac{1}{2}} {x}^\top \nu \Vert_2 \;\middle\vert\; \bigcap_{t=0}^{T}\bigcap_{i=1}^{n} \left\{c_i^{(t)}\left( {\Pi}^\perp_\nu{x} + \qty(\phi  \frac{\nu}{\Vert\nu\Vert_2^2})  \qty{\emph{\text{dir}}\qty({\Sigma}^{-\frac{1}{2}} {x}^\top \nu)}^\top {\Sigma}^{\frac{1}{2}} \right) = c_i^{(t)}\left({x}\right)\right\} \right],
\end{equation} 
}
where $\pKSigma$ is defined in \eqref{eq:full_cov_p_val_k_means_chen}. Furthermore, the test that rejects $H_0: {\mu}^\top \nu = 0$ when $\pKSigma\leq \alpha$ controls the selective Type I error at level $\alpha$.
\end{proposition}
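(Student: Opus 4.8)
The plan is to follow the proof of Proposition~\ref{prop:single_param_p}, replacing the Euclidean geometry of ${X}^\top\nu$ with the $\Sigma^{-1}$-geometry induced by whitening the contrast. First I would decompose the data matrix as ${X} = {\Pi}_\nu^\perp {X} + \nu(\nu^\top {X})/\Vert\nu\Vert_2^2 = {\Pi}_\nu^\perp{X} + \nu({X}^\top\nu)^\top/\Vert\nu\Vert_2^2$, and then write ${X}^\top\nu = {\Sigma}^{1/2}\big({\Sigma}^{-1/2}{X}^\top\nu\big) = {\Sigma}^{1/2}\big(\phi\, u\big)$, where $\phi := \Vert{\Sigma}^{-1/2}{X}^\top\nu\Vert_2$ and $u := \text{dir}({\Sigma}^{-1/2}{X}^\top\nu)$. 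Substituting shows that, on the event $\{{\Pi}_\nu^\perp{X} = {\Pi}_\nu^\perp{x},\ u = \text{dir}({\Sigma}^{-1/2}{x}^\top\nu)\}$, the matrix ${X}$ equals the deterministic function of $\phi$ appearing inside \eqref{eq:p_val_single_param_full_cov}, namely ${\Pi}_\nu^\perp{x} + (\phi\,\nu/\Vert\nu\Vert_2^2)\{\text{dir}({\Sigma}^{-1/2}{x}^\top\nu)\}^\top{\Sigma}^{1/2}$. Consequently the clustering event $\bigcap_{t,i}\{c_i^{(t)}({X}) = c_i^{(t)}({x})\}$ becomes an event depending on $\phi$ alone, and the inequality defining the statistic becomes $\{\phi \geq \Vert{\Sigma}^{-1/2}{x}^\top\nu\Vert_2\}$.

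Next I would establish the distributional facts needed to carry the conditioning through. Under \eqref{eq:data_full_cov_gen} the rows of ${X}$ are independent with common covariance ${\Sigma}$, so ${\Sigma}^{-1/2}{X}^\top\nu \sim \mathcal{N}({\Sigma}^{-1/2}{\mu}^\top\nu,\ \Vert\nu\Vert_2^2\mathbf{I}_q)$, which under $H_0$ is $\mathcal{N}({0}_q, \Vert\nu\Vert_2^2\mathbf{I}_q)$; hence its norm $\phi$ is distributed as $(\Vert\nu\Vert_2)\chi_q$ and is independent of its direction $u$ by spherical symmetry. It remains to check that $\phi$ is also independent of ${\Pi}_\nu^\perp{X}$: since ${X}$ is matrix normal, ${\Pi}_\nu^\perp{X}$ and ${X}^\top\nu$ are jointly Gaussian with $\cov\big(({\Pi}_\nu^\perp{X})_{ij},\ ({X}^\top\nu)_{j'}\big) = ({\Pi}_\nu^\perp\nu)_i\,\Sigma_{jj'} = 0$ because ${\Pi}_\nu^\perp\nu = {0}_n$; therefore ${\Pi}_\nu^\perp{X}$ is independent of ${X}^\top\nu$, hence of the pair $(\phi,u)$, which is a measurable function of ${X}^\top\nu$. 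Combining, $\phi$ is independent of $({\Pi}_\nu^\perp{X}, u)$, so conditionally on $\{{\Pi}_\nu^\perp{X} = {\Pi}_\nu^\perp{x},\ u = \text{dir}({\Sigma}^{-1/2}{x}^\top\nu)\}$ we still have $\phi \sim (\Vert\nu\Vert_2)\chi_q$.

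Putting the two steps together, $\pKSigma$ is exactly the probability that a $(\Vert\nu\Vert_2)\chi_q$ variable exceeds $\Vert{\Sigma}^{-1/2}{x}^\top\nu\Vert_2$, conditional on that variable lying in the set of $\phi$ for which the perturbed matrix reproduces every intermediate clustering of ${x}$, which is the claimed identity \eqref{eq:p_val_single_param_full_cov}. For the selective Type I error claim I would invoke the standard argument (as in~\citet{Fithian2014-ow} and~\citet{Gao2020-yt}): conditional on the full selection-plus-nuisance event, the statistic $\phi$ has a continuous distribution, a truncated $(\Vert\nu\Vert_2)\chi_q$, so $\pKSigma$ is $\mathrm{Unif}(0,1)$ there; and since the event ``$H_0$ is tested'', i.e.\ $\{\hat{\mathcal{C}}_1,\hat{\mathcal{C}}_2\in\mathcal{C}({X})\}$, is implied by $\bigcap_{t,i}\{c_i^{(t)}({X}) = c_i^{(t)}({x})\}$, iterating expectations over the finer conditioning events and the nuisance statistics yields $\pr_{H_0}(\pKSigma \leq \alpha \mid H_0 \text{ tested}) \leq \alpha$ for all $\alpha\in(0,1)$.

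The main obstacle is the independence bookkeeping in the second paragraph: one has to be careful that whitening only the contrast $\nu^\top{X}$ (and not re-running $k$-means on whitened data) still leaves $\Vert{\Sigma}^{-1/2}{X}^\top\nu\Vert_2$ independent of everything we condition on. The crucial and slightly non-obvious point is that ${\Pi}_\nu^\perp{X}$ remains independent of ${\Sigma}^{-1/2}{X}^\top\nu$ even though the columns of ${X}$ are correlated; this works only because the column covariance ${\Sigma}$ factors out of the relevant cross-covariance, leaving ${\Pi}_\nu^\perp\nu = {0}_n$ to annihilate it. Everything else is a direct transcription of the spherical-covariance argument with $\sigma\Vert\nu\Vert_2$ replaced by $\Vert\nu\Vert_2$ and ${x}^\top\nu$ replaced by ${\Sigma}^{-1/2}{x}^\top\nu$.
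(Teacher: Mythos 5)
Your proposal is correct and follows essentially the same route as the paper: the identity ${X} = {\Pi}_\nu^\perp {X} + (\Vert{\Sigma}^{-1/2}{X}^\top\nu\Vert_2/\Vert\nu\Vert_2^2)\,\nu\,\{\text{dir}({\Sigma}^{-1/2}{X}^\top\nu)\}^\top{\Sigma}^{1/2}$, the pairwise independence of $\Vert{\Sigma}^{-1/2}{X}^\top\nu\Vert_2$, ${\Pi}_\nu^\perp{X}$, and $\text{dir}({\Sigma}^{-1/2}{X}^\top\nu)$ via ${\Pi}_\nu^\perp\nu={0}_n$ and rotational invariance, and the tower-property/probability-integral-transform argument for selective Type I error control are exactly the steps in the paper's Appendix proof. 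The point you flag as crucial --- that the column covariance factors out of the cross-covariance so whitening only the contrast preserves the needed independence --- is precisely the content of the paper's Lemma on mutual independence under model \eqref{eq:data_full_cov_gen}.
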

In addition, we can adapt the results in Section~\ref{section:method} to compute the set 
{ $\qty{ \phi\in \mathbb{R}: \bigcap_{t=0}^{T}\bigcap_{i=1}^{n} \left\{c_i^{(t)}\left( {\Pi}^\perp_\nu{x} + \qty(\phi{\nu}/\Vert\nu\Vert_2^2) \qty{{\text{dir}}\qty({\Sigma}^{-\frac{1}{2}} {x}^\top \nu)}^\top {\Sigma}^{\frac{1}{2}} \right) = c_i^{(t)}\left({x}\right)\right\}}$} by modifying the results in Lemmas~\ref{lemma:simple_norm} and \ref{lemma:canonical_norm}. Details are in Section~\ref{appendix:proof_p_val_prop_full_cov} of the Appendix.
\subsection{Unknown variance}
\label{sec:unknown_var}
 When $\sigma$ is unknown, we can plug in an estimate $\hat\sigma$ in \eqref{eq:p_val_k_means_chen}:
\begin{equation}
\label{eq:p_val_chen_estimate_sigma}
\pKhat(\hat\sigma) = \pr\left[  \phi(\hat{\sigma}) \geq {\Vert{x}^\top \nu \Vert_2} \;\middle\vert\; \bigcap_{t=0}^{T}\bigcap_{i=1}^{n} \left\{c_i^{(t)}\left({x}'(\phi(\hat{\sigma}))\right) = c_i^{(t)}\left({x}\right)\right\}   \right],
\end{equation} 
where $\phi(\hat{\sigma})\sim (\hat\sigma\Vert \nu \Vert_2) \chi_q$. 
If we use a consistent estimator of $\sigma$, then a test based on the $p$-value in \eqref{eq:p_val_chen_estimate_sigma} provides selective Type I error control \eqref{eq:selective_type_1} asymptotically. 
\begin{proposition}
\label{prop:estimated_sigma}
For $q=1,2,\ldots,$ suppose that ${X}^{(q)} \sim \mathcal{MN}_{n\times q}\qty({\mu}^{(q)}, \emph{\textbf{I}}_{n}, \sigma^2 \emph{\textbf{I}}_{q})$. Let ${x}^{(q)}$ be a realization from ${X}^{(q)}$ and let $c_i^{(t)}(\cdot)$ be the cluster to which the $i$th observation is assigned during the $t$th iteration of step 3b. in Algorithm~\ref{algo:k_means_alt_min}. Consider the sequence of null hypotheses $H_0^{(q)}: {\mu^{(q)}}^\top \nu^{(q)} = 0_q$, where $\nu^{(q)}$ defined in \eqref{eq:nu_def} is the contrast vector resulting from applying $k$-means clustering on ${x}^{(q)}$. Suppose that (i) $\hat\sigma$ is a consistent estimator of $\sigma$, i.e., for all $\epsilon>0, \lim_{q\to\infty}\emph{\pr}\qty(|\hat\sigma({X}^{(q)})-\sigma|\geq \epsilon) = 0$; and (ii) there exists $\delta \in (0,1)$ such that $\lim_{q\to\infty}\emph{\pr}_{H_0^{(q)}}\qty[\bigcap_{t=0}^{T}\bigcap_{i=1}^{n} \left\{c_i^{(t)}\left({X}^{(q)}\right) = c_i^{(t)}\left({x}^{(q)}\right)\right\} ] > \delta$.
Then, for all $\alpha \in (0,1)$, we have that 
$\lim_{q\to\infty}\emph{\pr}_{H_0^{(q)}}\qty[\pKhat(\hat\sigma)\leq \alpha \;\middle\vert\; \bigcap_{t=0}^{T}\bigcap_{i=1}^{n} \left\{c_i^{(t)}\left({X}^{(q)}\right) = c_i^{(t)}\left({x}^{(q)}\right)\right\} ]= \alpha$.
\end{proposition}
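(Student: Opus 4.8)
The plan is to show that the conditional rejection probability, as a function of $\hat\sigma$, is continuous at the true $\sigma$, and then combine this with the consistency of $\hat\sigma$ and the lower bound on the selection event probability to pass to the limit. Write $F(\phi; s)$ for the survival function $\pr(\phi(s) \geq t)$ truncated to the set $\mathcal{S}_T$ (which does not depend on $s$, since $\mathcal{S}_T$ in \eqref{eq:s_set} is a fixed subset of $\mathbb{R}$ determined by ${x}^{(q)}$ alone), where $\phi(s) \sim (s\Vert\nu\Vert_2)\chi_q$. By the same change-of-variables argument underlying Proposition~\ref{prop:single_param_p} — which re-expresses $\pK$ as a truncated scaled-$\chi_q$ survival function — we have $\pKhat(\hat\sigma) = F\bigl(\Vert{x}^{(q)\top}\nu^{(q)}\Vert_2; \hat\sigma\bigr)$, i.e.\ the $p$-value is obtained by evaluating, at the observed test statistic, the CDF-complement of a scaled-$\chi_q$ law conditioned on $\mathcal{S}_T$.

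The first key step is a \emph{pivotal} observation: conditional on the selection event and on $\{{\Pi}_\nu^\perp {X} = {\Pi}_\nu^\perp{x},\ \text{dir}({X}^\top\nu)=\text{dir}({x}^\top\nu)\}$, the random variable $\pKSigma$ evaluated with the \emph{true} $\sigma$ is exactly Uniform$(0,1)$ — this is precisely the content of Proposition~\ref{prop:single_param_p}'s selective Type~I error guarantee, restated as a probability-integral-transform statement. So if $\hat\sigma$ were equal to $\sigma$, the claimed limit would hold with equality for every $q$. The second key step is to control the discrepancy introduced by using $\hat\sigma$ instead of $\sigma$. For this I would show that the map $s \mapsto F(t;s)$ is, for fixed $t$ and fixed truncation set, Lipschitz (or at least uniformly continuous) on a neighborhood of $\sigma$ bounded away from $0$, with a modulus not depending on the truncation set $\mathcal{S}_T$ — this follows because the truncated scaled-$\chi_q$ density depends smoothly on the scale parameter and the scale only enters through $\phi/s$. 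Concretely, for any $\epsilon>0$ one bounds $|\pr_{H_0^{(q)}}(\pKhat(\hat\sigma)\le\alpha \mid \mathcal{S}_T) - \alpha|$ by splitting on the event $\{|\hat\sigma-\sigma|<\eta\}$: on this event the perturbed $p$-value differs from the exact (uniform) one by at most the modulus of continuity evaluated at $\eta$, which can be made small; off this event the contribution is bounded by $\pr(|\hat\sigma-\sigma|\ge\eta)/\delta$ using assumption~(ii) to lower-bound the conditioning event probability, and this vanishes by assumption~(i).

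The main obstacle I anticipate is making the continuity modulus in the second step \emph{uniform over the truncation set} $\mathcal{S}_T$ and over $q$. A crude bound on $|F(t;s)-F(t;\sigma)|$ obtained by bounding numerator and denominator separately could blow up when $\mathcal{S}_T$ has tiny mass; the fix is to note that $F(t;s)$ is a ratio of two integrals of the \emph{same} density over nested-in-$t$ regions inside $\mathcal{S}_T$, write the derivative $\partial_s F(t;s)$ as a difference of conditional expectations of the score $\partial_s \log(\text{density})$, and bound that score uniformly — crucially the $\chi_q$ score $\partial_s\log f(\phi;s) = -q/s + \Vert\nu\Vert_2^{-2}\phi^2/s^3 \cdot \Vert\nu\Vert_2^2$ has conditional expectation that stays controlled because conditioning on $\mathcal{S}_T\subseteq(0,\infty)$ cannot increase the relevant moments beyond the unconditional scaled-$\chi_q$ moments by more than a factor $1/\delta$, but one must be slightly careful that this does not interact badly with large $q$. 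An alternative, cleaner route that sidesteps uniformity in $\mathcal{S}_T$: argue directly that $\pKhat(\hat\sigma) - \pKSigma(\sigma) \to 0$ in probability (conditionally), by writing both as $F(t;\cdot)$ evaluated at $\hat\sigma$ and $\sigma$ and using that $t$ is a fixed realized number, then invoke the continuous mapping theorem together with the fact that the limiting law of $\pKSigma(\sigma)$ is Uniform$(0,1)$, which has no atom at $\alpha$, so convergence in probability of the $p$-values plus convergence in distribution to a continuous law upgrades to convergence of $\pr(\pKhat(\hat\sigma)\le\alpha)$ to $\alpha$. Either way, assumption~(ii) enters only to convert the unconditional consistency of $\hat\sigma$ into a statement valid after conditioning on the selection event.
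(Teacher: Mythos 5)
Your proposal is correct and, in its ``alternative, cleaner route,'' is essentially the paper's own argument: establish continuity of the truncated scaled-$\chi_q$ survival function in the scale parameter (the paper's Lemma~\ref{lemma:pkhat_cont_function}), use consistency of $\hat\sigma$ and the continuous mapping theorem to get $\pKhat(\hat\sigma)-\pK(\sigma)\to 0$ in probability, transfer this to a statement conditional on the selection event via assumption (ii) exactly as in your $\pr(A\cap B)\geq\pr(A)-\pr(B^c)$ bound, and conclude from the exact conditional uniformity of $\pK(\sigma)$ supplied by Proposition~\ref{prop:single_param_p}. (Minor note: what you denote $\pKSigma(\sigma)$ should be $\pK(\sigma)$; also, the uniformity-in-$q$ concern you flag is real but is glossed over in the paper's proof as well.)
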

In practice, we propose to use the following estimator of $\sigma$~\citep{Huber1981-aw}: 
\begin{align}
\label{eq:sigma_hat_MED}
\hat\sigma_{\text{MED}}({x}) = \qty{\underset{1\leq i\leq n, 1\leq j\leq q}{\text{median}}\qty(\tilde{{x}}_{ij}^2)/M_{\chi_1^2}}^{1/2},
\end{align}
 where $\tilde{{x}}$ is obtained from subtracting the median of each column in ${x}$, and $M_{\chi_1^2}$ is the median of the $\chi^2_1$ distribution. If ${\mu}$ is sparse, i.e., $\sum_{i=1}^n\sum_{j=1}^q 1\qty{\mu_{ij}\neq 0}$ is small, then \eqref{eq:sigma_hat_MED} is consistent with appropriate assumptions; see Appendix~\ref{appendix:variance_estimation}.

\section{Simulation study}
\label{section:sim}
\subsection{Overview}
Throughout this section, we consider testing the null hypothesis $H_0:{\mu}^\top \nu = {0}_q \mbox{ versus } H_1:{\mu}^\top \nu \neq {0}_q$, where, 
unless otherwise stated, $\nu$ defined in $\eqref{eq:nu_def}$  is based on a {randomly-chosen} pair of clusters $\hat{\mathcal{C}}_1$ and $\hat{\mathcal{C}}_2$ from $k$-means clustering. We consider four $p$-values: $p_{\text{Naive}}$ in \eqref{eq:wald_pval}, $\pK$ in \eqref{eq:p_val_k_means_chen}, $\pKhat$ in \eqref{eq:p_val_chen_estimate_sigma} with $\hat\sigma_{\text{MED}}$ defined in \eqref{eq:sigma_hat_MED}, and $\pKhat$ in \eqref{eq:p_val_chen_estimate_sigma} with $\hat\sigma_{\text{Sample}} = \qty{\sum_{i=1}^n\sum_{j=1}^q \qty({x}_{ij}-\bar{{x}}_j)^2/(nq-q)}^{1/2}$, where $\bar{{x}}_j=\sum_{i=1}^n {x}_{ij}/n$. In the simulations that follow, we compare the selective Type I error \eqref{eq:selective_type_1} and power of the tests that reject $H_0$ when these $p$-values are less than $\alpha=0.05$. 

\subsection{Selective Type I error under the global null}
\label{section:type_1_error}

We generate data from \eqref{eq:data_gen} with ${\mu} = {0}_{n\times q}$; therefore, $H_0$ in \eqref{eq:null_clustering_mean} holds for any pair of estimated clusters. We simulate 3,000 datasets with $n=150$,\,$\sigma=1$, and $q=2,10,50,100$.

For each simulated dataset, we apply $k$-means clustering with $K=3$, and then compute $p_{\text{Naive}}$, $\pK$, $\pKhat(\hat\sigma_{\text{MED}})$, and $\pKhat(\hat\sigma_{\text{Sample}})$ for a randomly-chosen pair of clusters. Figure~\ref{fig:sim_type_1} displays the observed $p$-value quantiles versus the Uniform(0,1) quantiles. We see that for all values of $q$, (i) the naive $p$-values in \eqref{eq:wald_pval} are stochastically smaller than a Uniform(0,1) random variable, and the test based on $p_{\text{Naive}}$ leads to an inflated Type I error rate; (ii) tests based $\pK$, $\pKhat(\hat\sigma_{\text{MED}})$, and $\pKhat(\hat\sigma_{\text{Sample}})$ control the selective Type I error rate in the sense of \eqref{eq:selective_type_1}.

\begin{figure}[htbp!]
\centering
\includegraphics[width=\linewidth]{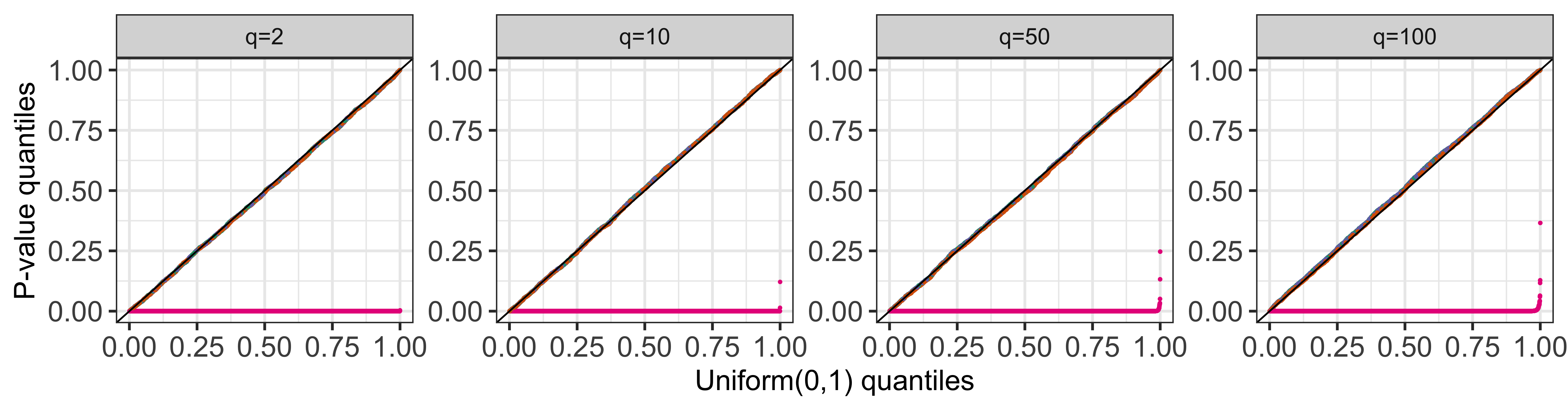}
\vspace*{-8mm}
\caption{Quantile-quantile plots for $p_{\text{Naive}}$ (pink), $\pK$ (green), $\pKhat(\hat\sigma_{\text{MED}})$ (orange), and $\pKhat(\hat\sigma_{\text{Sample}})$ (purple) under \eqref{eq:data_gen} with ${\mu} = {0}_{n\times q}$, stratified by $q$.}
\label{fig:sim_type_1}
\end{figure}

\subsection{Conditional power and detection probability}
\label{section:cond_power}
In this section, we show that the tests based on our proposal ($\pK$, $\pKhat(\hat\sigma_{\text{MED}})$, and $\pKhat(\hat\sigma_{\text{Sample}})$) have substantial power to reject $H_0$ when it is not true. We generate data from \eqref{eq:data_gen} with $n=150$ and 
\begin{align}
\label{eq:power_model}
\hspace{-5mm}
{\mu}_1 =\ldots = {\mu}_{\frac{n}{3}} = \begin{bmatrix}
-\frac{\delta}{2} \\ 0_{q-1}
\end{bmatrix}, \; {\mu}_{\frac{n}{3}+1}=\ldots = {\mu}_{\frac{2n}{3}} = \begin{bmatrix}
 0_{q-1} \\ \frac{\surd{3}\delta}{2}
\end{bmatrix} ,\;
{\mu}_{\frac{2n}{3}+1}=\ldots = {\mu}_{n} = \begin{bmatrix}
\frac{\delta}{2} \\ 0_{q-1}
\end{bmatrix}.
\end{align}
Here, we can think of $\mathcal{C}_1 = \{1,\ldots,n/3\},\mathcal{C}_2 = \{(n/3)+1,\ldots,(2n/3)\},\mathcal{C}_3 = \{(2n/3)+1,\ldots,n\}$ as the ``true clusters''. Moreover, these clusters are equidistant in the sense that the pairwise distance between each pair of population means is $|\delta|$.
Recall that we test $H_0$ in \eqref{eq:null_clustering_mean} for a pair of estimated clusters $\hat{\mathcal{C}}_1$ and $\hat{\mathcal{C}}_2$, which may not be true clusters. Hence, we will separately consider the \emph{conditional power} and \emph{detection probability} of our proposed tests \citep{Gao2020-yt,jewell2019testing,Hyun2018-pe}. The conditional power is the probability of rejecting $H_0$ in \eqref{eq:null_clustering_mean}, given that $\hat{\mathcal{C}}_1$ and $\hat{\mathcal{C}}_2$ are true clusters. Given $M$ simulated datasets, we estimate it as 
\begin{align}
\label{eq:conditional_power}
\text{Conditional power} = \frac{\sum_{m=1}^M 1\qty{ \qty{\hat{\mathcal{C}}^{(m)}_1,\hat{\mathcal{C}}^{(m)}_2} \subseteq \qty{\mathcal{C}_1,\ldots,  \mathcal{C}_L }, p^{(m)}\leq \alpha}}{\sum_{m=1}^M 1\qty{ \qty{\hat{\mathcal{C}}^{(m)}_1,\hat{\mathcal{C}}^{(m)}_2} \subseteq \qty{\mathcal{C}_1,\ldots,  \mathcal{C}_L } }},
\end{align}
where $\qty{\mathcal{C}_1,\ldots,  \mathcal{C}_L }$ are true clusters, and $p^{(m)}$ and $\hat{\mathcal{C}}^{(m)}_1, \hat{\mathcal{C}}^{(m)}_2$ correspond to the $p$-value and clusters under consideration for the $m$th simulated dataset. Because the quantity in \eqref{eq:conditional_power} conditions on the event that 
$\hat{\mathcal{C}}_1$ and $\hat{\mathcal{C}}_2$ are true clusters, we also estimate how often that event occurs:
\begin{align}
\label{eq:detect_p}
\text{Detection probability} = \sum_{m=1}^M 1\qty{ \{\hat{\mathcal{C}}^{(m)}_1,\hat{\mathcal{C}}^{(m)}_2\} \subseteq \qty{\mathcal{C}_1,\ldots,  \mathcal{C}_L }}/M.
\end{align}

\begin{figure}[htbp!]
\centering
\includegraphics[width=0.8\linewidth]{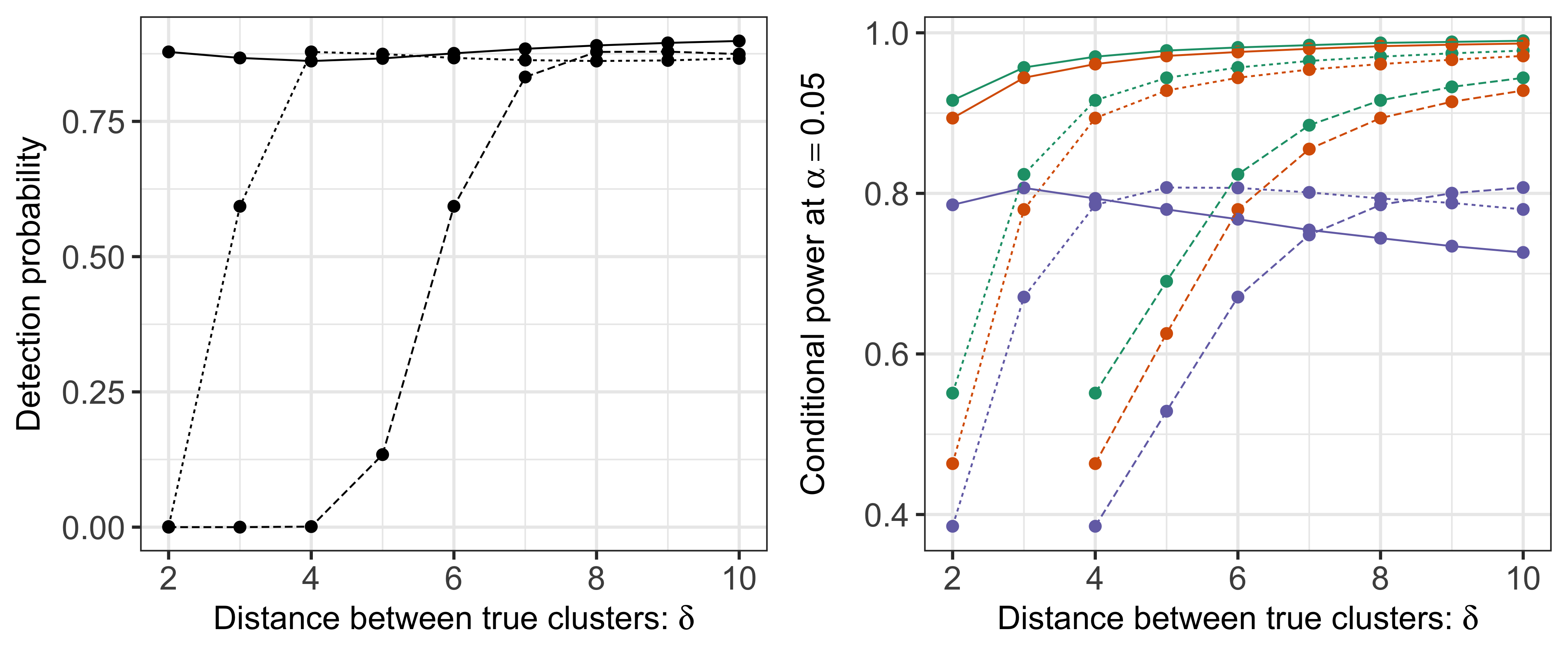}
\vspace*{-4mm}
\caption{\textit{Left: } The detection probability \eqref{eq:detect_p} for $k$-means clustering with $K=3$ under model \eqref{eq:data_gen} with $\mu$ defined in \eqref{eq:power_model}, and  $\sigma=0.25$ (solid lines), $0.5$ (dashed lines), and $1$ (long-dashed lines). 
\textit{Right: } The conditional power \eqref{eq:conditional_power} at $\alpha=0.05$ for the tests based on $\pK$ (green), $\pKhat(\hat{\sigma}_{\text{MED}})$ (orange), and $\pKhat(\hat{\sigma}_{\text{Sample}})$ (purple), under model \eqref{eq:data_gen} with $\mu$ defined in \eqref{eq:power_model} and $\sigma=0.25,0.5,1$. The conditional power is not displayed for $\delta=2,3, \sigma=1$ because the true clusters were never recovered in simulation.}
\label{fig:sim_power}
\end{figure}

We generate $M=200,000$ datasets from \eqref{eq:power_model} with $q=10,\sigma=0.25, 0.5,1,$ and $\delta=2,3,\ldots,10$. For each simulated dataset, we apply $k$-means clustering with $K=3$ and reject $H_0:{\mu}^\top \nu = 0_q$ if $\pK$, $\pKhat(\hat\sigma_{\text{MED}})$, or $\pKhat(\hat\sigma_{\text{Sample}})$ is less than $\alpha = 0.05$. In Figure~\ref{fig:sim_power}, the left panel displays the detection probability \eqref{eq:detect_p} of $k$-means clustering as a function of $\delta$ in \eqref{eq:power_model}, and the right panel displays the conditional power \eqref{eq:conditional_power} for the tests based on $\pK$, $\pKhat(\hat\sigma_{\text{MED}})$, and $\pKhat(\hat\sigma_{\text{Sample}})$. Under model \eqref{eq:data_gen}, the detection probability and conditional power increase as a function of $\delta$ in \eqref{eq:power_model} for all values of $\sigma$. For a given value of $\delta$, a larger value of $\sigma$ leads to lower detection probability and conditional power. The conditional power is not displayed for $\delta=2,3, \sigma=1$ because the true clusters were never recovered in simulation. Moreover, for a given value of $\delta$ and $\sigma$, the test based on $\pK$ has the highest conditional power, followed closely by the test based on $\pKhat(\hat{\sigma}_{\text{MED}})$. Using $\hat{\sigma}_{\text{Sample}}$ in $\pKhat$ leads to a less powerful test, especially for large values of $\delta$. This is because $\hat{\sigma}_{\text{Sample}}$ is a conservative estimator of $\sigma$ in \eqref{eq:data_gen}, and its bias is an increasing function of $\delta$, the distance between true clusters. By contrast, $\hat{\sigma}_{\text{MED}}$ is a consistent estimator under model \eqref{eq:power_model} (see Appendix~\ref{appendix:variance_estimation}). 

As an alternative to the conditional power in \eqref{eq:conditional_power}, in Appendix~\ref{appendix:additional_simulation}, we consider a notion of power that does not condition on having correctly estimated the true clusters.

\section{Real data applications}
\label{section:real_data}

\subsection{Palmer Penguins~\citep{palmerpenguins}}

Here we analyze the Palmer penguins dataset from the \texttt{palmerpenguins} package in \texttt{R}~\citep{palmerpenguins}. We consider the 165 female penguins with complete observations, and apply $k$-means clustering with $K=4$ to two of the collected features: bill depth and flipper length. Figure~\ref{fig:real_data_penguin} displays the estimated clusters.

\begin{figure}[htbp!]
\centering
\begin{minipage}{0.35\textwidth}
\includegraphics[trim={0 0mm 0 0mm},clip,width=\linewidth]{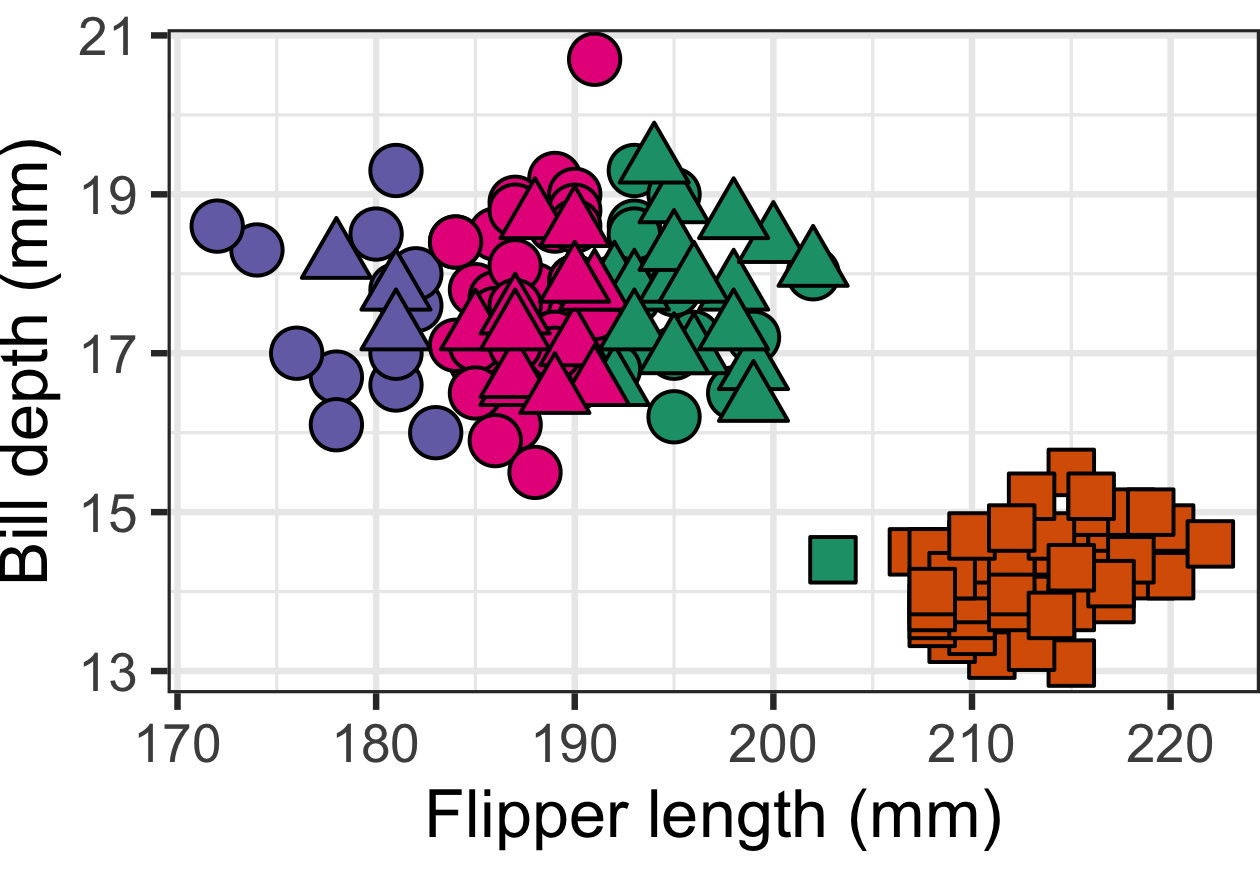}
\end{minipage}
\begin{minipage}[c]{0.40\linewidth}
\resizebox*{\linewidth}{0.14\textheight}{%
\begin{tabular}{ccccc}
  $H_0$  & $p_{\text{Naive}}$ & $\pKhat(\hat{\sigma}_{\text{MED}})$\\
$\bar{{\mu}}_1 = \bar{{\mu}}_2$ & $<10^{-10}$& $3.3\times 10^{-5}$ \\
$\bar{{\mu}}_1 = \bar{{\mu}}_3$ &  $<10^{-10}$& {$0.007$}  \\
$\bar{{\mu}}_1 = \bar{{\mu}}_4$ &  $<10^{-10}$ & 0.24 \\
$\bar{{\mu}}_2 = \bar{{\mu}}_3$ & $<10^{-10}$ & $2.2\times 10^{-6}$  \\
$\bar{{\mu}}_2 = \bar{{\mu}}_4$ &  $<10^{-10}$ &$7.8\times 10^{-6}$ \\
$\bar{{\mu}}_3 = \bar{{\mu}}_4$&  $<10^{-10}$ & 0.10  \\
\end{tabular}}
\end{minipage}
\vspace*{-4mm}
\caption{\textit{Left: } The bill depths and flipper lengths of female Palmer penguins, along with true species labels (Adelie: circle; Gentoo: square; Chinstrap: triangle) and clusters estimated using $k$-means clustering (cluster 1: green; cluster 2: orange; cluster 3: purple; cluster 4: pink).
\textit{Right: } We test the null hypothesis that the means of two estimated clusters are equal, for each pair of clusters estimated via $k$-means clustering, using $p_{\text{Naive}}$ in \eqref{eq:wald_pval} and $\pKhat(\hat{\sigma}_{\text{MED}})$ in \eqref{eq:p_val_chen_estimate_sigma} with $\hat{\sigma}_{\text{MED}}$ defined in \eqref{eq:sigma_hat_MED}. Here, $\bar{{\mu}}_i = \sum_{j\in\hat{\mathcal{C}}_i}{\mu}_j/|\hat{\mathcal{C}}_i|$. }
\label{fig:real_data_penguin}
\end{figure}

We assess the equality of the means of each pair of estimated clusters using $p_{\text{Naive}}$ in \eqref{eq:wald_pval} and $\pKhat(\hat{\sigma}_{\text{MED}})$ in \eqref{eq:p_val_chen_estimate_sigma} with $\hat{\sigma}_{\text{MED}}$ defined in \eqref{eq:sigma_hat_MED}. The results are in Figure~\ref{fig:real_data_penguin}. The naive p-values are small for all pairs of estimated clusters, even when the underlying species distributions are nearly identical (e.g., both clusters 1 and 4 are a mix of Chinstrap and Adelie penguins). By contrast, our proposal results in large $p$-values when testing for a difference in means between clusters composed of the same species (clusters 1 and 4, clusters 3 and 4), and small $p$-values when the clusters correspond to different species (e.g., clusters 1 and 2, clusters 2 and 3).  

\subsection{MNIST Dataset~\citep{Lecun1998-uw}}

In this section, we apply our method to the MNIST dataset~\citep{Lecun1998-uw}, which consists of 60,000 gray-scale images of handwritten digits. Each image has an accompanying label in $\{0,1,\ldots,9\}$, and is stored as a $28\times28$ matrix that takes on values in $[0,255]$. We first divide the entries of all the images by 255. Next, since there is no variation in the peripheral pixels of the images~\citep{Gallaugher2018-ew}, which violates model~\eqref{eq:data_gen}, we add an independent perturbation $\mathcal{N}(0,0.01)$ to each element of the image. Finally, we vectorize each image to obtain a vector $x_i \in \mathbb{R}^{784}$. 

\begin{figure}[htbp!]

\begin{minipage}{0.55\textwidth}
\begin{centering}
\hspace{10mm}  \\
\end{centering}
\begin{center}
\includegraphics[width=0.23\linewidth]{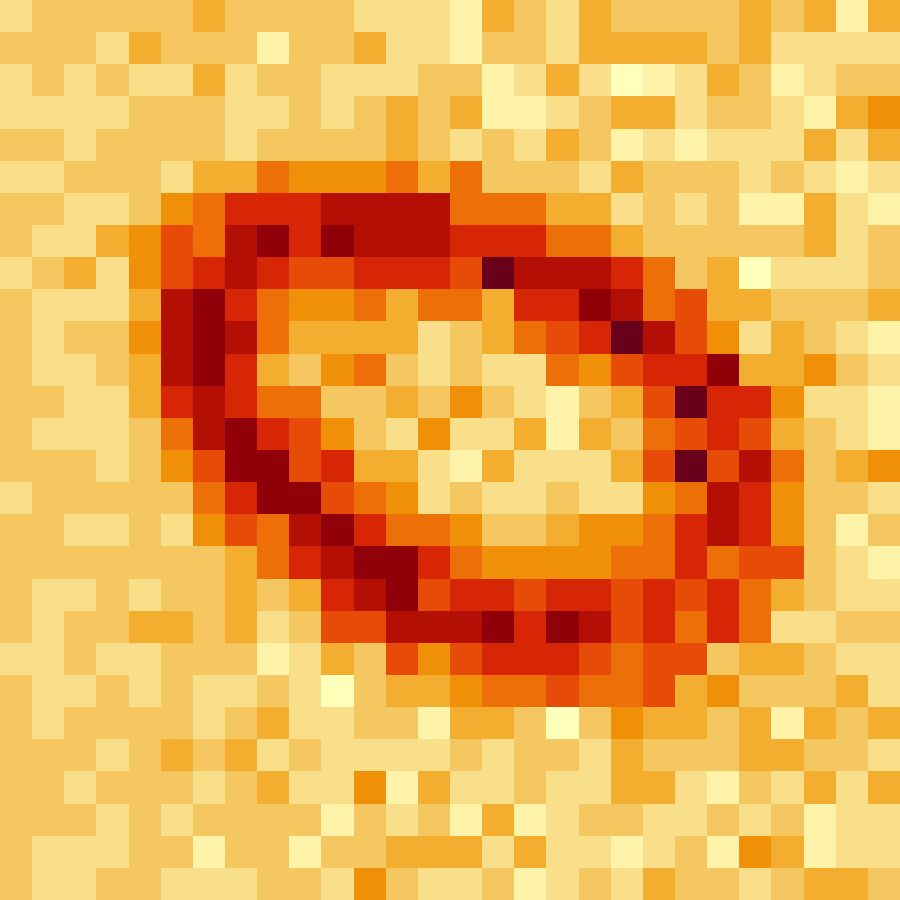}
\includegraphics[width=0.23\linewidth]{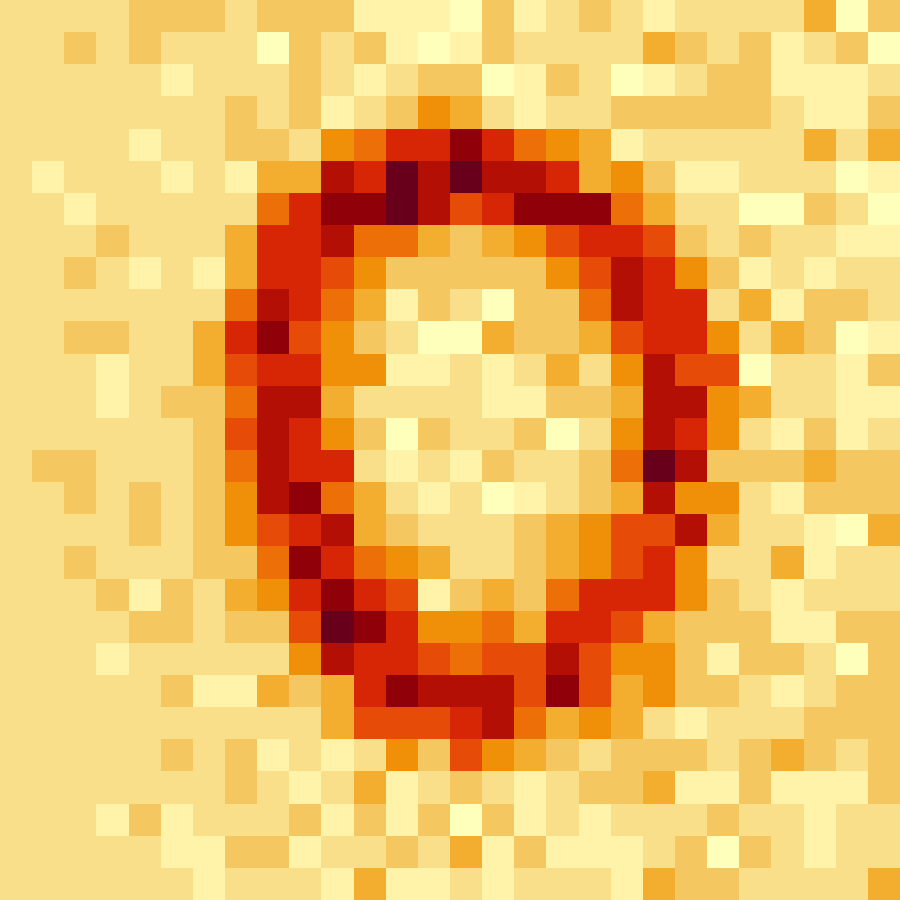}
\includegraphics[width=0.23\linewidth]{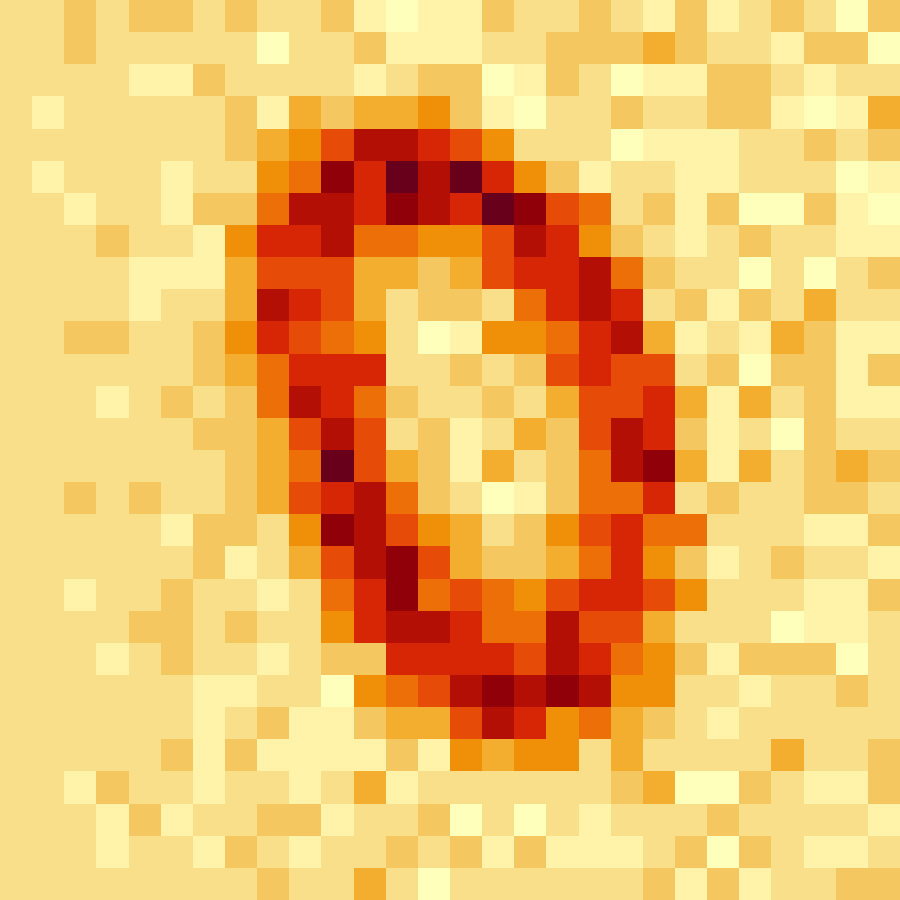} \\
\includegraphics[width=0.23\linewidth]{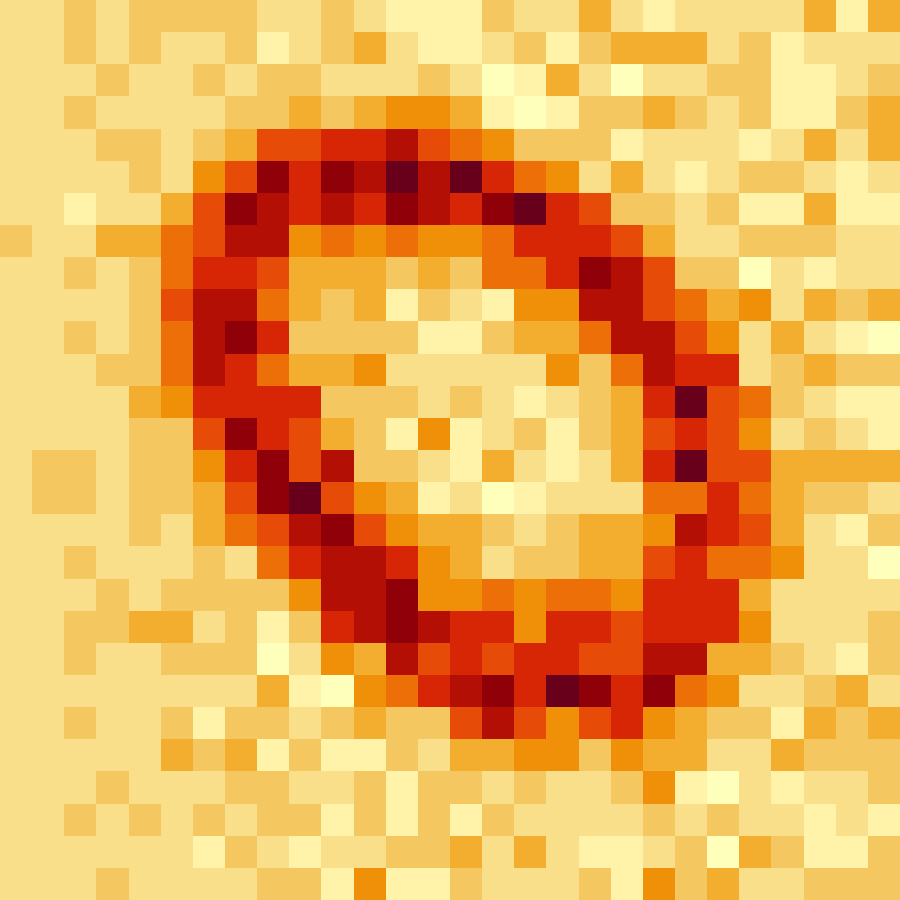}
\includegraphics[width=0.23\linewidth]{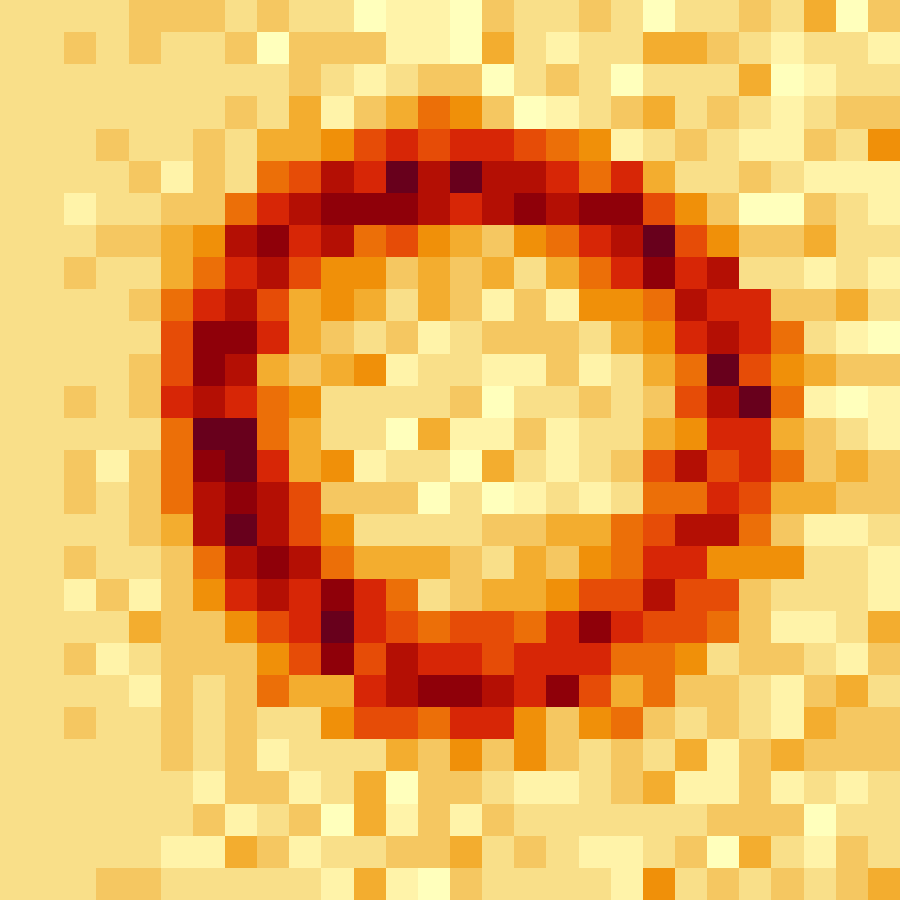}
\includegraphics[width=0.23\linewidth]{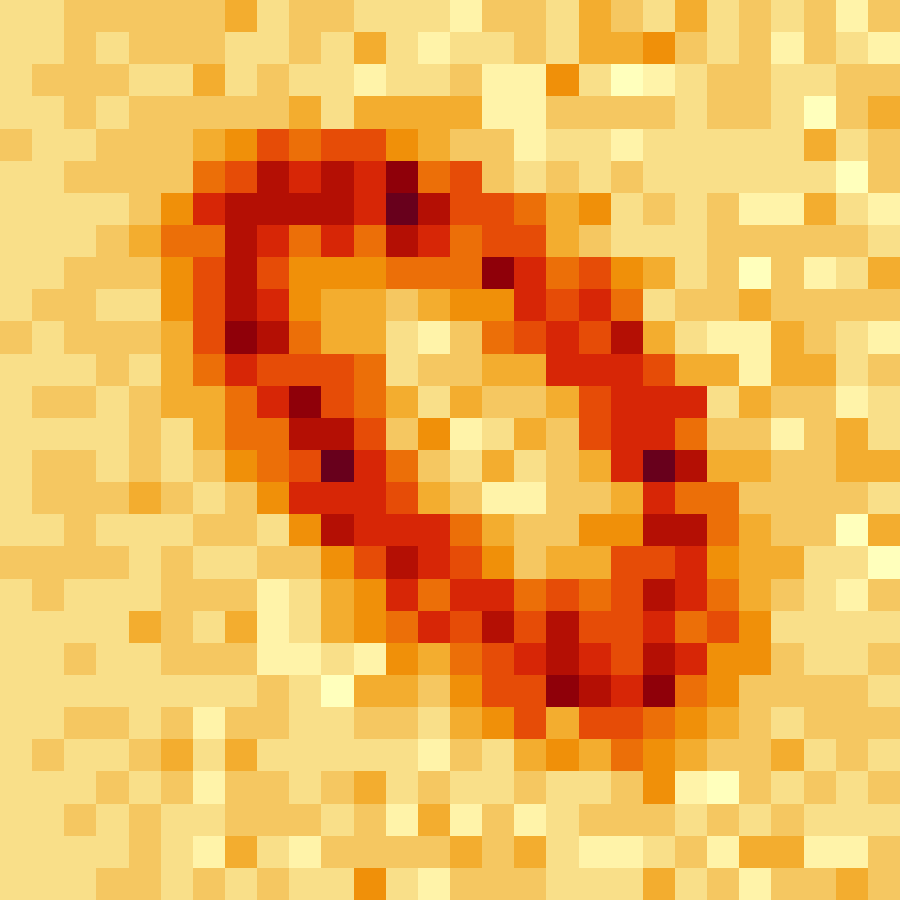} \\
\begin{centering}
\hspace{10mm}  \\
\end{centering}
\includegraphics[width=0.23\linewidth]{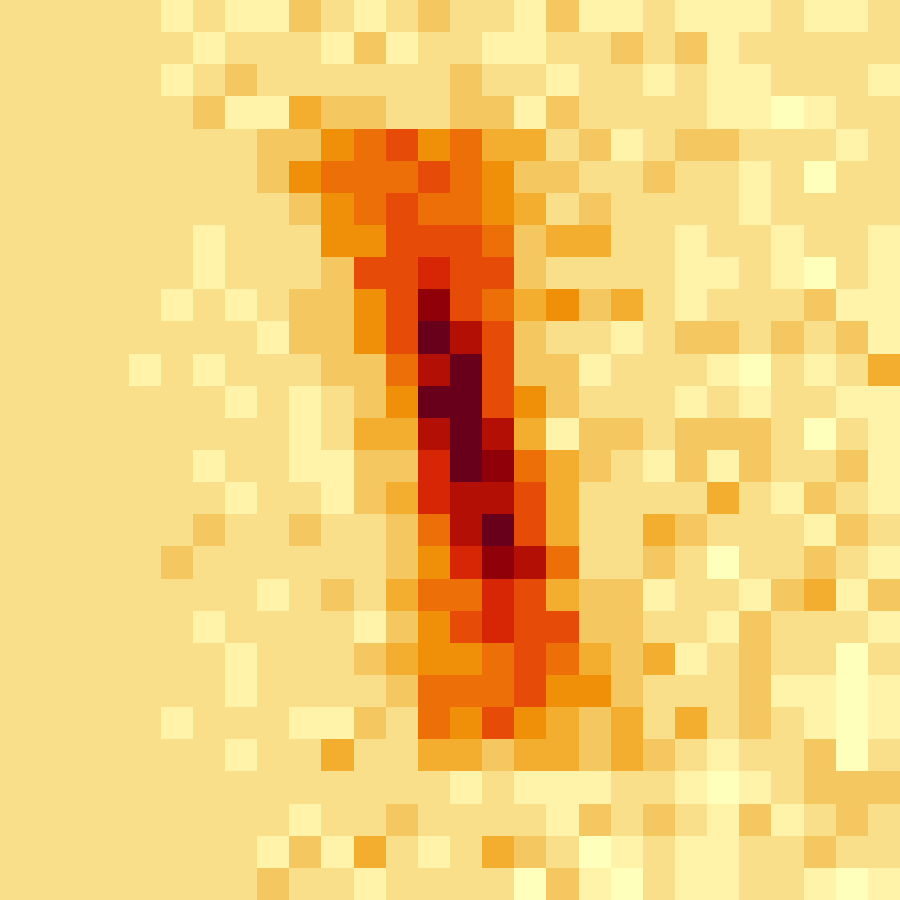}
\includegraphics[width=0.23\linewidth]{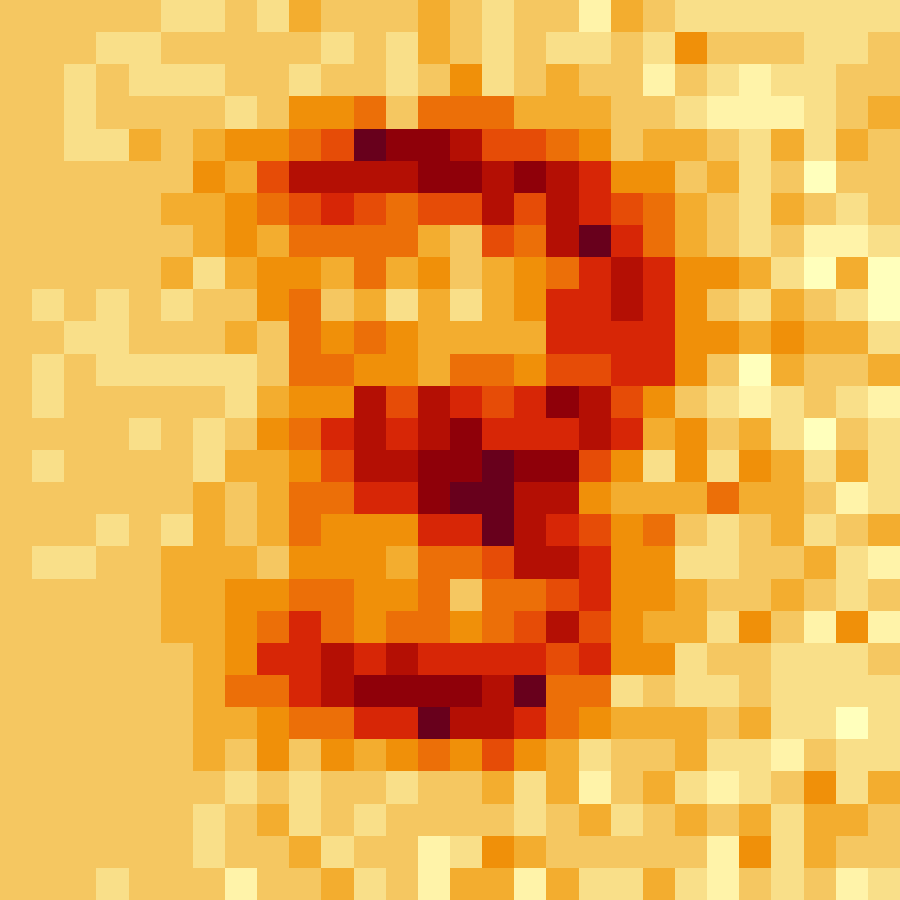}
\includegraphics[width=0.23\linewidth]{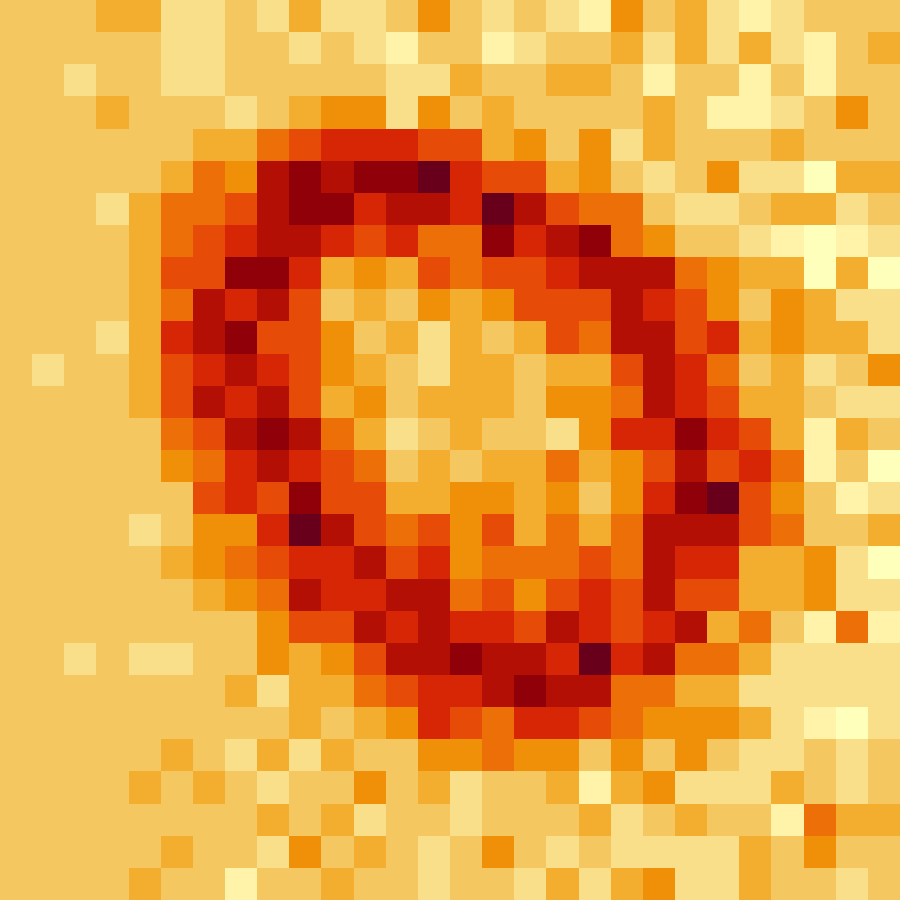}
\includegraphics[width=0.23\linewidth]{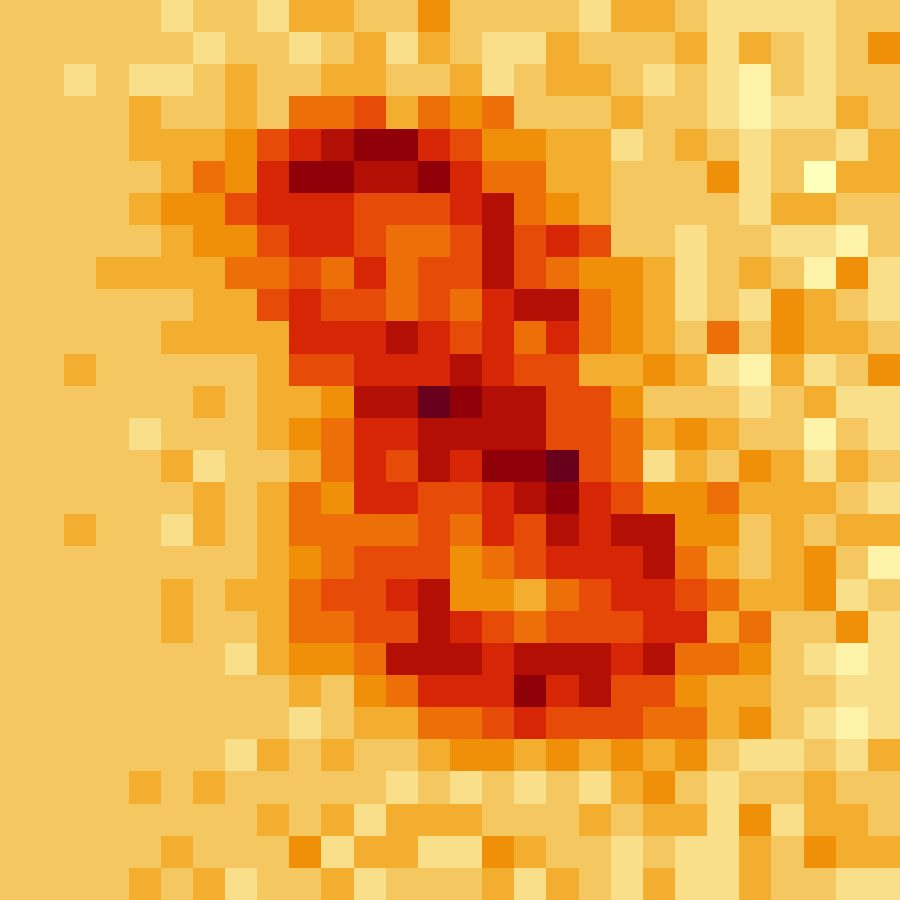}
\end{center}
\end{minipage}
\begin{minipage}{0.45\textwidth}
\resizebox*{\linewidth}{0.33\textheight}{%
\begin{tabular}{cccccc}
 & $H_0$  & $p_{\text{Naive}}$ & $\pKhat(\hat{\sigma}_{\text{MED}})$\\
``No cluster''& & & \\
&$\bar{{\mu}}_1 = \bar{{\mu}}_2$ & $<10^{-10}$ &  $<10^{-10}$\\
&$\bar{{\mu}}_1 = \bar{{\mu}}_3$ & $<10^{-10}$& $0.01$ \\
&$\bar{{\mu}}_1 = \bar{{\mu}}_4$ & $<10^{-10}$& $0.22$ \\
&$\bar{{\mu}}_1 = \bar{{\mu}}_5$ & $<10^{-10}$& $<10^{-10}$ \\
&$\bar{{\mu}}_1 = \bar{{\mu}}_6$ &  $<10^{-10}$& {$0.53$}  \\
&$\bar{{\mu}}_2 = \bar{{\mu}}_3$&  $<10^{-10}$ &  $<10^{-10}$\\
&$\bar{{\mu}}_2 = \bar{{\mu}}_4$&  $<10^{-10}$ & $1.4\times10^{-8}$ \\
&$\bar{{\mu}}_2 = \bar{{\mu}}_5$ & $<10^{-10}$ & $<10^{-10}$ \\
&$\bar{{\mu}}_2 = \bar{{\mu}}_6$&  $<10^{-10}$ & $0.49$\\
&$\bar{{\mu}}_3 = \bar{{\mu}}_4$&  $<10^{-10}$ &  $0.14$\\
&$\bar{{\mu}}_3 = \bar{{\mu}}_5$&  $<10^{-10}$ &  $1.1\times 10^{-6}$ \\
&$\bar{{\mu}}_3 = \bar{{\mu}}_6$&  $<10^{-10}$ & $0.45$ \\
&$\bar{{\mu}}_4 = \bar{{\mu}}_5$&  $<10^{-10}$ & $2.0\times10^{-8}$ \\
&$\bar{{\mu}}_4 = \bar{{\mu}}_6$ &  $<10^{-10}$ & $0.77$\\
& $\bar{{\mu}}_5 = \bar{{\mu}}_6$ &  $<10^{-10}$ & $0.48$ \\
``Cluster'' & & & & \\
&$\bar{{\mu}}_1 = \bar{{\mu}}_2$ & $<10^{-10}$ & $<10^{-10}$ \\
&$\bar{{\mu}}_1 = \bar{{\mu}}_3$ & $<10^{-10}$& $8.0\times10^{-6}$ \\
&$\bar{{\mu}}_1 = \bar{{\mu}}_4$ &  $<10^{-10}$& {$6.2\times 10^{-7}$}  \\
&$\bar{{\mu}}_2 = \bar{{\mu}}_3$&  $<10^{-10}$ & $6\times10^{-3}$ \\
&$\bar{{\mu}}_2 = \bar{{\mu}}_4$&  $<10^{-10}$ & $10^{-3}$ \\
&$\bar{{\mu}}_3 = \bar{{\mu}}_4$ &  $<10^{-10}$ & $4\times10^{-4}$\\
\end{tabular}}
\end{minipage}
\vspace*{-4mm}
\caption{\textit{Top left: } Centroids of six clusters from the ``no cluster'' dataset ($\hat{\mathcal{C}}_1$ to $\hat{\mathcal{C}}_6$ from left to right, top to bottom). \textit{Bottom left: } Same as top left, but for the ``cluster'' dataset. \textit{Right: } We test the null hypothesis of no difference between each pair of cluster centroids using $p_{\text{Naive}}$ and $\pKhat(\hat{\sigma}_{\text{MED}})$. Here, $\bar{{\mu}}_i = \sum_{j\in\hat{\mathcal{C}}_i}{\mu}_j/|\hat{\mathcal{C}}_i|$. }
\label{fig:mnist_data}
\end{figure}

We first construct a ``no cluster'' dataset by randomly sampling 1,500 images of the 0s; thus, $n=1,500$ and $q=784$. To de-correlate the pixels in each image, we whitened the data (see Section~\ref{section:full_cov}) using $\hat{{\Sigma}}^{-\frac{1}{2}} = {U}\qty({\Lambda}+0.01 \textbf{I}_n)^{-\frac{1}{2}}{U}^\top$ as in prior work~\citep{Coates2012-ag}, where ${U}{\Lambda}{U}^\top$ is the eigenvalue decomposition of the sample covariance matrix. 

We apply $k$-means clustering with $K=6$. The centroids are displayed in the top left panel of Figure~\ref{fig:mnist_data}. For each pair of estimated clusters, we compute the $p$-values $p_{\text{Naive}}$ and $\pKhat(\hat{\sigma}_{\text{MED}})$ (see Figure~\ref{fig:mnist_data}). The naive $p$-values are extremely small for all pairs of clusters under consideration, despite the resemblance of the centroids. By contrast, our approach yields modest $p$-values, congruent with the visual resemblance of the centroids. In addition, for the most part, the pairs for which $\pKhat(\hat{\sigma}_{\text{MED}})$ is small are visually quite different (e.g., clusters 1 and 2, clusters 1 and 5, and clusters 4 and 5). 

To demonstrate the power of the test based on $\pKhat(\hat{\sigma}_{\text{MED}})$, we also generated a ``cluster'' dataset by sampling 500 images each from digits $0,1,3,$ and 8; thus, $n=2,000$ and $q=784$. We again whitened the data to obtain uncorrelated features. 
After applying $k$-means clustering with $K=4$, we obtain four clusters that roughly correspond to four digits: cluster 1, 94.0\% digit 1; cluster 2, 72.4\% digit 3; cluster 3, 83.6\% digit 0; cluster 4, 62.4\% digit 8 (see the bottom left panel of Figure~\ref{fig:mnist_data}). Results from testing for a difference in means for each pair of clusters using $p_{\text{Naive}}$ and $\pKhat(\hat{\sigma}_{\text{MED}})$ are in Figure~\ref{fig:mnist_data}. Both sets of $p$-values are small on this ``cluster'' dataset.

\subsection{Single-cell RNA-sequencing data~\citep{Zheng2017-uv}}
\label{subsection:zheng_real_data}

In this section, we apply our proposal to single-cell RNA-sequencing data collected by~\citet{Zheng2017-uv}. Single-cell RNA-sequencing quantifies gene expression abundance at the resolution of single cells, thereby revealing cell-to-cell heterogeneity in transcription and allowing for the identification of cell types and marker genes. In practice, biologists often cluster the cells to identify putative cell types, and then perform a differential expression analysis, i.e., they test for a difference in gene expression between two clusters~\citep{Stuart2019-on,Lahnemann2020-xf,Grun2015-lh}. Because this approach ignores the fact that the clusters were estimated from the same data used for testing, it does not control the selective Type I error. 

\citet{Zheng2017-uv} profiled 68,000 peripheral blood mononuclear cells, and classified them based on their match to the expression profiles of 11 reference transcriptomes from known cell types. We consider the classified cell types to be the ``ground truth'', and use this information to demonstrate that our proposal in Section~\ref{section:hypothesis} yields reasonable results. 

As in prior work~\citep{Gao2020-yt,Duo2018-ap}, we first excluded cells with low numbers of expressed genes or total counts, as well as cells in which a large percentage of the expressed genes are mitochondrial. We then divided the counts for each cell by the total sum of counts in that cell. Finally, we applied a $\log_2$ transformation with a pseudo-count of 1 to the expression data, and considered only the subset of 500 genes with the largest average expression levels pre-normalization. 

We applied the aforementioned pre-processing pipeline separately to memory T cells ($N=10,224$) and a mixture of five types of cells (memory T cells, B cells, naive T cells, natural killer cells, and monocytes; $N=43,259$).

To investigate the selective Type I error in the absence of true clusters, we first constructed a ``no cluster'' dataset by randomly sampling 1,000  out of 10,224 memory T cells after pre-processing (thus, $n=1,000$ and $q=500$). Since the gene expression levels are highly correlated, we first whitened the data as described in Section~\ref{section:full_cov} by plugging in $\hat{{\Sigma}}^{-\frac{1}{2}} = {U}\qty({\Lambda}+0.01 \textbf{I}_n)^{-\frac{1}{2}}{U}^\top$~\citep{Coates2012-ag}, where ${U}{\Lambda}{U}^\top$ is the eigenvalue decomposition of the sample covariance matrix. 

We applied $k$-means clustering to the transformed data with $K=5$, and obtained five clusters consisting of 97, 223, 172, 165, and 343 cells, respectively (see Figure~\ref{fig:single_cell_visualization} left panel in Appendix~\ref{appendix:additional_real_data}). For each pair of estimated clusters, we computed the $p$-values $p_{\text{Naive}}$ and $\pKhat(\hat{\sigma}_{\text{MED}})$. The results are displayed in the top panel of Table~\ref{tbl:single_cell_Neg}. On this dataset, the naive p-values are extremely small for all pairs of estimated clusters, while our proposed $p$-values are quite large. In particular, at $\alpha=0.05$, the test based on $p_{\text{Naive}}$ concludes that all five estimated clusters correspond to distinct cell types (even after multiplicity correction), whereas our approach does not reject the null hypothesis that the expression levels (and thus cell types) are, in fact, the same between estimated clusters. Because this ``no cluster'' dataset consists only of memory T cells, we believe that conclusion based on $\pKhat(\hat{\sigma}_{\text{MED}})$ aligns better with the underlying biology. 

\begin{table}[htbp!]
\def~{\hphantom{0}}
\tbl{P-values $p_{\text{Naive}}$ in \eqref{eq:wald_pval} and $\pKhat$ in \eqref{eq:p_val_chen_estimate_sigma} with $\hat{\sigma}_{\text{MED}}$ defined in \eqref{eq:sigma_hat_MED} corresponding to the null hypothesis that the means of two estimated clusters are equal, for each pair of estimated clusters in the ``no cluster'' (top) and the ``cluster'' datasets (bottom)}{%
\resizebox{\linewidth}{!}{%
\begin{tabular}{ccccccccccc}
  $H_0$  & $\bar{{\mu}}_1 = \bar{{\mu}}_2$ & 
  $\bar{{\mu}}_1 = \bar{{\mu}}_3$ & $\bar{{\mu}}_1 = \bar{{\mu}}_4$ & $\bar{{\mu}}_1 = \bar{{\mu}}_5$ &  $\bar{{\mu}}_2 = \bar{{\mu}}_3$ & $\bar{{\mu}}_2 = \bar{{\mu}}_4$ &
  $\bar{{\mu}}_2 = \bar{{\mu}}_5$ & $\bar{{\mu}}_3 = \bar{{\mu}}_4$ & $\bar{{\mu}}_3 = \bar{{\mu}}_5$ & $\bar{{\mu}}_4 = \bar{{\mu}}_5$ \\
  $p_{\text{Naive}}$ & $<10^{-10}$ & $<10^{-10}$ & $<10^{-10}$ & $<10^{-10}$ & $<10^{-10}$ & $<10^{-10}$ & $<10^{-10}$ &$<10^{-10}$ & $<10^{-10}$ & $<10^{-10}$  \\  
  $\pKhat(\hat{\sigma}_{\text{MED}})$ & 0.30 & $0.31$ & 0.43 & 0.12 & 0.12 &$0.002$ &$0.10$ &0.005 &0.04  &0.05\\
\end{tabular}}}
\vspace{3mm}
{\resizebox{\linewidth}{!}{%
\begin{tabular}{ccccccccccc}
  $H_0$  & $\bar{{\mu}}_1 = \bar{{\mu}}_2$ & 
  $\bar{{\mu}}_1 = \bar{{\mu}}_3$ & $\bar{{\mu}}_1 = \bar{{\mu}}_4$ & $\bar{{\mu}}_1 = \bar{{\mu}}_5$ &  $\bar{{\mu}}_2 = \bar{{\mu}}_3$ & $\bar{{\mu}}_2 = \bar{{\mu}}_4$ &
  $\bar{{\mu}}_2 = \bar{{\mu}}_5$ & $\bar{{\mu}}_3 = \bar{{\mu}}_4$ & $\bar{{\mu}}_3 = \bar{{\mu}}_5$ & $\bar{{\mu}}_4 = \bar{{\mu}}_5$ \\
  $p_{\text{Naive}}$ & $<10^{-10}$ & $<10^{-10}$ & $<10^{-10}$ & $<10^{-10}$ & $<10^{-10}$ & $<10^{-10}$ & $<10^{-10}$ &$<10^{-10}$ & $<10^{-10}$ & $<10^{-10}$  \\
  $\pKhat(\hat{\sigma}_{\text{MED}})$ &  $4.0\times 10^{-4}$ & $<10^{-10}$ & $<10^{-10}$ & $<10^{-10}$ & $<10^{-10}$ & $<10^{-10}$ & $<10^{-10}$ &$<10^{-10}$ & $5.0\times10^{-8}$ & $<10^{-10}$ \\
\end{tabular}}
}
\label{tbl:single_cell_Neg}
\end{table}

Next, we construct a ``cluster'' dataset by randomly sampling 400 each of memory T cells, B cells, naive T cells, natural killer cells, and monocytes from the $43,259$ cells; thus, $n=2,000$ and $q=500$. After whitening the data, we applied $k$-means clustering to obtain five clusters. We see that these clusters approximately correspond to the five different cell types (cluster 1: 82.5\% naive T cells; cluster 2: 95.3\% memory T cells; cluster 3: 99.2\% B cells; cluster 4: 91.5\% nature killer cells; cluster 5: 83.3\% monocytes); estimated clusters are visualized in the right panel of Figure~\ref{fig:single_cell_visualization} in Appendix~\ref{appendix:additional_real_data}. We evaluate the $p$-values $p_{\text{Naive}}$ and $\pKhat(\hat{\sigma}_{\text{MED}})$ for all pairs of estimated clusters, and display results in the bottom panel of Table~\ref{tbl:single_cell_Neg}. Both sets of $p$-values are extremely small on this dataset, which suggests that the test based on our $p$-value has substantial power to reject the null hypothesis when it does not hold.

\section{Discussion}
\label{section:discussion}

We have proposed a test for a difference in means between two clusters estimated from $k$-means clustering, under \eqref{eq:data_gen}. Here, we outline several future research directions.

While the $p$-value in \eqref{eq:p_val_k_means_chen} leads to   selective Type I error control, it conditions on more information than is used to construct the hypothesis in \eqref{eq:null_clustering_mean}. In practice, data analysts likely only make use of the final cluster assignments (leading to the $p$-value in \eqref{eq:p_val_k_means_ideal}), as opposed to all the intermediate assignments (leading to the $p$-value in \eqref{eq:p_val_k_means_chen}). Empirically, conditioning on too much information results in a loss of power~\citep{Fithian2014-ow,jewell2019testing,Liu2018-zx}. In future work, we will investigate the possibility of leveraging recent developments in selective inference~\citep{Chen2021-hk,Le_Duy2021-iy,jewell2019testing} to compute the ``ideal'' $p$-value \eqref{eq:p_val_k_means_ideal}.

We could also consider extending our proposal to other data generating models. The normality assumption in \eqref{eq:data_gen} is critical to the proof of Proposition~\ref{prop:single_param_p}, because it guarantees that under $H_0$ in \eqref{eq:null_clustering_mean}, $\Vert {X}^\top \nu \Vert_2$, $\text{dir}({X}^\top \nu)$, and ${\Pi}_{\nu}^\perp {X}$ are pairwise independent. However, this normality assumption is often violated in practice; for instance, in single-cell genomics, the data are count-valued and the variance of gene expression levels varies drastically with the mean expression levels of that gene~\citep{Stuart2019-on,Eling2018-qy}. This has motivated some authors to work with alternative models for gene expression including Poisson~\citep{Witten2011-hv}, negative binomial~\citep{Risso2018-ew}, and curved normal~\citep{Lin2021-bw}. To extend our framework to other exponential family distributions, we may be able to leverage recent proposals to decompose ${X}$ into $f({X})$ and $g({X})$ such that both $f({X})$ and $g({X})|f({X})$ have a known, computationally-tractable distribution~\citep{Rasines2021-jc,Leiner2021-es}.

\section{Acknowledgments}
This work was partially supported by National Institutes of Health grants and a Simons Investigator Award to Daniela Witten.


\bibliographystyle{biometrika}
\bibliography{./kmeans_ref_biometrika.bib}
 
\clearpage

\begin{center}
{\LARGE \textbf{Supplementary material for\\
 ``Selective inference for $k$-means clustering''}}
\end{center}

\pagenumbering{arabic}



\appendix
\section{Appendix}
\renewcommand{\theequation}{A.\arabic{equation}}

\subsection{Proof of Proposition~\ref{prop:single_param_p}}
\label{appendix:proof_p_val_prop}

The proof of Proposition~\ref{prop:single_param_p} is similar to the proof of Theorem 1 in \citet{Gao2020-yt}, the proof of Theorem 3.1 in \citet{Loftus2014-eq}, the proof of Lemma 1 in \citet{Yang2016-km}, and the proof of Theorem 3.1 in \citet{Chen2020-rh}. 

For any non-zero $\nu\in\mathbb{R}^n$ and ${X} \in\mathbb{R}^{n\times q}$, we have that 
\begin{align} 
\label{eq:perp_nu_identity}
{X} = {\Pi}_\nu^\perp {X} + (\textbf{I}_n - {\Pi}_\nu^\perp) {X} = {\Pi}_\nu^\perp {X} + \frac{\nu\nu^\top{X}}{\Vert \nu \Vert_2^2} = {\Pi}_\nu^\perp {X} + \qty(\frac{\Vert {X}^\top \nu \Vert_2}{\Vert \nu \Vert_2^2}) \nu \qty{\text{dir}\qty({X}^\top \nu)}^\top.
\end{align} 

\begin{lemma}
\label{lemma:mutual_indep}
Under \eqref{eq:data_gen} and $H_0:{\mu}^\top \nu = 0_q$, we have that $\Vert {X}^{\top}\nu \Vert_2$, ${\Pi}_\nu^\perp {X}$, and $\emph{\text{dir}}({X}^{\top}\nu)$ are pairwise independent.
\end{lemma}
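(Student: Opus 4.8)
The plan is to exploit the joint Gaussianity of all linear functionals of ${X}$ together with the spherical symmetry of the law of ${X}^\top\nu$ under $H_0$. First I would record that under \eqref{eq:data_gen} the entries $X_{ij}$ are mutually independent with $X_{ij}\sim N(\mu_{ij},\sigma^2)$; in particular the rows ${X}_1,\ldots,{X}_n$ are independent $q$-vectors and distinct columns of ${X}$ are independent. Since ${X}^\top\nu=\sum_i \nu_i {X}_i$ and ${\Pi}_\nu^\perp {X}$ are both linear images of ${X}$, the pair $\qty({X}^\top\nu,\,{\Pi}_\nu^\perp {X})$ is jointly Gaussian. Under $H_0$, $\E({X}^\top\nu)={\mu}^\top\nu=0_q$ and, because the rows are independent, $\cov({X}^\top\nu)=\sum_i \nu_i^2\,\sigma^2 \mathbf{I}_q=\sigma^2\Vert\nu\Vert_2^2\,\mathbf{I}_q$, so ${X}^\top\nu\sim N(0_q,\sigma^2\Vert\nu\Vert_2^2\mathbf{I}_q)$.

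Next I would prove that ${X}^\top\nu$ and ${\Pi}_\nu^\perp {X}$ are independent. As they are jointly Gaussian it is enough to show they are uncorrelated. The $k$th coordinate of ${X}^\top\nu$ depends only on column $k$ of ${X}$, and the $(l,j)$ entry of ${\Pi}_\nu^\perp {X}$ only on column $j$; since distinct columns are independent the only potentially nonzero covariances arise for $j=k$, where $\cov\qty(({\Pi}_\nu^\perp {X})_{lk},({X}^\top\nu)_k)=\sigma^2\,({\Pi}_\nu^\perp\nu)_l=0$ because ${\Pi}_\nu^\perp\nu=0_n$. Hence ${X}^\top\nu\perp{\Pi}_\nu^\perp {X}$, and therefore any measurable function of ${X}^\top\nu$ — in particular $\Vert{X}^\top\nu\Vert_2$ and $\text{dir}({X}^\top\nu)$ — is independent of ${\Pi}_\nu^\perp {X}$. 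This establishes two of the three required pairwise independences.

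It remains to show $\Vert{X}^\top\nu\Vert_2\perp\text{dir}({X}^\top\nu)$. Writing $Z={X}^\top\nu\sim N(0_q,\tau^2\mathbf{I}_q)$ with $\tau=\sigma\Vert\nu\Vert_2$, this is the standard fact that the magnitude $\Vert Z\Vert_2$ and the direction $Z/\Vert Z\Vert_2$ of an isotropic Gaussian are independent: the event $\{Z=0_q\}$ has probability zero, so $\text{dir}(Z)$ is almost surely well-defined, and on passing to polar coordinates the density of $Z$ factors as a function of $\Vert Z\Vert_2$ alone times the uniform surface measure, so the joint density of $\qty(\Vert Z\Vert_2,\,Z/\Vert Z\Vert_2)$ factors. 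Combining the three statements yields pairwise independence of $\Vert{X}^\top\nu\Vert_2$, ${\Pi}_\nu^\perp {X}$, and $\text{dir}({X}^\top\nu)$.

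The argument is essentially routine; the only mild points of care are keeping track of the column structure of ${X}$ when computing the cross-covariance with ${\Pi}_\nu^\perp {X}$, and invoking spherical symmetry (rather than a naive coordinatewise independence argument, which would be false) to decouple the radius and direction of ${X}^\top\nu$. I do not anticipate a genuine obstacle.
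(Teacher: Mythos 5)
Your proposal is correct and follows essentially the same route as the paper's proof: independence of $X^{\top}\nu$ and $\Pi_\nu^\perp X$ via $\Pi_\nu^\perp\nu = 0_n$ and Gaussianity, then independence of the norm and direction via rotational invariance of $X^{\top}\nu \sim \mathcal{N}(0_q,\sigma^2\Vert\nu\Vert_2^2\mathbf{I}_q)$ under $H_0$. The only difference is that you write out the cross-covariance computation and the polar-coordinate factorization explicitly, whereas the paper cites the matrix normal properties and a standard reference for spherical symmetry.
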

\begin{proof}
We first prove that ${X}^{\top}\nu $ is independent of ${\Pi}_\nu^\perp {X}$. The definition of ${\Pi}_\nu^\perp$ implies that ${\Pi}_\nu^\perp \nu = 0_n$, and it follows from the properties of the matrix normal distribution that ${\Pi}_\nu^\perp {X} $ and ${X}^\top \nu$ are independent. Therefore, $\Vert {X}^{\top}\nu \Vert_2$ and  $\text{dir}({X}^{\top}\nu)$ are independent of ${\Pi}_\nu^\perp {X} $ as well, since both are  functions of ${X}^\top \nu$. 

Next, we will show that $\Vert {X}^{\top}\nu \Vert_2$ and $\text{dir}({X}^{\top}\nu)$ are independent. Under \eqref{eq:data_gen} and $H_0:{\mu}^\top \nu = 0_q$, we have that ${X}^{\top}\nu \sim \mathcal{N}(0_q, \sigma^2\Vert\nu\Vert_2^2\textbf{I}_q)$. It follows that ${X}^{\top}\nu$ is rotationally invariant, and therefore $\Vert {X}^{\top}\nu \Vert_2$ is independent of $\text{dir}({X}^{\top}\nu)$ (see, e.g., Proposition 4.1 and Corollary 4.3 of \citet{Bilodeau1999-tl}).
\end{proof}

We now proceed to prove the statement in \eqref{eq:p_val_single_param}. Recalling the definition of $\pK$ in \eqref{eq:p_val_k_means_chen}, under $H_0:{\mu}^\top \nu = 0_q$ with $\nu$ defined in \eqref{eq:nu_def}, we have that
{{\small\begin{equation}
\label{eq:single_param_block}
\begin{aligned} 
 \hspace{-12mm}\pK &= \pr_{H_0}\left[ \Vert {X}^{\top}\nu \Vert_2 \geq  \Vert {x}^{\top}\nu  \Vert_2 \;\middle\vert\;  \bigcap_{t=0}^{T} \bigcap_{i=1}^{n}\left\{c_i^{(t)}\left({X}\right) = c_i^{(t)}\left({x}\right)\right\},\,{\Pi}_{\nu}^\perp {X} = {\Pi}_{\nu}^\perp {x},\, \text{dir}({X}^{\top}\nu) =  \text{dir}({x}^{\top}\nu)\right] \\
  \hspace{-8mm}&\overset{a.}{=}\pr_{H_0}\Bigg[ \Vert {X}^{\top}\nu \Vert_2 \geq  \Vert {x}^{\top}\nu  \Vert_2 \;\Bigg\vert\;  \bigcap_{t=0}^{T} \bigcap_{i=1}^{n}\left\{c_i^{(t)}\left({\Pi}_\nu^\perp {X} + \qty(\frac{\Vert {X}^\top \nu \Vert_2}{\Vert \nu \Vert_2^2}) \nu \qty{\text{dir}\qty({X}^\top \nu)}^\top\right) = c_i^{(t)}\left({x}\right)\right\},\,\\
  \hspace{-8mm}&\quad{\Pi}_{\nu}^\perp {X} = {\Pi}_{\nu}^\perp {x},\, \text{dir}({X}^{\top}\nu) =  \text{dir}({x}^{\top}\nu)\Bigg]  \\
 \hspace{-8mm} &\overset{b.}{=}\pr_{H_0}\Bigg[ \Vert {X}^{\top}\nu \Vert_2 \geq  \Vert {x}^{\top}\nu  \Vert_2 \;\Bigg\vert\;  \bigcap_{t=0}^{T} \bigcap_{i=1}^{n}\left\{c_i^{(t)}\left({\Pi}_{\nu}^\perp {x} + \qty(\frac{\Vert {X}^\top \nu \Vert_2}{\Vert \nu \Vert_2^2}) \nu \qty{\text{dir}\qty({x}^\top \nu)}^\top\right) = c_i^{(t)}\left({x}\right)\right\},\,\\
  \hspace{-8mm}&\quad{\Pi}_{\nu}^\perp {X} = {\Pi}_{\nu}^\perp {x},\, \text{dir}({X}^{\top}\nu) =  \text{dir}({x}^{\top}\nu)\Bigg]  \\
   \hspace{-8mm}&\overset{c.}{=}\pr_{H_0}\Bigg[ \Vert {X}^{\top}\nu \Vert_2 \geq  \Vert {x}^{\top}\nu  \Vert_2 \;\Bigg\vert\;  \bigcap_{t=0}^{T} \bigcap_{i=1}^{n}\left\{c_i^{(t)}\left({\Pi}_{\nu}^\perp {x} + \qty(\frac{\Vert {X}^\top \nu \Vert_2}{\Vert \nu \Vert_2^2}) \nu \qty{\text{dir}\qty({x}^\top \nu)}^\top\right) = c_i^{(t)}\left({x}\right)\right\}\Bigg] \\
   \hspace{-8mm}&\overset{d.}{=}\pr_{H_0}\Bigg[ \Vert {X}^{\top}\nu \Vert_2 \geq  \Vert {x}^{\top}\nu  \Vert_2 \;\Bigg\vert\;  \bigcap_{t=0}^{T} \bigcap_{i=1}^{n}\left\{c_i^{(t)}\left({x}'\qty(\Vert {X}^{\top}\nu \Vert_2 )\right) = c_i^{(t)}\left({x}\right)\right\}\Bigg].
\end{aligned} 
\end{equation}
}}
Here, step $a.$ follows from substituting ${X}$ with the expression in \eqref{eq:perp_nu_identity}, and step $b.$ follows from replacing ${\Pi}_\nu^\perp {X}$ and $\text{dir}({X}^{\top}\nu)$ with ${\Pi}_\nu^\perp {x}$ and $\text{dir}({x}^{\top}\nu)$, respectively. Next, in step $c.$, we used Lemma~\ref{lemma:mutual_indep}. Finally, step $d.$ follows from the definition of ${x}'(\phi)$ in \eqref{eq:xphi}.

Note that under \eqref{eq:data_gen} and $H_0:{\mu}^\top \nu = 0_q$, we have that $\Vert {X}^\top \nu \Vert_2 \sim \sigma \Vert \nu \Vert_2 \chi_q$, which concludes the proof of \eqref{eq:p_val_single_param}.

It remains to show that the test that rejects $H_0: {\mu}^\top \nu = 0$ when $\pK\leq \alpha$ controls the selective Type I error at level $\alpha$, in the sense of \eqref{eq:selective_type_1}. First of all, recall that we decided to test the  null hypothesis in \eqref{eq:null_clustering_mean} based on the output of Algorithm~\ref{algo:k_means_alt_min}. Therefore, $\pK$ controls the selective Type I error at level $\alpha$ if, for any $c_i^{(T)}({x}),\, i=1,\ldots,n$,
\begin{align}
\label{eq:appendix_selective_type_1}
\pr_{H_0}\qty[\text{reject $H_0$ at level $\alpha$} \;\middle\vert\; \bigcap_{i=1}^{n}\left\{c_i^{(T)}\left({X}\right) = c_i^{(T)}\left({x}\right)\right\}  ]\leq \alpha, \;\forall \alpha \in (0,1).
\end{align}
To prove \eqref{eq:appendix_selective_type_1}, we first note that the following holds for any $\alpha\in(0,1)$:
{\small
\begin{align}
\label{eq:proof_prob_int_trans}
\begin{split}
\hspace{-8mm}\pr_{H_0}&\left[ \pK(\Vert {X}^\top \nu\Vert_2)\leq \alpha \;\middle\vert\; \bigcap_{t=0}^{T} \bigcap_{i=1}^{n}\left\{c_i^{(t)}\left({X}\right) = c_i^{(t)}\left({x}\right)\right\},\,{\Pi}_{\nu}^\perp {X} = {\Pi}_{\nu}^\perp {x},\, \text{dir}({X}^{\top}\nu) =  \text{dir}({x}^{\top}\nu)\right]\\
  \hspace{-8mm}&\overset{a.}{=}\pr_{H_0}\Bigg[ \pK(\Vert {X}^\top \nu\Vert_2)\leq \alpha  \;\Bigg\vert\;  \bigcap_{t=0}^{T} \bigcap_{i=1}^{n}\left\{c_i^{(t)}\left({\Pi}_\nu^\perp {X} + \qty(\frac{\Vert {X}^\top \nu \Vert_2}{\Vert \nu \Vert_2^2}) \nu \qty{\text{dir}\qty({X}^\top \nu)}^\top\right) = c_i^{(t)}\left({x}\right)\right\},\,\\
  \hspace{-8mm}&\quad{\Pi}_{\nu}^\perp {X} = {\Pi}_{\nu}^\perp {x},\, \text{dir}({X}^{\top}\nu) =  \text{dir}({x}^{\top}\nu)\Bigg]  \\
 \hspace{-8mm} &\overset{b.}{=}\pr_{H_0}\Bigg[ \pK(\Vert {X}^\top \nu\Vert_2)\leq \alpha \;\Bigg\vert\;  \bigcap_{t=0}^{T} \bigcap_{i=1}^{n}\left\{c_i^{(t)}\left({\Pi}_{\nu}^\perp {x} + \qty(\frac{\Vert {X}^\top \nu \Vert_2}{\Vert \nu \Vert_2^2}) \nu \qty{\text{dir}\qty({x}^\top \nu)}^\top\right) = c_i^{(t)}\left({x}\right)\right\},\,\\
  \hspace{-8mm}&\quad{\Pi}_{\nu}^\perp {X} = {\Pi}_{\nu}^\perp {x},\, \text{dir}({X}^{\top}\nu) =  \text{dir}({x}^{\top}\nu)\Bigg]  \\
   \hspace{-8mm}&\overset{c.}{=}\pr_{H_0}\Bigg[\pK(\Vert {X}^\top \nu\Vert_2)\leq \alpha \;\Bigg\vert\;  \bigcap_{t=0}^{T} \bigcap_{i=1}^{n}\left\{c_i^{(t)}\left({\Pi}_{\nu}^\perp {x} + \qty(\frac{\Vert {X}^\top \nu \Vert_2}{\Vert \nu \Vert_2^2}) \nu \qty{\text{dir}\qty({x}^\top \nu)}^\top\right) = c_i^{(t)}\left({x}\right)\right\}\Bigg] \\
   \hspace{-8mm}&\overset{d.}{=}\pr_{H_0}\Bigg[ \pK(\Vert {X}^\top \nu\Vert_2)\leq \alpha  \;\Bigg\vert\;  \bigcap_{t=0}^{T} \bigcap_{i=1}^{n}\left\{c_i^{(t)}\left({x}'\qty(\Vert {X}^{\top}\nu \Vert_2 )\right) = c_i^{(t)}\left({x}\right)\right\}\Bigg] \\
   \hspace{-8mm}&\overset{e.}{=}\pr_{H_0}\qty[ 1-F_{q}^{\mathcal{S}_T}(\Vert {X}^\top \nu\Vert_2) \leq \alpha  \;\middle\vert\;  \bigcap_{t=0}^{T} \bigcap_{i=1}^{n}\left\{c_i^{(t)}\left({x}'\qty(\Vert {X}^\top \nu\Vert_2)\right) = c_i^{(t)}\left({x}\right)\right\} ] \\
   \hspace{-8mm}&\overset{f.}{=} \alpha. 
  \end{split}
\end{align}}
Here, steps $a.$ through $d.$ follow from the same line of argument in \eqref{eq:single_param_block}. Moreover, \eqref{eq:p_val_single_param} implies that, for a given sequence of cluster assignments $c_i^{(T)}({x}),\, i=1,\ldots,n$, $\pK$ is the survival function of a $\chi_q$ random variable, truncated to the set $\mathcal{S}_T$ defined in \eqref{eq:s_set}. Letting $F_{q}^{\mathcal{S}_T}(\cdot)$ denote the cumulative distribution function of this truncated $\chi_q$ random variable, we arrive at step $e.$ Finally, to prove $f.$, we first note that under $H_0$, the conditional cumulative distribution function of $\Vert {X}^\top \nu\Vert_2$ given $\bigcap_{t=0}^{T} \bigcap_{i=1}^{n}\left\{c_i^{(t)}\left({x}'\qty(\Vert {X}^\top \nu\Vert_2)\right) = c_i^{(t)}\left({x}\right)\right\}$ is exactly $F_{q}^{\mathcal{S}_T}$. The equality, therefore, follows from the probability integral transform, which states that for a continuous random variable $Z$, $F_{Z}(Z)$ follows the Uniform(0,1) distribution.

Finally, we have that
{
\begin{align*}
\pr_{H_0}&\qty[\pK(\Vert {X}^\top \nu\Vert_2)\leq \alpha \;\middle\vert\; \bigcap_{i=1}^{n}\left\{c_i^{(T)}\left({X}\right) = c_i^{(T)}\left({x}\right)\right\}  ] \\
&= \E_{H_0}\qty[1\qty{\pK(\Vert {X}^\top \nu\Vert_2)\leq \alpha} \;\middle\vert\; \bigcap_{i=1}^{n}\left\{c_i^{(T)}\left({X}\right) = c_i^{(T)}\left({x}\right)\right\} ]\\
&\overset{a.}{=} \E_{H_0}\Bigg( \E_{H_0}\qty[1\qty{\pK(\Vert {X}^\top \nu\Vert_2)\leq \alpha }\;\Bigg\vert\; \bigcap_{t=0}^{T} \bigcap_{i=1}^{n}\left\{c_i^{(t)}\left({X}\right) = c_i^{(t)}\left({x}\right)\right\},\,{\Pi}_{\nu}^\perp {X} = {\Pi}_{\nu}^\perp {x},\, \text{dir}({X}^{\top}\nu) =  \text{dir}({x}^{\top}\nu)  ]\\
&\quad\;\Bigg\vert\; \bigcap_{i=1}^{n}\left\{c_i^{(T)}\left({X}\right) = c_i^{(T)}\left({x}\right)\right\} \Bigg) \\
&\overset{b.}{=} \E_{H_0}\qty[\alpha\;\middle\vert\; \bigcap_{i=1}^{n}\left\{c_i^{(T)}\left({X}\right) = c_i^{(T)}\left({x}\right)\right\} ] \\
&=\alpha.
\end{align*}
}
In the proof above, $a.$ follows from the tower property of conditional expectation, and $b.$ is a direct consequence of \eqref{eq:proof_prob_int_trans}. 

Therefore, we conclude that the test based on $\pK$ controls the selective Type I error in \eqref{eq:selective_type_1}, which completes the proof of Proposition~\ref{prop:single_param_p}.


\subsection{Proof of Proposition~\ref{prop:phi_ineq}}
\label{appendix:proof_S_T_induction}

We will derive the expression for $\mathcal{S}_T$ in Proposition~\ref{prop:phi_ineq} using an induction argument.

The following two claims (Lemmas~\ref{lemma:base_t_0} and \ref{lemma:base_t_1}) serve as the ``base cases" for the proof.
\begin{lemma}
\label{lemma:base_t_0}
Recall that  $c_i^{(t)}({x})$ denotes the cluster to which the $i$th observation is assigned during the $t$th iteration of step 3b. of Algorithm~\ref{algo:k_means_alt_min} applied to data ${x}$, and that $m^{(0)}_{k}({x})$ denotes the $k$th centroid sampled from ${x}$ during step 1 of Algorithm~\ref{algo:k_means_alt_min}. For $\mathcal{S}_0$ defined as \begin{equation}
\label{eq:s_0_c}
\mathcal{S}_0 = \left\{ \phi\in\mathbb{R}: \bigcap_{i=1}^{n} \left\{c_i^{(0)}\left({x}'(\phi)\right) = c_i^{(0)}\left({x}\right)\right\}  \right\},
\end{equation}
we have that
\begin{equation}
\label{eq:lemma_base}
\mathcal{S}_0 = \bigcap_{i=1}^{n}\bigcap_{k=1}^{K} \qty{\phi:  \left\Vert \qty[{x}'(\phi)]_{i} - m^{(0)}_{c_i^{(0)}({x})}({x}'(\phi)) \right\Vert_2^2 \leq  \left\Vert \qty[{x}'(\phi)]_{i} - m^{(0)}_{k}({x}'(\phi)) \right\Vert_2^2}.
\end{equation} 
\end{lemma}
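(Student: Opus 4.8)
The plan is to unwind the definition of the initialization step of Algorithm~\ref{algo:k_means_alt_min} and rewrite, for each $i$, the single equality $c_i^{(0)}(x'(\phi)) = c_i^{(0)}(x)$ as a conjunction of $K$ pairwise distance comparisons; intersecting over $i$ then yields \eqref{eq:lemma_base}. This will also serve as the $t=0$ base case for the induction behind Proposition~\ref{prop:phi_ineq}.

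First I would record the key structural fact that the $K$ indices sampled in step~1 of Algorithm~\ref{algo:k_means_alt_min} depend only on $n$, $K$, and the random seed $s$, and not on the \emph{values} of the observations. Writing $i_1,\dots,i_K$ for those indices, this means $m_k^{(0)}(x'(\phi)) = [x'(\phi)]_{i_k}$ for every $\phi$, so the centroid map $\phi\mapsto m_k^{(0)}(x'(\phi))$ is well defined and the label $k$ attached to the $k$th initial centroid is the same for $x$ and for $x'(\phi)$; this is precisely what makes the cross-comparison $c_i^{(0)}(x'(\phi)) = c_i^{(0)}(x)$ meaningful.

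Next, fix $i$ and set $k^\star := c_i^{(0)}(x)$. By step~2 of Algorithm~\ref{algo:k_means_alt_min}, $c_i^{(0)}(x'(\phi)) = \operatorname{argmin}_{1\le k\le K}\Vert [x'(\phi)]_i - m_k^{(0)}(x'(\phi))\Vert_2^2$, so $c_i^{(0)}(x'(\phi)) = k^\star$ exactly when $k^\star$ attains this minimum, i.e. when $\Vert [x'(\phi)]_i - m_{k^\star}^{(0)}(x'(\phi))\Vert_2^2 \le \Vert [x'(\phi)]_i - m_k^{(0)}(x'(\phi))\Vert_2^2$ for all $k=1,\dots,K$ (for $k=k^\star$ the inequality is trivial). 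Since $k^\star = c_i^{(0)}(x)$, this is membership in $\bigcap_{k=1}^K\{\phi:\Vert[x'(\phi)]_i - m^{(0)}_{c_i^{(0)}(x)}(x'(\phi))\Vert_2^2\le\Vert[x'(\phi)]_i - m^{(0)}_k(x'(\phi))\Vert_2^2\}$. Taking the intersection over $i=1,\dots,n$ produces $\mathcal{S}_0$ on the left and the right-hand side of \eqref{eq:lemma_base} on the right, as desired.

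The only delicate point is the tie-breaking convention in $\operatorname{argmin}$: if $[x'(\phi)]_i$ is equidistant from $m^{(0)}_{k^\star}(x'(\phi))$ and some other $m^{(0)}_{k'}(x'(\phi))$ with $k' < k^\star$, a smallest-index rule assigns $i$ to $k'$ although the displayed ``$\le$'' characterization would still include $\phi$. However, by Lemma~\ref{lemma:simple_norm} each such coincidence forces a (generically nondegenerate) quadratic equation in $\phi$, so the set of such $\phi$ has Lebesgue measure zero and does not affect the value of $\pK$; alternatively, one adopts the convention that $i$ is assigned to $c_i^{(0)}(x)$ whenever it is among the minimizers, under which \eqref{eq:lemma_base} is an exact set identity. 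I expect this boundary bookkeeping to be the only real obstacle — the remainder is a direct translation of the assignment step into inequalities.
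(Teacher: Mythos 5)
Your proof is correct and takes essentially the same route as the paper's: both unwind step 2 of Algorithm~\ref{algo:k_means_alt_min} to rewrite each equality $c_i^{(0)}\left({x}'(\phi)\right) = c_i^{(0)}\left({x}\right)$ as the $K$ distance inequalities against the (seed-determined, hence $\phi$-independent in index) initial centroids, and then intersect over $i$; the paper merely splits your biconditional into two explicit set inclusions. Your remark on argmin tie-breaking is a genuine subtlety that the paper's proof passes over silently, and your measure-zero / tie-breaking-convention resolution is the appropriate fix.
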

\begin{proof}

We first prove that the set in \eqref{eq:s_0_c} is a subset of the set in \eqref{eq:lemma_base}. For an arbitrary $\phi_0\in\eqref{eq:s_0_c}$ and $1 \leq i\leq n$, we have that
\begin{align*}
 &c_i^{(0)}({x}'(\phi_0)) = \underset{1\leq k \leq K}{\text{argmin}}  \left\Vert \qty[{x}'(\phi_0)]_{i} - m_{k}^{(0)}({x}'(\phi_0)) \right\Vert_2^2 \\
&\overset{a.}{\implies} \left\Vert \qty[{x}'(\phi_0)]_{i} - m_{c_i^{(0)}({x}'(\phi_0))}^{(0)}({x}'(\phi_0)) \right\Vert_2^2 \leq  \left\Vert \qty[{x}'(\phi_0)]_{i} - m_{k}^{(0)}({x}'(\phi_0)) \right\Vert_2^2,  k=1,\ldots,K \\
&\overset{b.}{\implies} \left\Vert \qty[{x}'(\phi_0)]_{i} - m_{c_i^{(0)}({x})}^{(0)}({x}'(\phi_0)) \right\Vert_2^2 \leq  \left\Vert \qty[{x}'(\phi_0)]_{i} - m_{k}^{(0)}({x}'(\phi_0)) \right\Vert_2^2, k=1,\ldots,K.
\end{align*}
Here, the first line follows from the definition of $c_i^{(0)}$ in step 2 of Algorithm~\ref{algo:k_means_alt_min}, and step $a.$ follows from the definition of the argmin function. Step $b.$ follows from the assumption that $\phi_0 \in\eqref{eq:s_0_c}$ satisfies $ c_i^{(0)}\left({x}'(\phi_0)\right) = c_i^{(0)}\left({x}\right)$. 
Because this holds for an arbitrary $1\leq i\leq n$, we have proven that $\phi_0 \in \eqref{eq:s_0_c} \implies \phi_0 \in \eqref{eq:lemma_base}$; or equivalently, $\eqref{eq:s_0_c}\subseteq \eqref{eq:lemma_base}$.

We proceed to prove the other direction. For an arbitrary $\phi_0\in\eqref{eq:lemma_base}$ and an arbitrary $1\leq i\leq n$, we have that
\begin{align*}
 &\left\Vert \qty[{x}'(\phi_0)]_{i} - m^{(0)}_{c_i^{(0)}({x})}({x}'(\phi_0)) \right\Vert_2^2 \leq  \left\Vert \qty[{x}'(\phi_0)]_{i} - m^{(0)}_{k}({x}'(\phi_0)) \right\Vert_2^2, k=1,\ldots,K \\
 &\overset{a.}{\implies} c_i^{(0)}({x}) = \underset{1\leq k \leq K}{\text{argmin}}  \left\Vert \qty[{x}'(\phi_0)]_{i} - m^{(0)}_{k}({x}'(\phi_0)) \right\Vert_2^2\\
&\overset{b.}{\implies} c_i^{(0)}({x}) = c_i^{(0)}({x}'(\phi_0)).
\end{align*}
Here, step $a.$ follows from the definition of argmin, and step $b.$ follows from combining the definition of $c_i^{(0)}({x}'(\phi))$ in step 2 of Algorithm~\ref{algo:k_means_alt_min}. We conclude that $\phi_0 \in \eqref{eq:lemma_base} \implies \phi_0 \in \eqref{eq:s_0_c}$.

Combining these two directions, we have proven that $\eqref{eq:lemma_base} =  \eqref{eq:s_0_c}$.
\end{proof}

\begin{lemma}
\label{lemma:base_t_1}
Recall that $c_i^{(t)}({x})$ denotes the cluster to which the $i$th observation is assigned in the $t$th iteration of step 3b. of Algorithm~\ref{algo:k_means_alt_min} applied to data ${x}$, and that $m^{(0)}_{k}({x})$ denotes the $k$th centroid sampled from ${x}$ during step 1 of Algorithm~\ref{algo:k_means_alt_min}. For $\mathcal{S}_1$ defined as \begin{equation}
\label{eq:s_1_c}
\mathcal{S}_1 = \left\{ \phi\in\mathbb{R}: \bigcap_{t=0}^1\bigcap_{i=1}^{n} \left\{c_i^{(t)}\left({x}'(\phi)\right) = c_i^{(t)}\left({x}\right)\right\}  \right\},
\end{equation}
and $w_{i}^{(t)}(k)$ defined in \eqref{eq:w_i_t_k},
we have that
{ \small
\begin{equation}
\begin{aligned}
\label{eq:lemma_base_t_1}
\mathcal{S}_1 &=  \bigcap_{i=1}^{n}\bigcap_{k=1}^{K} \qty{\phi:  \left\Vert \qty[{x}'(\phi)]_{i} - m^{(0)}_{c_i^{(0)}({x})}({x}'(\phi)) \right\Vert_2^2 \leq  \left\Vert \qty[{x}'(\phi)]_{i} - m^{(0)}_{k}({x}'(\phi)) \right\Vert_2^2} \cap\\ 
&\hspace{-3mm} \qty(\bigcap_{i=1}^{n}\bigcap_{k=1}^{K}\qty{ \phi: \left\Vert \qty[{x}'(\phi)]_i -  \sum_{i'=1}^n w_{i'}^{(0)}\qty(c_i^{(1)}({x}))\qty[{x}'(\phi)]_{i'} \right\Vert_2^2 \leq \left\Vert \qty[{x}'(\phi)]_i - \sum_{i'=1}^n w_{i'}^{(0)}\qty(k) \qty[{x}'(\phi)]_{i'}\right\Vert_2^2  }).
\end{aligned}
\end{equation}}
\end{lemma}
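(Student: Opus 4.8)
The plan is to split $\mathcal{S}_1$ into the contribution of the initialization step ($t=0$) and the contribution of the first Lloyd update ($t=1$), and then reduce the latter to a statement about argmins, exactly as in the proof of Lemma~\ref{lemma:base_t_0}. Concretely, I would write
\[
\mathcal{S}_1 = \mathcal{S}_0 \cap \mathcal{A}_1, \qquad \mathcal{A}_1 := \left\{\phi\in\mathbb{R} : \bigcap_{i=1}^{n} \left\{c_i^{(1)}(x'(\phi)) = c_i^{(1)}(x)\right\}\right\},
\]
and recall that Lemma~\ref{lemma:base_t_0} already identifies $\mathcal{S}_0$ with the first set on the right-hand side of \eqref{eq:lemma_base_t_1}. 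Hence it suffices to show that $\mathcal{S}_0 \cap \mathcal{A}_1 = \mathcal{S}_0 \cap \mathcal{B}_1$, where $\mathcal{B}_1$ denotes the second set on the right-hand side of \eqref{eq:lemma_base_t_1}; that is, for every $\phi_0 \in \mathcal{S}_0$, membership in $\mathcal{A}_1$ is equivalent to membership in $\mathcal{B}_1$. This lemma is exactly the $t=1$ instance of the inductive step behind Proposition~\ref{prop:phi_ineq}, so the argument below is the prototype for the general induction.

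The key observation is about the iteration-$1$ centroids. By step 3a.\ of Algorithm~\ref{algo:k_means_alt_min}, the centroid of cluster $k$ computed on $x'(\phi)$ is $m_k^{(1)}(x'(\phi)) = \big(\sum_{i' : c_{i'}^{(0)}(x'(\phi)) = k} [x'(\phi)]_{i'}\big)\big/ |\{i' : c_{i'}^{(0)}(x'(\phi)) = k\}|$, which depends on the iteration-$0$ assignments \emph{of} $x'(\phi)$, not those of $x$. However, when $\phi_0 \in \mathcal{S}_0$ we have $c_{i'}^{(0)}(x'(\phi_0)) = c_{i'}^{(0)}(x)$ for all $i'$, and so this expression collapses to $\sum_{i'=1}^{n} w_{i'}^{(0)}(k)\,[x'(\phi_0)]_{i'}$ with $w_{i'}^{(0)}(k)$ as in \eqref{eq:w_i_t_k}. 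This is precisely what makes the weights in the lemma statement refer to the assignments of $x$ rather than of $x'(\phi)$, and it is the one place where the restriction to $\mathcal{S}_0$ is genuinely used.

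With that identification in hand, the rest is a verbatim repetition of the two-directional argument in the proof of Lemma~\ref{lemma:base_t_0}, now at iteration $1$: for $\phi_0\in\mathcal{S}_0$, step 3b.\ gives $c_i^{(1)}(x'(\phi_0)) = \underset{1\le k\le K}{\text{argmin}}\,\| [x'(\phi_0)]_i - m_k^{(1)}(x'(\phi_0))\|_2^2$; substituting the collapsed centroid formula, the event $\{c_i^{(1)}(x'(\phi_0)) = c_i^{(1)}(x)\}$ holds for all $i$ if and only if for all $i$ and $k$ the inequality $\| [x'(\phi_0)]_i - \sum_{i'} w_{i'}^{(0)}(c_i^{(1)}(x))\,[x'(\phi_0)]_{i'}\|_2^2 \le \| [x'(\phi_0)]_i - \sum_{i'} w_{i'}^{(0)}(k)\,[x'(\phi_0)]_{i'}\|_2^2$ holds, which is exactly $\phi_0 \in \mathcal{B}_1$. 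Intersecting both characterizations with $\mathcal{S}_0$ and invoking Lemma~\ref{lemma:base_t_0} for the first factor yields \eqref{eq:lemma_base_t_1}. I expect the main obstacle to be purely bookkeeping — keeping straight which quantities are evaluated at $x$ versus $x'(\phi)$, and in particular noticing that the centroid-simplification step forces the two sets in \eqref{eq:lemma_base_t_1} to be analyzed jointly (under the $t=0$ constraint) rather than independently.
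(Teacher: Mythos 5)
Your proposal is correct and follows essentially the same route as the paper's proof: both rely on Lemma~\ref{lemma:base_t_0} to handle the $t=0$ factor and on the observation that, once $c_{i'}^{(0)}(x'(\phi_0))=c_{i'}^{(0)}(x)$ for all $i'$, the iteration-$1$ centroids of $x'(\phi_0)$ collapse to the $w^{(0)}$-weighted sums, after which the argmin characterization of step 3b.\ gives the inequalities. Your packaging of the two inclusions as a single equivalence $\mathcal{S}_0\cap\mathcal{A}_1=\mathcal{S}_0\cap\mathcal{B}_1$ is just a tidier arrangement of the paper's two one-directional implication chains, which each invoke the $t=0$ agreement at the corresponding step.
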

\begin{proof}

We first prove that \eqref{eq:s_1_c} $\subseteq$ \eqref{eq:lemma_base_t_1}. For an arbitrary $\phi_0 \in \eqref{eq:s_1_c}$ and an arbitrary  $1\leq i\leq n$, 
 we have that
 {\footnotesize\begin{align*}
 \hspace{-5mm}&c_i^{(1)}({x}'(\phi_0)) = \underset{1\leq k \leq K}{\text{argmin}}  \left\Vert \qty[{x}'(\phi_0)]_{i} - m^{(1)}_{k}({x}'(\phi_0)) \right\Vert_2^2 \\
 \hspace{-5mm}&\overset{a.}{\implies}  c_i^{(1)}({x}) = \underset{1\leq k \leq K}{\text{argmin}} \left\Vert \qty[{x}'(\phi_0)]_{i} - m^{(1)}_{k}({x}'(\phi_0)) \right\Vert_2^2 \\
  \hspace{-5mm}&\overset{b.}{\implies}  c_i^{(1)}({x}) = \underset{1\leq k \leq K}{\text{argmin}}  \left\Vert \qty[{x}'(\phi_0)]_i - \frac{ \sum_{i'=1}^n 1\qty{c_{i'}^{(0)}({x}'(\phi_0))= k}\qty[{x}'(\phi_0)]_{i'} }{\sum_{i'=1}^n 1\qty{c_{i'}^{(0)}({x}'(\phi_0)) = k}} \right\Vert_2^2 \\
  \hspace{-5mm}&\overset{c.}{\implies}\left\Vert \qty[{x}'(\phi_0)]_i - \frac{\sum_{i'=1}^n 1\left\{c_{i'}^{(0)}({x}'(\phi_0)) = c_i^{(1)}({x})\right\}\qty[{x}'(\phi_0)]_{i'}}{\sum_{i'=1}^n 1\qty{c_{i'}^{(0)}({x}'(\phi_0))= c_{i}^{(1)}({x})}}   \right\Vert_2^2 
 \leq \left\Vert \qty[{x}'(\phi_0)]_i - \frac{ \sum_{i'=1}^n 1\qty{c_{i'}^{(0)}({x}'(\phi_0))= k}\qty[{x}'(\phi_0)]_{i'} }{\sum_{i'=1}^n 1\qty{c_{i'}^{(0)}({x}'(\phi_0)) = k}} \right\Vert_2^2 , k=1,\ldots,K\\
 \hspace{-5mm}&\overset{d.}{\implies}
 \left\Vert \qty[{x}'(\phi_0)]_i - \frac{\sum_{i'=1}^n 1\left\{c_{i'}^{(0)}({x}) = c_i^{(1)}({x})\right\}\qty[{x}'(\phi_0)]_{i'}}{\sum_{i'=1}^n 1\qty{c_{i'}^{(0)}({x})= c_{i}^{(1)}({x})}}   \right\Vert_2^2 
 \leq \left\Vert \qty[{x}'(\phi_0)]_i - \frac{ \sum_{i'=1}^n 1\qty{c_{i'}^{(0)}({x})= k}\qty[{x}'(\phi_0)]_{i'} }{\sum_{i'=1}^n 1\qty{c_{i'}^{(0)}({x}) = k}} \right\Vert_2^2 , k=1,\ldots,K\\
 \hspace{-5mm}&\overset{e.}{\implies} \left\Vert \qty[{x}'(\phi_0)]_i -  \sum_{i'=1}^n w_{i'}^{(0)}\qty(c_i^{(1)}({x}))\qty[{x}'(\phi_0)]_{i'} \right\Vert_2^2 \leq \left\Vert \qty[{x}'(\phi_0)]_i - \sum_{i'=1}^n w_{i'}^{(0)}\qty(k) \qty[{x}'(\phi_0)]_{i'}\right\Vert_2^2, k=1,\ldots,K.
 \end{align*} }
 In the equations above, the first line follows from step 3b. of Algorithm~\ref{algo:k_means_alt_min} with $t=0$. Next, step $a.$ follows from the definition of \eqref{eq:s_1_c}, which implies that $c_i^{(1)}({x}'(\phi_0)) = c_i^{(1)}({x})$. Step $b.$ is a direct consequence of step 3a. of Algorithm~\ref{algo:k_means_alt_min} with $t=0$. In steps $c.$ and $d.$, we used the definitions of the argmin function and \eqref{eq:s_1_c}. Finally, we apply the definition of $w_i^{(t)}$ in \eqref{eq:w_i_t_k} to get $e.$ Because this holds for an arbitrary $1\leq i\leq n$, $\phi_0\in\eqref{eq:s_1_c}$ implies that $\phi_0$ is an element of the second set in the intersection in \eqref{eq:lemma_base_t_1}. 

 Moreover, $\phi_0\in\eqref{eq:s_1_c}$ implies that $\phi_0 \in \eqref{eq:s_0_c}$, which, according to Lemma~\ref{lemma:base_t_0}, further implies that $\phi_0$ is an element 
of the first set in the intersection in \eqref{eq:lemma_base_t_1}.
 To summarize, we have proven that $\phi_0 \in \eqref{eq:s_1_c} \implies \phi_0 \in \eqref{eq:lemma_base_t_1}$, and as a result, $\eqref{eq:s_1_c} \subseteq \eqref{eq:lemma_base_t_1}$. 

Next, we prove that the set in \eqref{eq:lemma_base_t_1} is a subset of the set in \eqref{eq:s_1_c}. For an arbitrary $\phi_0 \in \eqref{eq:lemma_base_t_1}$ and an arbitrary $1\leq i\leq n$, we have that 
{\footnotesize
\begin{align*}
\hspace{-10mm}&\phi_0 \in \eqref{eq:lemma_base_t_1} \overset{a.}{\implies}  \left\Vert \qty[{x}'(\phi_0)]_i - \frac{\sum_{i'=1}^n 1\left\{c_{i'}^{(0)}({x}) = c_i^{(1)}({x})\right\}\qty[{x}'(\phi_0)]_{i'}}{\sum_{i'=1}^n 1\qty{c_{i'}^{(0)}({x})= c_{i}^{(1)}({x})}}   \right\Vert_2^2 \leq \left\Vert \qty[{x}'(\phi_0)]_i - \frac{ \sum_{i'=1}^n 1\qty{c_{i'}^{(0)}({x})= k}\qty[{x}'(\phi_0)]_{i'} }{\sum_{i'=1}^n 1\qty{c_{i'}^{(0)}({x}) = k}} \right\Vert_2^2, k=1,\ldots,K \\
\hspace{-10mm}&\overset{b.}{\implies}   \left\Vert \qty[{x}'(\phi_0)]_i - \frac{\sum_{i'=1}^n 1\left\{c_{i'}^{(0)}({x}'(\phi_0)) = c_i^{(1)}({x})\right\}\qty[{x}'(\phi_0)]_{i'}}{\sum_{i'=1}^n 1\qty{c_{i'}^{(0)}({x}'(\phi_0))= c_{i}^{(1)}({x})}}   \right\Vert_2^2 \leq \left\Vert \qty[{x}'(\phi_0)]_i - \frac{ \sum_{i'=1}^n 1\qty{c_{i'}^{(0)}({x}'(\phi_0))= k}\qty[{x}'(\phi_0)]_{i'} }{\sum_{i'=1}^n 1\qty{c_{i'}^{(0)}({x}'(\phi_0)) = k}} \right\Vert_2^2, k=1,\ldots,K \\
\hspace{-10mm}&\overset{c.}{\implies} c_{i}^{(1)}({x}) = \underset{1\leq k \leq K}{\text{argmin}}  \left\Vert \qty[{x}'(\phi_0)]_{i} - \frac{ \sum_{i'=1}^n 1\qty{c_{i'}^{(0)}({x}'(\phi_0))= k}\qty[{x}'(\phi_0)]_{i'} }{\sum_{i'=1}^n 1\qty{c_{i'}^{(0)}({x}'(\phi_0)) = k}} \right\Vert_2^2  \\
\hspace{-10mm}&\overset{d.}{\implies}  c_{i}^{(1)}({x}) = \underset{1\leq k \leq K}{\text{argmin}}  \left\Vert \qty[{x}'(\phi_0)]_{i} - m_k^{(1)}({x}'(\phi_0))\right\Vert_2^2  \\
\hspace{-10mm}&\overset{e.}{\implies} c_{i}^{(1)}({x}) = c_{i}^{(1)}({x}'(\phi_0)).
\end{align*} }
Here, step $a.$ follows from the definition of \eqref{eq:lemma_base_t_1}. In step $b.$, we first apply Lemma~\ref{lemma:base_t_0}, which implies that $\eqref{eq:lemma_base_t_1}\subseteq \eqref{eq:lemma_base}$. Therefore, $\phi_0\in \eqref{eq:lemma_base_t_1} \implies c_i^{(0)}({x}) =  c_i^{(0)}({x}'(\phi_0))$, for all $ i=1,\ldots,n, k=1,\ldots, K$, yielding the desired equality. Next, step $c.$ follows from the definition of the argmin function. Finally, steps $d.$ and $e.$ follow directly from the definitions of $m_k^{(t)}$ and $c_i^{(t)}$ in steps 3a. and 3b. of Algorithm~\ref{algo:k_means_alt_min}, respectively. 

Because the result above holds for an arbitrary $i$, we have that $\phi_0 \in \eqref{eq:lemma_base_t_1}  \implies c_{i}^{(1)}({x}) = c_{i}^{(1)}({x}'(\phi))$, $ i =1,\ldots, n$. Combining this result with the observation that $\eqref{eq:lemma_base_t_1} \subseteq \eqref{eq:lemma_base}$, we have that $\eqref{eq:lemma_base_t_1} \subseteq \eqref{eq:s_1_c}$, which concludes the proof. 
\end{proof}

Next, we will prove the inductive step in the proof of Proposition~\ref{prop:phi_ineq}, which relies on the following claim.
\begin{lemma}
\label{lemma:inductive}
Recall that  $c_i^{(t)}({x})$ denotes the cluster to which the $i$th observation is assigned in the $t$th iteration of Algorithm~\ref{algo:k_means_alt_min} applied to the data ${x}$, and that $m^{(0)}_{k}({x})$ denotes the $k$th centroid sampled from ${x}$ during initialization. For some $1\leq\tilde{T}\leq T-1$, define \begin{equation}
\label{eq:s_T_tilde_c}
\mathcal{S}_{\tilde{T}} = \left\{ \phi\in\mathbb{R}: \bigcap_{t=0}^{\tilde{T}}\bigcap_{i=1}^{n} \left\{c_i^{(t)}\left({x}'(\phi)\right) = c_i^{(t)}\left({x}\right)\right\}  \right\}.
\end{equation}
Suppose that the following holds for $\tilde{T}$:
{\footnotesize
\begin{equation}
\begin{aligned}
\label{eq:inductive_hypo}
\hspace{-7mm}\mathcal{S}_{\tilde{T}} &= \bigcap_{i=1}^{n}\bigcap_{k=1}^{K} \qty{\phi:  \left\Vert \qty[{x}'(\phi)]_{i} - m^{(0)}_{c_i^{(0)}({x})}({x}'(\phi)) \right\Vert_2^2 \leq  \left\Vert \qty[{x}'(\phi)]_{i} - m^{(0)}_{k}({x}'(\phi)) \right\Vert_2^2}  \\
&\hspace{-10mm}\cap \qty(\bigcap_{t=1}^{\tilde{T}}\bigcap_{i=1}^{n}\bigcap_{k=1}^{K}\qty{ \phi: \left\Vert \qty[{x}'(\phi)]_i - \sum_{i'=1}^n w_{i'}^{(t-1)}\qty(c_i^{(t)}({x}))\qty[{x}'(\phi)]_{i'}   \right\Vert_2^2
 \leq \left\Vert \qty[{x}'(\phi)]_i - \sum_{i'=1}^n w_{i'}^{(t-1)}\qty(k)\qty[{x}'(\phi)]_{i'}  \right\Vert_2^2  }),
\end{aligned}
\end{equation}
 }
 where $w_{i}^{(t)}(\cdot)$ is defined in \eqref{eq:w_i_t_k}. 
Then, for $\mathcal{S}_{\tilde{T}+1}$ defined as \begin{equation}
\label{eq:s_T_tilde_plus_1_c}
\mathcal{S}_{\tilde{T}+1} = \left\{ \phi\in\mathbb{R}: \bigcap_{t=0}^{\tilde{T}+1}\bigcap_{i=1}^{n} \left\{c_i^{(t)}\left({x}'(\phi)\right) = c_i^{(t)}\left({x}\right)\right\}  \right\},
\end{equation}
 we have that
{\footnotesize
\begin{equation}
\begin{aligned}
\label{eq:tilde_1_form}
\mathcal{S}_{\tilde{T}+1} &=  \bigcap_{i=1}^{n}\bigcap_{k=1}^{K} \qty{\phi:  \left\Vert \qty[{x}'(\phi)]_{i} - m^{(0)}_{c_i^{(0)}({x})}({x}'(\phi)) \right\Vert_2^2 \leq  \left\Vert \qty[{x}'(\phi)]_{i} - m^{(0)}_{k}({x}'(\phi)) \right\Vert_2^2}   \\
&\cap \qty(\bigcap_{t=1}^{\tilde{T}+1}\bigcap_{i=1}^{n}\bigcap_{k=1}^{K}\qty{ \phi: \left\Vert \qty[{x}'(\phi)]_i - \sum_{i'=1}^n w_{i'}^{(t-1)}\qty(c_i^{(t)}({x}))\qty[{x}'(\phi)]_{i'}   \right\Vert_2^2
 \leq \left\Vert \qty[{x}'(\phi)]_i - \sum_{i'=1}^n w_{i'}^{(t-1)}\qty(k)\qty[{x}'(\phi)]_{i'}  \right\Vert_2^2  }).
\end{aligned}
\end{equation}
}
\end{lemma}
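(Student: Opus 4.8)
The plan is to run the same two-direction containment argument used in the proof of Lemma~\ref{lemma:base_t_1}, with the induction hypothesis \eqref{eq:inductive_hypo} doing the work for all iterations $t\leq\tilde{T}$, so that only the single new iteration $\tilde{T}+1$ has to be analyzed. First I would record the trivial identity
\begin{equation*}
\mathcal{S}_{\tilde{T}+1} = \mathcal{S}_{\tilde{T}} \cap \left\{ \phi \in \mathbb{R} : \bigcap_{i=1}^{n} \left\{ c_i^{(\tilde{T}+1)}\qty(x'(\phi)) = c_i^{(\tilde{T}+1)}(x) \right\} \right\},
\end{equation*}
which follows directly from the definitions \eqref{eq:s_T_tilde_c} and \eqref{eq:s_T_tilde_plus_1_c}. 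Substituting \eqref{eq:inductive_hypo} for $\mathcal{S}_{\tilde{T}}$ on the right-hand side reproduces the initialization block and the iteration-$t$ blocks for $t=1,\ldots,\tilde{T}$ of the target \eqref{eq:tilde_1_form}. Hence it suffices to show that, for every $\phi_0 \in \mathcal{S}_{\tilde{T}}$, the event $\bigcap_{i=1}^{n}\{ c_i^{(\tilde{T}+1)}(x'(\phi_0)) = c_i^{(\tilde{T}+1)}(x) \}$ is equivalent to $\phi_0$ satisfying the $t=\tilde{T}+1$ block of quadratic inequalities appearing in \eqref{eq:tilde_1_form}; equivalently, that intersecting either event with $\mathcal{S}_{\tilde{T}}$ produces the same set.

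The engine of this equivalence is that membership $\phi_0 \in \mathcal{S}_{\tilde{T}}$ forces $c_{i'}^{(\tilde{T})}(x'(\phi_0)) = c_{i'}^{(\tilde{T})}(x)$ for all $i'$ — this is precisely the $t=\tilde{T}$ slice of the definition \eqref{eq:s_T_tilde_c}. Therefore, by step 3a of Algorithm~\ref{algo:k_means_alt_min} and the definition of $w_{i'}^{(\tilde{T})}(\cdot)$ in \eqref{eq:w_i_t_k},
\begin{equation*}
m_k^{(\tilde{T}+1)}\qty(x'(\phi_0)) = \frac{\sum_{i'=1}^{n} 1\qty{ c_{i'}^{(\tilde{T})}(x'(\phi_0)) = k } \qty[x'(\phi_0)]_{i'}}{\sum_{i'=1}^{n} 1\qty{ c_{i'}^{(\tilde{T})}(x'(\phi_0)) = k }} = \sum_{i'=1}^{n} w_{i'}^{(\tilde{T})}(k) \qty[x'(\phi_0)]_{i'},
\end{equation*}
so that the iteration-$(\tilde{T}+1)$ assignment computed from $x'(\phi_0)$ in step 3b is $c_i^{(\tilde{T}+1)}(x'(\phi_0)) = \underset{1\leq k\leq K}{\text{argmin}}\, \left\Vert \qty[x'(\phi_0)]_i - \sum_{i'=1}^{n} w_{i'}^{(\tilde{T})}(k)\qty[x'(\phi_0)]_{i'} \right\Vert_2^2$. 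From here the claimed equivalence follows verbatim from the argmin manipulations in steps $a$--$e$ of the forward-containment argument and steps $a$--$e$ of the reverse-containment argument in the proof of Lemma~\ref{lemma:base_t_1}, with the roles of iterations $0$ and $1$ there now played by iterations $\tilde{T}$ and $\tilde{T}+1$, and with Lemma~\ref{lemma:base_t_0} replaced by the induction hypothesis \eqref{eq:inductive_hypo}. Running both directions and intersecting with the blocks supplied by \eqref{eq:inductive_hypo} yields \eqref{eq:tilde_1_form}.

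I do not expect a genuine obstacle: there is no new analytic content relative to the base case, and the quadratic-in-$\phi$ structure of these inequalities is not needed here (it is handled separately in Lemma~\ref{lemma:canonical_norm}). The only things to be careful about are the bookkeeping — tracking which conjuncts of $\mathcal{S}_{\tilde{T}}$ are consumed in order to justify substituting $c_{i'}^{(\tilde{T})}(x'(\phi_0)) = c_{i'}^{(\tilde{T})}(x)$ inside the centroid formula, versus which conjuncts are merely carried along into \eqref{eq:tilde_1_form} through the induction hypothesis — and, as in the base case, the mild abuse whereby ``$c = \underset{k}{\text{argmin}}\, f(k)$'' is read as the system of inequalities $f(c)\leq f(k)$ for all $k$.
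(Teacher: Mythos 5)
Your proposal is correct and follows essentially the same route as the paper's proof: the same decomposition $\mathcal{S}_{\tilde{T}+1} = \mathcal{S}_{\tilde{T}} \cap \bigcap_{i=1}^n\{\phi : c_i^{(\tilde{T}+1)}(x'(\phi)) = c_i^{(\tilde{T}+1)}(x)\}$, the same use of the inductive hypothesis to supply the earlier blocks, and the same key substitution (justified by $\phi_0\in\mathcal{S}_{\tilde{T}}$) of $c_{i'}^{(\tilde{T})}(x'(\phi_0))$ for $c_{i'}^{(\tilde{T})}(x)$ inside the centroid formula, followed by the two-direction argmin argument. Nothing is missing.
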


\begin{proof}
Using the definitions in \eqref{eq:s_T_tilde_c} and \eqref{eq:s_T_tilde_plus_1_c}, we have that
\begin{align}
\label{eq:defn_form_equiv}
 \mathcal{S}_{\tilde{T}+1} = \mathcal{S}_{\tilde{T}}\cap\qty( \bigcap_{i=1}^n  \qty{\phi\in \mathbb{R}: c_i^{(\tilde{T}+1)}\left({x}'(\phi)\right) = c_i^{(\tilde{T}+1)}\left({x}\right) }).
\end{align}

Therefore, it suffices to prove that $\eqref{eq:defn_form_equiv}=\eqref{eq:tilde_1_form}$, under the inductive hypothesis \eqref{eq:inductive_hypo}.

We start by proving that $\eqref{eq:defn_form_equiv}\subseteq\eqref{eq:tilde_1_form}$. For an arbitrary $\phi_0 \in \eqref{eq:defn_form_equiv}$ and an arbitrary $1 \leq i \leq n$, we have that 
{\footnotesize
\begin{align*}
 \hspace{-5mm}&c_i^{(\tilde{T}+1)}\left({x}'(\phi_0)\right) = c_i^{(\tilde{T}+1)}\left({x}\right) \overset{a.}{\implies} c_i^{(\tilde{T}+1)}({x})= \underset{1\leq k \leq K}{\text{argmin}}  \left\Vert \qty[{x}'(\phi_0)]_{i} - m_{k}^{(\tilde{T}+1)}({x}'(\phi_0)) \right\Vert_2^2 \\
 \hspace{-5mm}&\overset{b.}{\implies} \left\Vert \qty[{x}'(\phi_0)]_i - \frac{\sum_{i'=1}^n 1\left\{c_{i'}^{(\tilde{T})}({x}'(\phi_0)) = c_i^{(\tilde{T}+1)}({x})\right\}\qty[{x}'(\phi_0)]_{i'}}{\sum_{i'=1}^n 1\qty{c_{i'}^{(\tilde{T})}({x}'(\phi_0))= c_{i}^{(\tilde{T}+1)}({x})}}   \right\Vert_2^2
 \leq \left\Vert \qty[{x}'(\phi_0)]_i - \frac{ \sum_{i'=1}^n 1\qty{c_{i'}^{(\tilde{T})}({x}'(\phi_0))= k}\qty[{x}'(\phi_0)]_{i'} }{\sum_{i'=1}^n 1\qty{c_{i'}^{(\tilde{T})}({x}'(\phi_0)) = k}} \right\Vert_2^2, k =1,\ldots,K \\
  \hspace{-5mm}&\overset{c.}{\implies} \left\Vert \qty[{x}'(\phi_0)]_i - \frac{\sum_{i'=1}^n 1\left\{c_{i'}^{(\tilde{T})}({x}) = c_i^{(\tilde{T}+1)}({x})\right\}\qty[{x}'(\phi_0)]_{i'}}{\sum_{i'=1}^n 1\qty{c_{i'}^{(\tilde{T})}({x})= c_{i}^{(\tilde{T}+1)}({x})}}   \right\Vert_2^2
 \leq \left\Vert \qty[{x}'(\phi_0)]_i - \frac{ \sum_{i'=1}^n 1\qty{c_{i'}^{(\tilde{T})}({x})= k}\qty[{x}'(\phi_0)]_{i'} }{\sum_{i'=1}^n 1\qty{c_{i'}^{(\tilde{T})}({x}) = k}} \right\Vert_2^2, k =1,\ldots,K   \\
 \hspace{-5mm}&\overset{d.}{\implies} \phi_0 \in \eqref{eq:tilde_1_form}.
\end{align*}
} 
Here, the first statement follows from the definition of $\mathcal{S}_{\tilde{T}+1}$. Next, steps $a.$  and $b.$ follow from the definitions of 
$c_i^{(\tilde{T}+1)}$ and $m_{k}^{(\tilde{T}+1)}({x}'(\phi_0))$ in steps 3b. and 3a. of Algorithm~\ref{algo:k_means_alt_min}, respectively. In step $c.$, we used the fact that $\phi_0 \in\eqref{eq:defn_form_equiv} \implies \phi_0 \in \mathcal{S}_{\tilde{T}} \implies c_i^{\tilde{T}}({x}'(\phi_0))=c_i^{\tilde{T}}({x})$. Finally, $d.$ follows from the definition of $w_i^{(t)}$ in \eqref{eq:w_i_t_k}.

We continue with the reverse direction. Applying the inductive hypothesis \eqref{eq:inductive_hypo}, together with the definition of $S_{\tilde{T}+1}$ in \eqref{eq:tilde_1_form} and the definition of $w_i^{(t)}$ in \eqref{eq:w_i_t_k}, we have that
{\footnotesize
\begin{equation}
\begin{aligned}
\label{eq:equiv_t_tilde_2}
\hspace{-4mm}&\eqref{eq:tilde_1_form}=S_{\tilde{T}}\cap  \vast(\bigcap_{i=1}^{n}\bigcap_{k=1}^{K} 
\vast\{ \phi: \left\Vert \qty[{x}'(\phi)]_i - \frac{\sum_{i'=1}^n 1\left\{c_{i'}^{(\tilde{T})}({x}) = c_i^{(\tilde{T}+1)}({x})\right\}\qty[{x}'(\phi)]_{i'}}{\sum_{i'=1}^n 1\qty{c_{i'}^{(\tilde{T})}({x})= c_{i}^{(\tilde{T}+1)}({x})}} \right\Vert_2^2 \leq \\ 
&\hspace{5mm}\left\Vert \qty[{x}'(\phi)]_i - \frac{ \sum_{i'=1}^n 1\qty{c_{i'}^{(\tilde{T})}({x})= k}\qty[{x}'(\phi)]_{i'} }{\sum_{i'=1}^n 1\qty{c_{i'}^{(\tilde{T})}({x}) = k}} \right\Vert_2^2  \vast\}\vast).
\end{aligned}
\end{equation}
}
For an arbitrary $\phi_0 \in \eqref{eq:tilde_1_form}$ and any $1 \leq i \leq n$, the following holds:
{\footnotesize\begin{align*}
\hspace{-10mm}&\left\Vert \qty[{x}'(\phi_0)]_i - \frac{\sum_{i'=1}^n 1\left\{c_{i'}^{(\tilde{T})}({x}) = c_i^{(\tilde{T}+1)}({x})\right\}\qty[{x}'(\phi_0)]_{i'}}{\sum_{i'=1}^n 1\qty{c_{i'}^{(\tilde{T})}({x})= c_{i}^{(\tilde{T}+1)}({x})}} \right\Vert_2^2 \leq \left\Vert \qty[{x}'(\phi_0)]_i - \frac{ \sum_{i'=1}^n 1\qty{c_{i'}^{(\tilde{T})}({x})= k}\qty[{x}'(\phi_0)]_{i'} }{\sum_{i'=1}^n 1\qty{c_{i'}^{(\tilde{T})}({x}) = k}} \right\Vert_2^2 , k=1,\ldots,K \\
\hspace{-10mm}&\overset{a.}{\implies} \left\Vert \qty[{x}'(\phi_0)]_i - \frac{\sum_{i'=1}^n 1\left\{c_{i'}^{(\tilde{T})}({x}'(\phi_0)) = c_i^{(\tilde{T}+1)}({x})\right\}\qty[{x}'(\phi_0)]_{i'}}{\sum_{i'=1}^n 1\qty{c_{i'}^{(\tilde{T})}({x}'(\phi_0))= c_{i}^{(\tilde{T}+1)}({x})}}   \right\Vert_2^2 \leq \left\Vert \qty[{x}'(\phi_0)]_i - \frac{ \sum_{i'=1}^n 1\qty{c_{i'}^{(\tilde{T})}({x}'(\phi_0))= k}\qty[{x}'(\phi_0)]_{i'} }{\sum_{i'=1}^n 1\qty{c_{i'}^{(\tilde{T})}({x}'(\phi_0)) = k}} \right\Vert_2^2, k=1,\ldots,K \\
\hspace{-10mm}&\overset{b.}{\implies} c_i^{(\tilde{T}+1)}({x})= \underset{1\leq k \leq K}{\text{argmin}}  \left\Vert \qty[{x}'(\phi_0)]_{i} - m_{k}^{(\tilde{T}+1)}({x}'(\phi_0)) \right\Vert_2^2 \\
\hspace{-10mm}&\overset{c.}{\implies} c_i^{(\tilde{T}+1)}({x})= c_i^{(\tilde{T}+1)}({x}'(\phi_0)).
\end{align*}
}

Here, to derive step $a.$, we first note that by \eqref{eq:equiv_t_tilde_2}, any element $\phi_0 $ of $\eqref{eq:tilde_1_form}$ is also an element of $\mathcal{S}_{\tilde{T}}$. Therefore, using the definition of $\mathcal{S}_{\tilde{T}}$ in \eqref{eq:s_T_tilde_c}, we have that $\bigcap_{t=1}^{\tilde{T}}\qty{c_i^{({t})}\left({x}'(\phi_0)\right) = c_i^{({t})}\left({x}\right)}$, and step $a.$ follows directly. Next, steps $b.$ and $c.$ follow directly from steps 3a. and 3b. of Algorithm~\ref{algo:k_means_alt_min} with $t=\tilde{T}$. By inspecting the form of \eqref{eq:defn_form_equiv}, we conclude that $\phi_0 \in \eqref{eq:tilde_1_form} \implies \phi_0 \in \eqref{eq:defn_form_equiv}$.

In conclusion, we have proven that $\eqref{eq:tilde_1_form}=\eqref{eq:defn_form_equiv}$, which completes the proof.
\end{proof}

The inductive proof of Proposition~\ref{prop:phi_ineq} follows  from combining Lemmas~\ref{lemma:base_t_0}, \ref{lemma:base_t_1} and \ref{lemma:inductive}.

\subsection{Proof of Lemmas~\ref{lemma:simple_norm} and \ref{lemma:canonical_norm}}
\label{appendix:quad_ineq}

We first prove Lemma~\ref{lemma:simple_norm}, which is also Lemma 2 in \citet{Gao2020-yt}.

\begin{proof}
We first express the inner product $\langle \qty[{x}'(\phi)]_i, \qty[{x}'(\phi)]_j \rangle$ as a function of $\phi$. 
From \eqref{eq:xphi}, we have that $[{x}'(\phi)]_i = {x}_i +  \nu_i  \left( \frac{\phi-\Vert {x}^{\top}\nu  \Vert_2 }{\Vert \nu  \Vert_2^2 }\right) \text{dir}({x}^{\top}\nu) = {x}_i  - \nu_i \frac{|| {x}^{\top}\nu||_2}{||\nu||_2^2} \text{dir}({x}^{\top}\nu) + \qty(\frac{\nu_i}{||\nu||_2^2}  \phi )\text{dir}({x}^{\top}\nu).$

Therefore, 
{\small
\begin{align*}
\left\langle \qty[{x}'(\phi)]_i, \qty[{x}'(\phi)]_j \right\rangle &= \left\langle {x}_i   - \nu_i \frac{|| {x}^{\top}\nu||_2}{||\nu||_2^2} \text{dir}({x}^{\top}\nu) + \qty(\frac{\nu_i}{||\nu||_2^2}  \phi )\text{dir}({x}^{\top}\nu), {x}_j  - \nu_j \frac{|| {x}^{\top}\nu||_2}{||\nu||_2^2} \text{dir}({x}^{\top}\nu) + \qty(\frac{\nu_j}{||\nu||_2^2}  \phi )\text{dir}({x}^{\top}\nu) \right\rangle \\
&= \qty(\frac{{(\nu_i\nu_j)^{1/2}}}{||\nu||_2^2}\phi )^2 + \left\langle {x}_i   - \nu_i \frac{|| {x}^{\top}\nu||_2}{||\nu||_2^2} \text{dir}({x}^{\top}\nu), \qty(\frac{\nu_j}{||\nu||_2^2})\text{dir}({x}^{\top}\nu) \right\rangle \cdot  \phi \\
&+ \left\langle {x}_j -\nu_j \frac{|| {x}^{\top}\nu||_2}{||\nu||_2^2} \text{dir}({x}^{\top}\nu), \qty(\frac{\nu_i}{||\nu||_2^2})\text{dir}({x}^{\top}\nu) \right\rangle \cdot  \phi   \\&
+  \left\langle {x}_i   - \nu_i \frac{|| {x}^{\top}\nu||_2}{||\nu||_2^2} \text{dir}({x}^{\top}\nu), {x}_j  - \nu_j \frac{|| {x}^{\top}\nu||_2}{||\nu||_2^2} \text{dir}({x}^{\top}\nu) \right\rangle \\
&= \qty(\frac{{(\nu_i\nu_j)^{1/2}}}{||\nu||_2^2} )^2 \phi^2 + \left( \frac{\nu_j}{||\nu||_2^2} \left\langle  {x}_i ,  \text{dir}({x}^{\top}\nu) \right\rangle +  \frac{\nu_i}{||\nu||_2^2} \left\langle  {x}_j,  \text{dir}({x}^{\top}\nu) \right\rangle - 2\frac{\nu_i\nu_j||{x}^{\top}\nu||_2}{||\nu||_2^4} \right) \phi \\
&+\left\langle {x}_i   - \nu_i \frac{|| {x}^{\top}\nu||_2}{||\nu||_2^2} \text{dir}({x}^{\top}\nu), {x}_j  - \nu_j \frac{|| {x}^{\top}\nu||_2}{||\nu||_2^2} \text{dir}({x}^{\top}\nu) \right\rangle.
\end{align*} 
}

Next, using the expression for $\langle \qty[{x}'(\phi)]_i, \qty[{x}'(\phi)]_j \rangle$ above, we have that
{
\begin{align*}
\left\Vert\qty[{x}'(\phi)]_i- \qty[{x}'(\phi)]_j\right\Vert_2^2 &= \left\langle \qty[{x}'(\phi)]_i- \qty[{x}'(\phi)]_j ,\qty[{x}'(\phi)]_i- \qty[{x}'(\phi)]_j\right\rangle \\
&\overset{}{=}  \Big\langle {x}_i  - {x}_j  - (\nu_i-\nu_j) \frac{|| {x}^{\top}\nu||_2}{||\nu||_2^2} \text{dir}({x}^{\top}\nu) + \qty(\frac{(\nu_i-\nu_j)}{||\nu||_2^2}  \phi )\text{dir}({x}^{\top}\nu),\\
&{x}_i  - {x}_j  - (\nu_i-\nu_j) \frac{|| {x}^{\top}\nu||_2}{||\nu||_2^2} \text{dir}({x}^{\top}\nu) + \qty(\frac{(\nu_i-\nu_j)}{||\nu||_2^2}  \phi )\text{dir}({x}^{\top}\nu) \Big\rangle \\
&\overset{}{=} \qty(\frac{\nu_i-\nu_j}{||\nu||_2^2} )^2 \phi^2 + 2\left( \frac{\nu_i-\nu_j}{||\nu||_2^2} \left\langle  {x}_i -{x}_j,  \text{dir}({x}^{\top}\nu) \right\rangle  -\qty(\frac{\nu_i-\nu_j}{||\nu||_2^2} )^2 ||{x}^{\top}\nu||_2 \right) \phi \\
&+\left\Vert {x}_i   -  {x}_j -(\nu_i-\nu_j) \frac{{x}^{\top}\nu}{||\nu||_2^2} \right\Vert_2^2.
\end{align*}}
\end{proof}
This completes the proof of Lemma~\ref{lemma:simple_norm}. 

We continue with the proof of Lemma~\ref{lemma:canonical_norm}.  Using the definition of $w_{i}^{(t-1)}(k)$ in \eqref{eq:w_i_t_k}, we have that
{\begin{align*}
\left\Vert \qty[{x}'(\phi)]_i - \frac{\sum_{i'=1}^n 1\qty{c_{i'}^{(t-1)}({x})= k}\qty[{x}'(\phi)]_{i'} }{\sum_{i'=1}^n 1\qty{c_{i'}^{(t-1)}({x}) = k}}  \right\Vert_2^2 = \left\Vert \qty[{x}'(\phi)]_i - \sum_{i'=1}^n w_{i'}^{(t-1)}(k)\qty[{x}'(\phi)]_{i'}   \right\Vert_2^2,
\end{align*}}
where
{\begin{align*}
\qty[{x}'(\phi)]_i - \sum_{i'=1}^n w_{i'}^{(t-1)}(k)\qty[{x}'(\phi)]_{i'} = \qty(\sum_{i'=1}^n w_{i'}^{(t-1)}(k) \frac{\nu_i}{||\nu||_2^2})  \phi +\sum_{i'=1}^n w_{i'}^{(t-1)}(k)\qty({x}_i  -   \nu_i \frac{|| {x}^{\top}\nu||_2}{||\nu||_2^2} \text{dir}({x}^{\top}\nu))
\end{align*}}
is a linear function of $\phi$. The rest of the proof follows directly from the same  set of calculations in the proof of Lemma~\ref{lemma:simple_norm}. 

\subsection{Proof of Proposition~\ref{prop:time_complexity}}
\label{appendix:proof_S_timing}

Recall that $n,q,K,T$ denote the number of samples (see \eqref{eq:data_gen}), the number of features (see \eqref{eq:data_gen}), the number of clusters (see Algorithm~\ref{algo:k_means_alt_min}), and the maximum number of iterations for which Algorithm~\ref{algo:k_means_alt_min} is run. 

According to Proposition~\ref{prop:phi_ineq}, to compute the set $\mathcal{S}_T$ in \eqref{eq:s_set}, it suffices to compute the intersection of the two sets in \eqref{eq:S_T_quad_ineq_part_1} and \eqref{eq:S_T_quad_ineq_part_2}. 

We first make the following observations for our timing complexity analysis:
\begin{itemize}
\item
Observation 1: according to Lemma~\ref{lemma:simple_norm}, the set in \eqref{eq:S_T_quad_ineq_part_1} is an intersection of $nK$ quadratic inequalities. 

\item
Observation 2: according to Lemma~\ref{lemma:canonical_norm}, the set in \eqref{eq:S_T_quad_ineq_part_2} is an intersection of $nKT$ quadratic inequalities. 

\item
Observation 3: we can solve a quadratic inequality in $\mathcal{O}(1)$ time using the quadratic formula.

\item
Observation 4: we can intersect the solution sets of $N$ quadratic inequalities in $\mathcal{O}(N\log N)$ time~\citep{intervals_pacakge}.

\end{itemize}

Equipped with these observations, we will analyze the timing complexity of computing the set \eqref{eq:S_T_quad_ineq_part_1}. Note that the coefficients for each of the $nK$ quadratic inequalities can be computed in $\mathcal{O}(nq)$ operations: first, using the property that ${x}^\top \nu = \sum_{i\in {\hat{\mathcal{C}}}_1}{x}_i/|\hat{\mathcal{C}}_1|-\sum_{i\in \hat{\mathcal{C}}_2}{x}_i/|\hat{\mathcal{C}}_2|$, we can compute $\Vert {x}^\top \nu \Vert_2$ and $\text{dir}({x}^\top \nu)$ in $\mathcal{O}(nq)$ operations. Then, computing the coefficients $a, b,$ and $\gamma$ in Lemma~\ref{lemma:simple_norm} takes $\mathcal{O}(1)$, $\mathcal{O}(q)$, and $\mathcal{O}(q)$ operations, respectively. For each inequality, obtaining the solution set requires $\mathcal{O}(1)$ operations (see Observation 3). Finally, intersecting the solution sets of the $n(K-1)$  quadratic inequalities incurs another  $\mathcal{O}(nK\log(nK))$ operations. Thus, the computational cost for \eqref{eq:S_T_quad_ineq_part_1} totals to $\mathcal{O}(nKq+nK\log(nK))$ operations.

Next, we analyze the cost of computing the set \eqref{eq:S_T_quad_ineq_part_2}. Note that using Observation 2, we need to solve $nKT$ quadratic inequalities. Here, for each quadratic inequality of the form in Lemma~\ref{lemma:canonical_norm}, it takes $\mathcal{O}(n)$,\,$\mathcal{O}(n+q)$, and $\mathcal{O}(n+q)$ operations to compute the coefficients $\tilde{a},\tilde{b}$, and $\tilde{\gamma}$, respectively. Therefore, obtaining the $nKT$ solution sets will take $\mathcal{O}(nKT(n+q))$ time. Finally, intersecting these sets using Observation 4 adds another $\mathcal{O}(nKT\log(nKT))$ operations.

Combining the costs for computing the set in \eqref{eq:S_T_quad_ineq_part_1} and the set in \eqref{eq:S_T_quad_ineq_part_2}, we conclude that the cost for computing the set $\mathcal{S}_T$ in \eqref{eq:s_set} is $\mathcal{O}(nKT(n+q)+nKT\log(nKT))$ operations.

\subsection{Proof of Proposition~\ref{prop:full_cov_single_param_p} and computation of $\pKSigma$}
\label{appendix:proof_p_val_prop_full_cov}

The proof of Proposition~\ref{prop:full_cov_single_param_p} is similar to that of Proposition~\ref{prop:single_param_p}. 

First note that for any non-zero $\nu\in\mathbb{R}^n$ and ${X} \in\mathbb{R}^{n\times q}$, we have that 
{
\begin{align} 
\label{eq:perp_nu_identity_full_cov}
{X} =  {\Pi}_\nu^\perp {X} + \frac{\nu\nu^\top{X}{\Sigma}^{-\frac{1}{2}} {\Sigma}^{\frac{1}{2}} }{\Vert \nu \Vert_2^2} = {\Pi}_\nu^\perp {X} + \qty(\frac{\Vert {\Sigma}^{-\frac{1}{2}}{X}^\top \nu \Vert_2}{\Vert \nu \Vert_2^2}) \nu \qty{\text{dir}\qty({\Sigma}^{-\frac{1}{2}}{X}^\top \nu)}^\top {\Sigma}^{\frac{1}{2}}.
\end{align} 
}
\begin{lemma}
\label{lemma:mutual_indep_full_cov}
Under \eqref{eq:data_full_cov_gen} and $H_0:{\mu}^\top \nu = 0_q$, $\Vert {\Sigma}^{-\frac{1}{2}} {X}^{\top}\nu \Vert_2$, ${\Pi}_\nu^\perp {X}$, and $\emph{\text{dir}}\qty({\Sigma}^{-\frac{1}{2}}{X}^{\top}\nu)$ are pairwise independent.
\end{lemma}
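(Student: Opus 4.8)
The plan is to follow the proof of Lemma~\ref{lemma:mutual_indep} almost verbatim, substituting $\Sigma^{-\frac{1}{2}}{X}^\top\nu$ for ${X}^\top\nu$ wherever the direction/norm decomposition is used. The first step is to show that ${\Pi}_\nu^\perp{X}$ is independent of ${X}^\top\nu$. Both are linear functions of the Gaussian matrix ${X}$, hence jointly Gaussian, so it suffices to check that their cross-covariance vanishes. Under \eqref{eq:data_full_cov_gen} the \emph{rows} of ${X}$ are independent with $\mathrm{cov}({X}_{ij},{X}_{i'j'}) = \textbf{I}_n{}_{,ii'}\,\Sigma_{jj'}$, so $\mathrm{cov}\big(({\Pi}_\nu^\perp{X})_{ij},({X}^\top\nu)_{j'}\big) = \Sigma_{jj'}\,({\Pi}_\nu^\perp\nu)_i = 0$ because ${\Pi}_\nu^\perp\nu = 0_n$. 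Joint normality plus zero covariance gives independence of ${\Pi}_\nu^\perp{X}$ and ${X}^\top\nu$, and since $\Sigma^{-\frac{1}{2}}{X}^\top\nu$ — and therefore $\Vert\Sigma^{-\frac{1}{2}}{X}^\top\nu\Vert_2$ and $\text{dir}(\Sigma^{-\frac{1}{2}}{X}^\top\nu)$ — are measurable functions of ${X}^\top\nu$, they are both independent of ${\Pi}_\nu^\perp{X}$ as well.

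The second step is to show $\Vert\Sigma^{-\frac{1}{2}}{X}^\top\nu\Vert_2$ is independent of $\text{dir}(\Sigma^{-\frac{1}{2}}{X}^\top\nu)$. Under \eqref{eq:data_full_cov_gen} and $H_0:{\mu}^\top\nu = 0_q$, the $q$-vector ${X}^\top\nu = \sum_i \nu_i {X}_i$ is $\mathcal{N}(0_q,\Vert\nu\Vert_2^2\Sigma)$, so $\Sigma^{-\frac{1}{2}}{X}^\top\nu \sim \mathcal{N}(0_q,\Vert\nu\Vert_2^2\textbf{I}_q)$ is rotationally invariant; the norm and direction of a spherically symmetric Gaussian are independent (Proposition 4.1 and Corollary 4.3 of \citet{Bilodeau1999-tl}, exactly as invoked in Lemma~\ref{lemma:mutual_indep}). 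Combining the two steps yields pairwise independence of $\Vert\Sigma^{-\frac{1}{2}}{X}^\top\nu\Vert_2$, ${\Pi}_\nu^\perp{X}$, and $\text{dir}(\Sigma^{-\frac{1}{2}}{X}^\top\nu)$.

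I do not anticipate a genuine obstacle; the only point requiring care is the matrix-normal covariance bookkeeping, namely that it is the \emph{among-row} covariance being $\textbf{I}_n$ that forces the cross-covariance to vanish for \emph{any} positive definite $\Sigma$, and that left-multiplying ${X}^\top\nu$ by the symmetric square root $\Sigma^{-\frac{1}{2}}$ turns the among-column covariance into $\textbf{I}_q$, which is precisely what is needed for the rotational-invariance argument. Once these two facts are in place the proof is identical in structure to that of Lemma~\ref{lemma:mutual_indep}.
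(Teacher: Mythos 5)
Your proposal is correct and follows essentially the same argument as the paper: independence of ${\Pi}_\nu^\perp {X}$ from ${X}^\top\nu$ via ${\Pi}_\nu^\perp\nu = 0_n$ and the matrix-normal row-covariance structure, then rotational invariance of $\Sigma^{-\frac{1}{2}}{X}^\top\nu \sim \mathcal{N}(0_q, \Vert\nu\Vert_2^2\textbf{I}_q)$ under $H_0$ to separate norm and direction. The explicit cross-covariance bookkeeping you supply is a slightly more detailed version of the paper's appeal to "the property of the matrix normal distribution," but the route is identical.
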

\begin{proof}
As in the proof of Lemma~\ref{lemma:mutual_indep}, ${\Pi}_\nu^\perp \nu = 0_n$, and it follows from the property of the matrix normal distribution that ${X}^{\top}\nu$ is independent of ${\Pi}_\nu^\perp {X} $. Because both $\Vert {\Sigma}^{-\frac{1}{2}} {X}^{\top}\nu \Vert_2$ and  $\text{dir}( {\Sigma}^{-\frac{1}{2}}{X}^{\top}\nu)$ are functions of ${X}^{\top}\nu$, both are independent of ${\Pi}_\nu^\perp {X}$.

Next, we will show that $\Vert {\Sigma}^{-\frac{1}{2}} {X}^{\top}\nu \Vert_2$ and $\text{dir}( {\Sigma}^{-\frac{1}{2}}{X}^{\top}\nu)$ are independent. Under \eqref{eq:data_full_cov_gen} and $H_0:{\mu}^\top \nu = 0_q$, we have that $ {\Sigma}^{-\frac{1}{2}} {X}^{\top}\nu \sim \mathcal{N}(0_q, \Vert\nu\Vert_2^2\textbf{I}_q)$. It then follows that $ {\Sigma}^{-\frac{1}{2}} {X}^{\top}\nu$ is rotationally invariant, and therefore $\Vert  {\Sigma}^{-\frac{1}{2}}  {X}^{\top}\nu \Vert_2$ is independent of $\text{dir}( {\Sigma}^{-\frac{1}{2}} {X}^{\top}\nu)$~\citep{Bilodeau1999-tl}.
\end{proof}
Then, recalling the definition of $\pKSigma$ in \eqref{eq:full_cov_p_val_k_means_chen}, we have that
{\small
\begin{align*} 
\hspace{-10mm}\pKSigma &= \pr_{H_0}\Big[ \Vert {\Sigma}^{-\frac{1}{2}}{X}^{\top}\nu \Vert_2 \geq  \Vert {\Sigma}^{-\frac{1}{2}}{x}^{\top}\nu  \Vert_2 \;\big\vert\;  \bigcap_{t=0}^{T} \bigcap_{i=1}^{n}\left\{c_i^{(t)}\left({X}\right) = c_i^{(t)}\left({x}\right)\right\},\, {\Pi}_{\nu}^\perp {X} = {\Pi}_{\nu}^\perp {x},\,\text{dir}\qty({\Sigma}^{-\frac{1}{2}}{X}^{\top}\nu) =  \text{dir}\qty({\Sigma}^{-\frac{1}{2}}{x}^{\top}\nu)\Big] \\
 &\hspace{-10mm}\overset{a.}{=}\pr_{H_0}\Bigg[ \Vert {\Sigma}^{-\frac{1}{2}}{X}^{\top}\nu \Vert_2 \geq  \Vert{\Sigma}^{-\frac{1}{2}} {x}^{\top}\nu  \Vert_2 \;\Bigg\vert\;  \bigcap_{t=0}^{T} \bigcap_{i=1}^{n}\left\{c_i^{(t)}\left({\Pi}_\nu^\perp {X} + \frac{\Vert {\Sigma}^{-\frac{1}{2}}{X}^\top \nu \Vert_2}{\Vert \nu \Vert_2^2} \nu \qty{\text{dir}\qty({\Sigma}^{-\frac{1}{2}}{X}^\top \nu)}^\top {\Sigma}^{\frac{1}{2}}\right) = c_i^{(t)}\left({x}\right)\right\},\,\\
 &\quad{\Pi}_{\nu}^\perp {X} = {\Pi}_{\nu}^\perp {x},\, \text{dir}({\Sigma}^{-\frac{1}{2}}{X}^{\top}\nu) =  \text{dir}({\Sigma}^{-\frac{1}{2}}{x}^{\top}\nu)\Bigg]  \\
 &\hspace{-10mm}\overset{b.}{=}\pr_{H_0}\Bigg[ \Vert {\Sigma}^{-\frac{1}{2}}{X}^{\top}\nu \Vert_2 \geq  \Vert {\Sigma}^{-\frac{1}{2}}{x}^{\top}\nu  \Vert_2 \;\Bigg\vert\;  \bigcap_{t=0}^{T} \bigcap_{i=1}^{n}\left\{c_i^{(t)}\left({\Pi}_\nu^\perp {x} + \frac{\Vert {\Sigma}^{-\frac{1}{2}}{X}^\top \nu \Vert_2}{\Vert \nu \Vert_2^2} \nu \qty{\text{dir}\qty({\Sigma}^{-\frac{1}{2}}{x}^\top \nu)}^\top {\Sigma}^{\frac{1}{2}}\right) = c_i^{(t)}\left({x}\right)\right\},\,\\
 &\quad{\Pi}_{\nu}^\perp {X} = {\Pi}_{\nu}^\perp {x},\, \text{dir}({\Sigma}^{-\frac{1}{2}}{X}^{\top}\nu) =  \text{dir}({\Sigma}^{-\frac{1}{2}}{x}^{\top}\nu)\Bigg]  \\
  &\hspace{-10mm}\overset{c.}{=}\pr_{H_0}\Bigg[ \Vert {\Sigma}^{-\frac{1}{2}}{X}^{\top}\nu \Vert_2 \geq  \Vert{\Sigma}^{-\frac{1}{2}} {x}^{\top}\nu  \Vert_2 \;\Bigg\vert\;  \bigcap_{t=0}^{T} \bigcap_{i=1}^{n}\left\{c_i^{(t)}\left({\Pi}_{\nu}^\perp {x} + \frac{\Vert{\Sigma}^{-\frac{1}{2}} {X}^\top \nu \Vert_2}{\Vert \nu \Vert_2^2} \nu \qty{\text{dir}\qty({\Sigma}^{-\frac{1}{2}}{x}^\top \nu)}^\top{\Sigma}^{\frac{1}{2}} \right) = c_i^{(t)}\left({x}\right)\right\}\Bigg].
\end{align*} 
}
Here, step $a.$ follows from substituting ${X}$ with the expression in \eqref{eq:perp_nu_identity_full_cov}. Step $b.$ follows from replacing ${\Pi}_\nu^\perp {X}$ and $\text{dir}({\Sigma}^{-\frac{1}{2}}{X}^{\top}\nu)$ with ${\Pi}_\nu^\perp {x}$ and $\text{dir}({\Sigma}^{-\frac{1}{2}}{x}^{\top}\nu)$, respectively. Finally, in step $c.$, we used Lemma~\ref{lemma:mutual_indep_full_cov}.
Now, under \eqref{eq:data_full_cov_gen} and $H_0:{\mu}^\top \nu = 0_q$, we have that $\Vert{\Sigma}^{-\frac{1}{2}} {X}^\top \nu \Vert_2 \sim  \Vert \nu \Vert_2 \chi_q$, which concludes the proof of \eqref{eq:p_val_single_param_full_cov}.

It remains to show that the test that rejects $H_0: {\mu}^\top \nu = 0$ when $\pKSigma\leq \alpha$ controls the selective Type I error, in the sense of \eqref{eq:selective_type_1}. We omit the proof here, as it follows directly from the proof of Proposition~\ref{prop:single_param_p} in Appendix~\ref{appendix:proof_p_val_prop}.

Next, we discuss how we could modify the results in Section~\ref{section:method} to compute the $p$-value $\pKSigma$. First note that according to Proposition~\ref{prop:full_cov_single_param_p}, it suffices to compute the set
\begin{align}
\label{eq:s_t_sigma}
\mathcal{S}_T^{\Sigma} = \qty{\phi:\bigcap_{t=0}^{T} \bigcap_{i=1}^{n}\left\{c_i^{(t)}\left({\Pi}_{\nu}^\perp {x} + \qty(\frac{\phi}{\Vert \nu \Vert_2^2}) \nu \qty{\text{dir}\qty({\Sigma}^{-\frac{1}{2}}{x}^\top \nu)}^\top{\Sigma}^{\frac{1}{2}} \right) = c_i^{(t)}\left({x}\right)\right\}}.
\end{align}
In addition, letting $\tilde{{x}}'(\phi)$ denote ${\Pi}_{\nu}^\perp {x} + \qty(\frac{\phi}{\Vert \nu \Vert_2^2}) \nu \qty{\text{dir}\qty({\Sigma}^{-\frac{1}{2}} {x}^\top \nu)}^\top{\Sigma}^{\frac{1}{2}}$, we see that $\tilde{{x}}'(\phi)$ is in fact a linear function of $\phi$ with
\begin{align}
\label{eq:x_phi_Sigma}
[\tilde{{x}}'(\phi)]_i = {x}_i  - \nu_i \frac{|| {x}^{\top}\nu||_2}{||\nu||_2^2} \text{dir}({x}^{\top}\nu) + \qty(\frac{|| {x}^{\top}\nu||_2}{||{\Sigma}^{-\frac{1}{2}} {x}^{\top}\nu||_2}\frac{\nu_i}{||\nu||_2^2}  \phi )\text{dir}({x}^{\top}\nu).
\end{align}


Therefore, a minor modification of Proposition~\ref{prop:phi_ineq} yields the following corollary.
\begin{corollary}
\label{cor:phi_ineq_Sigma}
Suppose the $k$-means clustering algorithm (see Algorithm~\ref{algo:k_means_alt_min}) with $K$ clusters the data ${x}$, when applied to the data ${x}$, runs for $T$ steps. Then, for the set $\mathcal{S}^{\Sigma}_T$ defined in \eqref{eq:s_t_sigma}, we have that
{\footnotesize
\begin{equation}
\begin{aligned}
&\hspace{-5mm}\mathcal{S}^{\Sigma}_T = \qty( \bigcap_{i=1}^{n}\bigcap_{k=1}^{K} \qty{\phi:  \left\Vert \qty[\tilde{{x}}'(\phi)]_{i} - m^{(0)}_{c_i^{(0)}({x})}(\phi) \right\Vert_2^2 \leq  \left\Vert \qty[\tilde{{x}}'(\phi)]_{i} - m^{(0)}_{k}(\phi) \right\Vert_2^2} ) \cap \\
&\hspace{-15mm}\vast(\bigcap_{t=1}^{T}\bigcap_{i=1}^{n}\bigcap_{k=1}^{K}\vast\{ \phi: \left\Vert \qty[\tilde{{x}}'(\phi)]_i - \sum_{i'=1}^n w_{i'}^{(t-1)}\qty(c_{i'}^{(t-1)}({x}))\qty[\tilde{{x}}'(\phi)]_{i'}\right\Vert_2^2 \leq \left\Vert \qty[\tilde{{x}}'(\phi)]_i - \sum_{i'=1}^n  w_{i'}^{(t-1)}(k)\qty[\tilde{{x}}'(\phi)]_{i'}  \right\Vert_2^2  \vast\}\vast) .
\end{aligned}
\end{equation}
 }
\end{corollary}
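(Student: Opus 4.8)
The plan is to observe that the set $\mathcal{S}_T^{\Sigma}$ in \eqref{eq:s_t_sigma} has exactly the same structure as $\mathcal{S}_T$ in \eqref{eq:s_set}, with the single change that the perturbed data matrix $x'(\phi)$ of \eqref{eq:xphi} is replaced throughout by $\tilde{x}'(\phi) = {\Pi}_\nu^\perp x + (\phi/\Vert\nu\Vert_2^2)\,\nu\,\{\text{dir}({\Sigma}^{-\frac{1}{2}}x^\top\nu)\}^\top{\Sigma}^{\frac{1}{2}}$. Consequently, the entire argument that established Proposition~\ref{prop:phi_ineq} --- namely the induction on the number of $k$-means iterations carried out in Lemmas~\ref{lemma:base_t_0}, \ref{lemma:base_t_1}, and \ref{lemma:inductive} --- can be repeated essentially verbatim with $x'(\phi)$ replaced by $\tilde{x}'(\phi)$, and this reproduces the claimed decomposition of $\mathcal{S}_T^{\Sigma}$.

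First I would record the elementary fact that $\tilde{x}'(\phi)$ is an affine function of $\phi$, with $i$th row given by \eqref{eq:x_phi_Sigma}. This follows from the orthogonal decomposition $x = {\Pi}_\nu^\perp x + \nu\nu^\top x/\Vert\nu\Vert_2^2$ together with ${\Pi}_\nu^\perp\nu = 0_n$, and from the observation that, because ${\Sigma}^{-\frac{1}{2}}$ is symmetric, $\{\text{dir}({\Sigma}^{-\frac{1}{2}}x^\top\nu)\}^\top{\Sigma}^{\frac{1}{2}} = (x^\top\nu)^\top/\Vert{\Sigma}^{-\frac{1}{2}}x^\top\nu\Vert_2$; hence the coefficient of $\phi$ in $[\tilde{x}'(\phi)]_i$ is $\nu_i\,\Vert x^\top\nu\Vert_2/(\Vert\nu\Vert_2^2\,\Vert{\Sigma}^{-\frac{1}{2}}x^\top\nu\Vert_2)$ times $\text{dir}(x^\top\nu)$, matching \eqref{eq:x_phi_Sigma}.

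Second, I would re-derive the characterization of $\mathcal{S}_T^{\Sigma}$ by induction on the iteration count, exactly mirroring Appendix~\ref{appendix:proof_S_T_induction}. The base cases ($t=0$ and $t=1$) and the inductive step there invoke only the definition of the assignment step (step~3b of Algorithm~\ref{algo:k_means_alt_min}, an argmin over squared distances to the current centroids) and the definition of the centroid-update step (step~3a, an average of the observations currently assigned to cluster $k$); neither invocation uses the explicit form of the perturbed data. Substituting $\tilde{x}'(\phi)$ for $x'(\phi)$ and $c_i^{(t)}(\tilde{x}'(\phi))$ for $c_i^{(t)}(x'(\phi))$ everywhere in Lemmas~\ref{lemma:base_t_0}, \ref{lemma:base_t_1}, and \ref{lemma:inductive} leaves every set-inclusion implication intact, which yields the two intersected families of distance inequalities displayed in the corollary.

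I do not expect a genuine obstacle here; the only point requiring care is bookkeeping, namely getting the scalar $\Vert x^\top\nu\Vert_2/\Vert{\Sigma}^{-\frac{1}{2}}x^\top\nu\Vert_2$ in \eqref{eq:x_phi_Sigma} correct, since that factor is precisely where the whitened-covariance setting departs from Proposition~\ref{prop:phi_ineq}. Once $\tilde{x}'(\phi)$ is confirmed to be affine in $\phi$, the downstream computational apparatus also transfers: the analogues of Lemmas~\ref{lemma:simple_norm} and \ref{lemma:canonical_norm}, with $x'(\phi)$ replaced by $\tilde{x}'(\phi)$, still express each squared distance $\Vert[\tilde{x}'(\phi)]_i - \sum_{i'=1}^n w_{i'}^{(t-1)}(k)[\tilde{x}'(\phi)]_{i'}\Vert_2^2$ as a quadratic in $\phi$ with analytically computable coefficients, so $\mathcal{S}_T^{\Sigma}$ is again an intersection of $\mathcal{O}(nKT)$ solvable quadratic inequalities.
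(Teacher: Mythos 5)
Your proposal matches the paper's own treatment: the paper simply notes that $\tilde{x}'(\phi)$ is an affine function of $\phi$ with rows given by \eqref{eq:x_phi_Sigma} and then states that ``a minor modification of Proposition~\ref{prop:phi_ineq} yields the corollary,'' which is precisely the substitution-and-rerun-the-induction argument you spell out. Your verification of the scalar factor $\Vert x^\top\nu\Vert_2/\Vert\Sigma^{-1/2}x^\top\nu\Vert_2$ via the symmetry of $\Sigma^{-1/2}$ is correct and is the only point of the adaptation that requires any care.
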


We also have the following extensions of Lemmas~\ref{lemma:simple_norm} and \ref{lemma:canonical_norm}, which enable efficient computation of the expressions in Corollary~\ref{cor:phi_ineq_Sigma}.

\begin{lemma}[Section 4.2 in \citet{Gao2020-yt}]
\label{lemma:simple_norm_Sigma}
For $\tilde{{x}}'(\phi)$  in \eqref{eq:x_phi_Sigma} and $\nu$  in \eqref{eq:nu_def},
{\small$\left\Vert\qty[\tilde{{x}}'(\phi)]_i- \qty[\tilde{{x}}'(\phi)]_j\right\Vert_2^2 = a' \phi^2 + b' \phi + \gamma'$}, where {\small$a' = \qty(\frac{|| {x}^{\top}\nu||_2}{||{\Sigma}^{-\frac{1}{2}}{x}^{\top}\nu||_2})^2\qty(\frac{\nu_i-\nu_j}{\Vert\nu\Vert_2^2} )^2 $, $b'=2 \qty(\frac{|| {x}^{\top}\nu||_2}{||{\Sigma}^{-\frac{1}{2}}{x}^{\top}\nu||_2})\left( \frac{\nu_i-\nu_j}{\Vert\nu\Vert_2^2} \left\langle {x}_i-{x}_j,  \emph{\text{dir}}({x}^{\top}\nu) \right\rangle  -\qty(\frac{\nu_i-\nu_j}{\Vert\nu\Vert_2^2} )^2 \Vert{x}^{\top}\nu\Vert_2 \right)$}, and {\small$\gamma'=\left\Vert {x}_i  -  {x}_j -(\nu_i-\nu_j) \frac{{x}^{\top}\nu}{||\nu||_2^2} \right\Vert_2^2$}. 
\end{lemma}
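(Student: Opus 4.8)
The plan is to reduce the statement to Lemma~\ref{lemma:simple_norm}. First I would look at the defining formula \eqref{eq:x_phi_Sigma} for $\tilde{{x}}'(\phi)$ and observe that it is identical to the formula for ${x}'(\phi)$ underlying Lemma~\ref{lemma:simple_norm}, except that the coefficient multiplying $\phi$ acquires an extra scalar factor $\kappa := \Vert {x}^{\top}\nu\Vert_2 / \Vert {\Sigma}^{-\frac{1}{2}}{x}^{\top}\nu\Vert_2$. Concretely, setting $\psi := \kappa\phi$, one has $[\tilde{{x}}'(\phi)]_i = [{x}'(\psi)]_i$ for every $i=1,\dots,n$, since both sides equal ${x}_i - \nu_i (\Vert {x}^{\top}\nu\Vert_2/\Vert\nu\Vert_2^2)\,\text{dir}({x}^{\top}\nu) + (\nu_i/\Vert\nu\Vert_2^2)\,\psi\,\text{dir}({x}^{\top}\nu)$.

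Next I would invoke Lemma~\ref{lemma:simple_norm} directly: $\Vert [\tilde{{x}}'(\phi)]_i - [\tilde{{x}}'(\phi)]_j\Vert_2^2 = \Vert [{x}'(\psi)]_i - [{x}'(\psi)]_j\Vert_2^2 = a\psi^2 + b\psi + \gamma$, with $a,b,\gamma$ as in that lemma. Substituting $\psi = \kappa\phi$ gives $a\kappa^2\phi^2 + b\kappa\phi + \gamma$, so it only remains to match coefficients: $a' = \kappa^2 a$ and $b' = \kappa b$ follow by inspection, while $\gamma' = \gamma$ follows from the identity $\Vert {x}^{\top}\nu\Vert_2\,\text{dir}({x}^{\top}\nu) = {x}^{\top}\nu$, which turns $\gamma$ from Lemma~\ref{lemma:simple_norm} into exactly the expression claimed for $\gamma'$.

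Alternatively — and this is the route the paper actually hints at, via ``the same set of calculations'' — one can simply repeat the expansion directly: write $[\tilde{{x}}'(\phi)]_i - [\tilde{{x}}'(\phi)]_j = r + \kappa\big((\nu_i-\nu_j)/\Vert\nu\Vert_2^2\big)\phi\,\text{dir}({x}^{\top}\nu)$ with $r := {x}_i - {x}_j - (\nu_i-\nu_j)(\Vert {x}^{\top}\nu\Vert_2/\Vert\nu\Vert_2^2)\,\text{dir}({x}^{\top}\nu)$, then expand the squared norm as $\Vert r\Vert_2^2 + 2\phi\,\kappa\big((\nu_i-\nu_j)/\Vert\nu\Vert_2^2\big)\langle r,\text{dir}({x}^{\top}\nu)\rangle + \phi^2\kappa^2\big((\nu_i-\nu_j)/\Vert\nu\Vert_2^2\big)^2$, using $\Vert\text{dir}({x}^{\top}\nu)\Vert_2 = 1$, and collect the three coefficients (again using $\Vert {x}^{\top}\nu\Vert_2\,\text{dir}({x}^{\top}\nu) = {x}^{\top}\nu$ to simplify $\Vert r\Vert_2^2$).

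I do not anticipate a genuine obstacle: the content is pure bookkeeping. The one thing to get right is that the scalar $\kappa$ multiplies only the linear-in-$\phi$ term of $\tilde{{x}}'(\phi)$ and not its $\phi$-independent part, so that it enters $a'$ quadratically, $b'$ linearly, and is absent from $\gamma'$.
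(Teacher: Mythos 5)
Your proposal is correct and matches the paper's (very terse) argument, which simply states that the result follows from the same calculations as Lemma~\ref{lemma:simple_norm}; your second route is exactly that computation, and your first route via the substitution $\psi=\kappa\phi$ with $\kappa=\Vert x^{\top}\nu\Vert_2/\Vert\Sigma^{-1/2}x^{\top}\nu\Vert_2$ is a clean formalization of why those calculations carry over, correctly placing $\kappa^2$ in $a'$, $\kappa$ in $b'$, and leaving $\gamma'$ untouched.
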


\begin{lemma} 
\label{lemma:canonical_norm_Sigma}
For $\tilde{{x}}'(\phi)$ in \eqref{eq:x_phi_Sigma}, $\nu$  in \eqref{eq:nu_def}, and $w_{i}^{(t)}(k)$  in \eqref{eq:w_i_t_k}, 
{\small $\left\Vert \qty[\tilde{{x}}'(\phi)]_i - \sum_{i'=1}^n w_{i'}^{(t-1)}(k)\qty[\tilde{{x}}'(\phi)]_{i'}   \right\Vert_2^2 = \tilde{a}' \phi^2 + \tilde{b}'\phi +\tilde{\gamma}'$}, where {\small $$\tilde{a}' = \frac{1}{\Vert\nu\Vert_2^4}\qty(\frac{|| {x}^{\top}\nu||_2}{||{\Sigma}^{-\frac{1}{2}}{x}^{\top}\nu||_2})^2\qty(\nu_i - \sum_{i'=1}^n w_{i'}^{(t-1)}(k) \nu_{i'} )^2,$$}
{\small $$\tilde{b}' = \qty(\frac{2|| {x}^{\top}\nu||_2}{\Vert\nu\Vert_2^2||{\Sigma}^{-\frac{1}{2}}{x}^{\top}\nu||_2}) \Bigg\{ \qty(\nu_i - \sum_{i'=1}^n  w_{i'}^{(t-1)}(k) \nu_{i'} ) \left\langle {x}_i - \sum_{i'=1}^n  w_{i'}^{(t-1)}(k) {x}_{i'},\emph{\text{dir}}({x}^{\top}\nu) \right\rangle - 
\frac{\Vert{x}^{\top}\nu \Vert_2}{\Vert\nu\Vert_2^4}\qty(\nu_i -  \sum_{i'=1}^n w_{i'}^{(t-1)}(k) \nu_{i'} )^2  \Bigg\},$$} and {\small$$\tilde{\gamma}' =  \left\Vert {x}_i - \sum_{i'=1}^n w_{i'}^{(t-1)}(k)  {x}_{i'} - \qty(\nu_i - \sum_{i'=1}^n w_{i'}^{(t-1)}(k) \nu_{i'} )\frac{{x}^\top \nu}{\Vert\nu\Vert_2^2} \right\Vert_2^2.$$}
\end{lemma}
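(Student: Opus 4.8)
The plan is to proceed exactly as in the proof of Lemma~\ref{lemma:canonical_norm} in Appendix~\ref{appendix:quad_ineq}: observe that $[\tilde{{x}}'(\phi)]_i$ is an affine function of $\phi$, hence so is the residual $[\tilde{{x}}'(\phi)]_i - \sum_{i'=1}^n w_{i'}^{(t-1)}(k)[\tilde{{x}}'(\phi)]_{i'}$, and then expand its squared $\ell_2$ norm, which is automatically a quadratic polynomial in $\phi$ whose three coefficients can be read off and matched to $\tilde{a}'$, $\tilde{b}'$, and $\tilde{\gamma}'$.

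First I would rewrite \eqref{eq:x_phi_Sigma} using $\nu_i \Vert {x}^\top \nu \Vert_2\,\text{dir}({x}^\top \nu)/\Vert \nu \Vert_2^2 = \nu_i ({x}^\top \nu)/\Vert \nu \Vert_2^2$, so that $[\tilde{{x}}'(\phi)]_i = {x}_i - \nu_i ({x}^\top \nu)/\Vert \nu \Vert_2^2 + \rho\,(\nu_i/\Vert \nu \Vert_2^2)\,\phi\,\text{dir}({x}^\top \nu)$, where $\rho = \Vert {x}^\top \nu \Vert_2/\Vert {\Sigma}^{-1/2}{x}^\top \nu \Vert_2$. Writing $\beta_i = \nu_i - \sum_{i'=1}^n w_{i'}^{(t-1)}(k)\nu_{i'}$ and $r_i = {x}_i - \sum_{i'=1}^n w_{i'}^{(t-1)}(k){x}_{i'}$, the weighted combination then takes the form $[\tilde{{x}}'(\phi)]_i - \sum_{i'=1}^n w_{i'}^{(t-1)}(k)[\tilde{{x}}'(\phi)]_{i'} = v + u\phi$ with $v = r_i - \beta_i ({x}^\top \nu)/\Vert \nu \Vert_2^2$ and $u = \rho\,(\beta_i/\Vert \nu \Vert_2^2)\,\text{dir}({x}^\top \nu)$. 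Expanding $\Vert v + u\phi\Vert_2^2 = \Vert u\Vert_2^2\phi^2 + 2\langle u, v\rangle \phi + \Vert v\Vert_2^2$, I would use $\Vert \text{dir}({x}^\top \nu)\Vert_2 = 1$ to obtain $\Vert u\Vert_2^2 = \rho^2 \beta_i^2/\Vert \nu \Vert_2^4 = \tilde{a}'$, use $\langle \text{dir}({x}^\top \nu), {x}^\top \nu\rangle = \Vert {x}^\top \nu\Vert_2$ to obtain $2\langle u,v\rangle = (2\rho/\Vert \nu \Vert_2^2)\{\beta_i \langle r_i, \text{dir}({x}^\top \nu)\rangle - (\Vert {x}^\top \nu\Vert_2/\Vert \nu \Vert_2^2)\beta_i^2\} = \tilde{b}'$, and read off $\Vert v\Vert_2^2 = \tilde{\gamma}'$ directly.

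The only real obstacle is careful bookkeeping of the scalar factor $\rho = \Vert {x}^\top \nu\Vert_2/\Vert {\Sigma}^{-1/2}{x}^\top \nu\Vert_2$, which is the single difference from Lemma~\ref{lemma:canonical_norm}; since $\tilde{{x}}'(\phi)$ in \eqref{eq:x_phi_Sigma} differs from ${x}'(\phi)$ in \eqref{eq:xphi} only through this multiplicative constant on the $\phi$-direction term, one can alternatively obtain the result by applying the computation in the proof of Lemma~\ref{lemma:canonical_norm} after the reparametrization $\phi \mapsto \rho\,\phi$ on the linear part, and then verifying the stated closed forms of $\tilde{a}'$, $\tilde{b}'$, and $\tilde{\gamma}'$ by inspection.
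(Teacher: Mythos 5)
Your proposal is correct and is essentially identical to the paper's own proof, which consists solely of the remark that Lemma~\ref{lemma:canonical_norm_Sigma} ``follows from the same set of calculations'' as Lemmas~\ref{lemma:simple_norm} and~\ref{lemma:canonical_norm} --- i.e., precisely your observation that $[\tilde{x}'(\phi)]_i$ is affine in $\phi$, with the single extra scalar $\rho=\Vert x^\top\nu\Vert_2/\Vert\Sigma^{-1/2}x^\top\nu\Vert_2$ multiplying the direction term, followed by expansion of the squared $\ell_2$ norm of $v+u\phi$. One caveat: your (correct) expansion gives the second term inside the braces of $\tilde b'$ as $(\Vert x^\top\nu\Vert_2/\Vert\nu\Vert_2^2)\bigl(\nu_i-\sum_{i'=1}^n w_{i'}^{(t-1)}(k)\nu_{i'}\bigr)^2$ rather than the displayed $(\Vert x^\top\nu\Vert_2/\Vert\nu\Vert_2^4)\bigl(\cdots\bigr)^2$; since specializing the weights to a point mass at $j$ must recover the coefficient $b$ of Lemma~\ref{lemma:simple_norm}, the displayed power of $\Vert\nu\Vert_2$ appears to be a typo carried over from the statement of Lemma~\ref{lemma:canonical_norm}, so your claim that the cross term ``equals $\tilde b'$'' holds for the corrected coefficient rather than the one printed in the lemma.
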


Proofs of Lemmas~\ref{lemma:simple_norm_Sigma} and \ref{lemma:canonical_norm_Sigma} follow from the same set of calculations in the proofs of Lemmas~\ref{lemma:simple_norm} and \ref{lemma:canonical_norm} in Appendix~\ref{appendix:quad_ineq}.


\subsection{Proof of Proposition~\ref{prop:estimated_sigma}}
\label{appendix:proof_p_val_estimated_sigma}

Proof of Proposition~\ref{prop:estimated_sigma} is similar to the proof of Lemma 1 in \citet{markovic2018unifying} and the proof of Lemma 7 in \citet{Tibshirani2018-eo}.

We first present an auxiliary lemma. 
\begin{lemma}
\label{lemma:pkhat_cont_function}
For any $c_i^{(t)}(x),i=1,\ldots,n;t=1,\ldots, T$, $\pKhat(\hat\sigma)$ defined in \eqref{eq:p_val_chen_estimate_sigma} is a continuous and monotonically increasing function of $\hat\sigma$.
\end{lemma}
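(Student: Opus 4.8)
The plan is to recast $\pKhat(\hat\sigma)$ as a ratio involving the survival function and cumulative distribution function of a scaled $\chi_q$ random variable truncated to the set $\mathcal{S}_T$ of \eqref{eq:s_set}, and then to observe that scaling by $\hat\sigma$ amounts to rescaling the truncation set. First I would note that, by exactly the argument behind Proposition~\ref{prop:single_param_p}, conditioning on $\bigcap_{t=0}^{T}\bigcap_{i=1}^{n}\{c_i^{(t)}({x}'(\phi(\hat\sigma))) = c_i^{(t)}({x})\}$ confines $\phi(\hat\sigma)\sim(\hat\sigma\Vert\nu\Vert_2)\chi_q$ to the event $\{\phi(\hat\sigma)\in\mathcal{S}_T\}$, where crucially $\mathcal{S}_T$ depends only on ${x}$ and $\nu$, not on $\hat\sigma$ (the perturbation direction and base point in \eqref{eq:xphi} do not involve $\hat\sigma$). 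Writing $F_q(\cdot)$ for the cumulative distribution function of a $\chi_q$ random variable, and substituting $u = \phi/\hat\sigma$, we obtain
\begin{equation}
\label{eq:pkhat_ratio}
\pKhat(\hat\sigma) = \frac{\int_{\mathcal{S}_T/(\hat\sigma\Vert\nu\Vert_2)} 1\{u \geq \Vert{x}^\top\nu\Vert_2/(\hat\sigma\Vert\nu\Vert_2)\}\, dF_q(u)}{\int_{\mathcal{S}_T/(\hat\sigma\Vert\nu\Vert_2)} dF_q(u)},
\end{equation}
where $\mathcal{S}_T/(\hat\sigma\Vert\nu\Vert_2) = \{u : \hat\sigma\Vert\nu\Vert_2\, u \in \mathcal{S}_T\}$.

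Continuity in $\hat\sigma$ then follows from dominated convergence: the density $dF_q$ is bounded, $\mathcal{S}_T$ is a finite union of intervals (Proposition~\ref{prop:phi_ineq} plus Lemmas~\ref{lemma:simple_norm}--\ref{lemma:canonical_norm}), so the endpoints of $\mathcal{S}_T/(\hat\sigma\Vert\nu\Vert_2)$ move continuously with $\hat\sigma$, and the denominator is bounded away from zero because the observed ${x}$ itself satisfies the conditioning event, so $\Vert{x}^\top\nu\Vert_2 \in \mathcal{S}_T$ and the truncated distribution is well defined with positive mass. For monotonicity, I would argue that enlarging $\hat\sigma$ is equivalent to shrinking the threshold and the truncation set by the common factor $1/\hat\sigma$; more cleanly, I would write $\pKhat(\hat\sigma) = 1 - \mathbb{P}[\phi(\hat\sigma) < \Vert{x}^\top\nu\Vert_2 \mid \phi(\hat\sigma)\in\mathcal{S}_T]$ and show the conditional probability $\mathbb{P}[\phi(\hat\sigma) < s \mid \phi(\hat\sigma)\in\mathcal{S}_T]$ is decreasing in $\hat\sigma$ for each fixed $s = \Vert{x}^\top\nu\Vert_2$. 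This is a monotone-likelihood-ratio statement for the scale family $(\hat\sigma\Vert\nu\Vert_2)\chi_q$: the family of densities $\phi \mapsto \hat\sigma^{-1}f_q(\phi/(\hat\sigma\Vert\nu\Vert_2))$ has monotone likelihood ratio in $\phi$ decreasing in the scale parameter $\hat\sigma$ (since $f_q(\phi)\propto \phi^{q-1}e^{-\phi^2/2}$ and $\log[f_q(\phi/a)/f_q(\phi/b)]$ is monotone in $\phi$ for $a \neq b$), and truncation to a fixed set $\mathcal{S}_T$ preserves stochastic ordering, hence preserves the ordering of $\mathbb{P}[\phi < s \mid \phi\in\mathcal{S}_T]$.

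The main obstacle is handling the truncation set $\mathcal{S}_T$ carefully: one must verify that $\mathcal{S}_T$ genuinely does not depend on $\hat\sigma$ (it does not, since ${x}'(\phi)$ in \eqref{eq:xphi} is defined without reference to $\sigma$), and that the stochastic ordering of the untruncated scale family is inherited by the truncated versions when conditioning on the same set $\mathcal{S}_T$. The latter is a standard fact --- if $Z_a \preceq_{\text{st}} Z_b$ and we condition both on lying in a common measurable set $A$, monotone likelihood ratio ordering (which is stronger than, and implies, stochastic ordering, and is preserved under conditioning on any set) transfers --- but it is worth stating explicitly rather than asserting, since plain stochastic ordering alone is \emph{not} preserved under conditioning on a common set, whereas the likelihood-ratio ordering is. So I would lead with the likelihood-ratio computation for the $\chi_q$ scale family and invoke its preservation under truncation.
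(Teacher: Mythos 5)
Your proposal is correct, and it diverges from the paper's proof in one substantive place. For continuity, both arguments amount to the same thing: the paper writes $\pKhat(\hat\sigma)$ as an explicit ratio of integrals of the scaled $\chi_q$ density restricted to $\mathcal{S}_T$ and observes that this is continuous in $\hat\sigma$; your change of variables $u=\phi/\hat\sigma$ plus dominated convergence is a more careful version of the same computation, and your explicit check that the denominator is positive (because $\Vert x^\top\nu\Vert_2\in\mathcal{S}_T$) is a detail the paper leaves implicit. The real difference is in the monotonicity step: the paper simply cites Lemma S3 of \citet{Gao2020-yt}, whereas you prove the statement from scratch by noting that the scale family $(\hat\sigma\Vert\nu\Vert_2)\chi_q$ has monotone likelihood ratio in $\phi$ (the ratio of densities at two scales is $\propto\exp\{-\tfrac{\phi^2}{2\Vert\nu\Vert_2^2}(\hat\sigma_2^{-2}-\hat\sigma_1^{-2})\}$, increasing in $\phi$ when $\hat\sigma_2>\hat\sigma_1$), that likelihood-ratio ordering survives conditioning on the fixed set $\mathcal{S}_T$, and that it therefore implies the stochastic ordering of the truncated laws. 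Your explicit remark that plain stochastic dominance would \emph{not} survive truncation, and that one genuinely needs the likelihood-ratio ordering, is exactly the subtlety that the cited external lemma handles; making it self-contained is a legitimate improvement in transparency at the cost of a slightly longer argument. Your observation that $\mathcal{S}_T$ does not depend on $\hat\sigma$ (since $x'(\phi)$ in \eqref{eq:xphi} involves only $x$, $\nu$, and $\phi$) is also the correct and necessary prerequisite for both steps. The only cosmetic caveat is that the monotone-likelihood-ratio argument delivers weak (non-strict) monotonicity in degenerate cases where $\mathcal{S}_T\cap[\Vert x^\top\nu\Vert_2,\infty)$ exhausts $\mathcal{S}_T$, but that is all the lemma requires.
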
 
\begin{proof}
By the definition in \eqref{eq:p_val_chen_estimate_sigma}, we have that
\begin{align}
\label{eq:p_khat_chi_sq}
\pKhat(\hat\sigma) &= \frac{\int_{\Vert {x}^\top \nu \Vert_2}^\infty (\frac{1}{2})^{q/2-1}\frac{t^{q-1}}{\Gamma(q/2)}\Vert \nu \Vert_2^{-q}\hat\sigma^{-q}\exp\qty(-\frac{t^2}{2\hat\sigma^2\Vert \nu \Vert_2^2}) 1\qty{t\in\mathcal{S}_T}dt}{\int_{0}^{\infty} (\frac{1}{2})^{q/2-1}\frac{t^{q-1}}{\Gamma(q/2)}\Vert \nu \Vert_2^{-q}\hat\sigma^{-q}\exp\qty(-\frac{t^2}{2\hat\sigma^2\Vert \nu \Vert_2^2}) 1\qty{t\in\mathcal{S}_T}dt},
\end{align}
where $\mathcal{S}_T$ defined in \eqref{eq:s_set} is a function of $c_i^{(t)}(x),i=1,\ldots,n;t=1,\ldots, T$. By inspection, \eqref{eq:p_khat_chi_sq} is a continuous function of $\hat\sigma$, because the product or ratio of two continuous functions is still continuous. It remains to show that \eqref{eq:p_khat_chi_sq} is increasing in $\hat\sigma$. This follows directly from Lemma S3. of \citet{Gao2020-yt}.
\end{proof}



Provided that $\hat\sigma$ converges to $\sigma$ in probability, we can combine Lemma~\ref{lemma:pkhat_cont_function} and the continuous mapping theorem to see that $\pKhat(\hat\sigma)$ converges to $\pK(\sigma)$ in probability, i.e., for all $\epsilon>0, \lim_{q\to\infty}\pr\qty(|\pKhat(\hat\sigma)-\pK(\sigma)|\geq \epsilon) = 0$. Next, letting $A_{q}$ denote the event $\bigcap_{t=0}^{T}\bigcap_{i=1}^{n} \left\{c_i^{(t)}\left({X}^{(q)}\right) = c_i^{(t)}\left({x}^{(q)}\right)\right\}$, we will show that under the assumptions in Proposition~\ref{prop:estimated_sigma}, $\pKhat(\hat\sigma)$ converges to $\pK(\sigma)$ in probability, \emph{conditional on $A_{q}$}. For any $\epsilon > 0$, we have that
{
\begin{align*}
\lim_{q\to\infty}&\pr_{H_0^{(q)}}\qty{|\pKhat(\hat\sigma)-\pK(\sigma)| \leq \epsilon \;\middle\vert\; A_{q}} \\
&\overset{a.}{=} \lim_{q\to\infty} \frac{\pr_{H_0^{(q)}}\qty{|\pKhat(\hat\sigma)-\pK(\sigma)| \leq \epsilon , A_{q} }}{\pr_{H_0^{(q)}}\qty(A_{q})} \\
&\overset{b.}{\geq} \lim_{q\to\infty} \frac{\pr_{H_0^{(q)}}\qty(A_{q} ) - \pr_{H_0^{(q)}}\qty{|\pKhat(\hat\sigma)-\pK(\sigma)| > \epsilon}}{\pr_{H_0^{(q)}}\qty(A_{q})} \\
&\overset{c.}{=}  \frac{\lim_{q\to\infty} \pr_{H_0^{(q)}}\qty(A_{q})-\lim_{q\to\infty}\pr_{H_0^{(q)}}\qty{|\pKhat(\hat\sigma)-\pK(\sigma)| > \epsilon}}{\lim_{q\to\infty} \pr_{H_0^{(q)}}\qty(A_{q})} \\
&\overset{d.}{=} \frac{\delta}{\delta} = 1.
\end{align*}}
Here, step $a.$ follows from Bayes rule, and the observation that the denominator is non-zero for finite $q$. In step $b.$, we used the  lower bound that for events $A,B$ defined on the same probability space, $\pr(A\cap B) = \pr(A) - \pr(A\setminus B) \geq  \pr(A) - \pr(B^C)$. Next, $c.$ follows from distributing the limit, which is valid because of the assumption that $\lim_{q\to\infty} \pr_{H_0^{(q)}}\qty(A_{q}) = \delta >0$; finally, $d.$ follows from the fact that $\pKhat(\hat\sigma)$ converges to $\pK(\sigma)$ in probability for any sequence of $\mu^{(q)},q=1,2,\ldots$, which implies the convergence under $H_0:{\mu^{(q)}}^{\top} \nu^{(q)} = 0$ as well.

Finally, we have that
\begin{align}
\lim_{q\to\infty}\pr_{H_0^{(q)}}\qty{\pKhat(\hat\sigma)\leq \alpha \;\middle\vert\; A_q }  \overset{a.}{=} \lim_{q\to\infty} \pr_{H_0^{(q)}}\qty{\pK(\sigma)\leq \alpha \;\middle\vert\; A_q}  \overset{b.}{=} \lim_{q\to\infty} \alpha = \alpha.
\end{align}
Here, step $a.$ follows from $\pKhat(\hat\sigma)$ converging to $\pK(\sigma)$ in probability, \emph{conditional on $A_{q}$}. Step $b.$ follows from the fact that the result of Proposition~\ref{prop:single_param_p} applies for any positive integer $q$. This completes the proof of Proposition~\ref{prop:estimated_sigma}.

Proposition~\ref{prop:estimated_sigma} assumes that we have a consistent estimator $\hat\sigma$ of $\sigma$. In Appendix~\ref{appendix:variance_estimation}, we analyze different estimators of $\sigma$ in \eqref{eq:data_gen}, and prove that, under appropriate sparsity assumptions on ${\mu}$ in \eqref{eq:data_gen},  $\hat\sigma_{\text{MED}}$ in \eqref{eq:sigma_hat_MED} is a consistent estimator for $\sigma$.

As an alternative, we can also use an asymptotically conservative estimator of $\sigma$ as in \citet{Gao2020-yt}. This leads to an asymptotically conservative $p$-value; details are stated in Corollary~\ref{cor:estimated_conserv_sigma}.

\begin{corollary}
\label{cor:estimated_conserv_sigma}
For $q=1,2,\ldots,$ suppose that ${X}^{(q)} \sim \mathcal{MN}_{n\times q}\qty({\mu}^{(q)}, \emph{\textbf{I}}_n, \sigma^2 \emph{\textbf{I}}_q)$. Let ${x}^{(q)}$ be a realization from ${X}^{(q)}$ and 
$c_i^{(t)}(\cdot)$ be the cluster to which the $i$th observation is assigned during the $t$th iteration of step 3b. of Algorithm~\ref{algo:k_means_alt_min}. Consider the sequence of null hypotheses $H_0^{(q)}: {\mu^{(q)}}^\top \nu^{(q)} = 0_q$, where $\nu^{(q)}$ defined in \eqref{eq:nu_def} is the contrast vector resulting from applying $k$-means clustering on ${x}^{(q)}$. Suppose that (i) $\hat\sigma$ is an asymptotically conservative estimator of $\sigma$, i.e., $\lim_{q\to\infty}\pr\qty(\hat\sigma({X}^{(q)})\geq \sigma) = 1$; and (ii) there exists $\delta \in (0,1)$ such that $\lim_{q\to\infty}\pr_{H_0^{(q)}}\qty[\bigcap_{t=0}^{T}\bigcap_{i=1}^{n} \left\{c_i^{(t)}\left({X}^{(q)}\right) = c_i^{(t)}\left({x}^{(q)}\right)\right\} ] > \delta$.
Then, $\forall \alpha \in (0,1)$, we have that 
$\lim_{q\to\infty}\pr_{H_0^{(q)}}\qty[\pKhat(\hat\sigma)\leq \alpha \;\middle\vert\; \bigcap_{t=0}^{T}\bigcap_{i=1}^{n} \left\{c_i^{(t)}\left({X}^{(q)}\right) = c_i^{(t)}\left({x}^{(q)}\right)\right\} ] \leq \alpha$.
\end{corollary}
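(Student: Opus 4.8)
The plan is to mimic the proof of Proposition~\ref{prop:estimated_sigma} almost verbatim, replacing the convergence-in-probability argument with a one-sided stochastic-domination argument that exploits the monotonicity established in Lemma~\ref{lemma:pkhat_cont_function}. Write $A_q$ for the conditioning event $\bigcap_{t=0}^{T}\bigcap_{i=1}^{n}\{c_i^{(t)}(X^{(q)}) = c_i^{(t)}(x^{(q)})\}$. The key observation is that on the event $\{\hat\sigma(X^{(q)}) \geq \sigma\}$, Lemma~\ref{lemma:pkhat_cont_function} gives $\pKhat(\hat\sigma) \geq \pKhat(\sigma) = \pK(\sigma)$, so that $\{\pKhat(\hat\sigma) \leq \alpha\} \cap \{\hat\sigma \geq \sigma\} \subseteq \{\pK(\sigma) \leq \alpha\}$. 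Hence for every $q$,
\begin{align*}
\pr_{H_0^{(q)}}\{\pKhat(\hat\sigma) \leq \alpha \mid A_q\}
&\leq \pr_{H_0^{(q)}}\{\pK(\sigma) \leq \alpha \mid A_q\} + \pr_{H_0^{(q)}}\{\hat\sigma < \sigma \mid A_q\}.
\end{align*}

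First I would handle the first term on the right: by Proposition~\ref{prop:single_param_p} (which holds for each fixed positive integer $q$ under the matrix-normal model with spherical covariance), $\pK(\sigma)$ is, conditional on $A_q$, exactly Uniform$(0,1)$, so that term equals $\alpha$ for every $q$. Next I would show the second term vanishes as $q \to \infty$. Using Bayes' rule and the bound $\pr(B \mid A_q) \leq \pr(B)/\pr(A_q)$ valid for any event $B$, we get $\pr_{H_0^{(q)}}\{\hat\sigma < \sigma \mid A_q\} \leq \pr_{H_0^{(q)}}(\hat\sigma < \sigma)/\pr_{H_0^{(q)}}(A_q)$. Assumption (i) gives $\lim_{q\to\infty}\pr(\hat\sigma(X^{(q)}) < \sigma) = 0$, which also holds under the sequence of nulls $H_0^{(q)}$ since the statement is required for all $\mu^{(q)}$; assumption (ii) gives $\liminf_{q\to\infty}\pr_{H_0^{(q)}}(A_q) > \delta > 0$. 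The ratio therefore tends to $0$. Combining, $\limsup_{q\to\infty}\pr_{H_0^{(q)}}\{\pKhat(\hat\sigma) \leq \alpha \mid A_q\} \leq \alpha$, which (since a probability is automatically $\geq 0$, and we only claim a $\leq$ bound here) yields the conclusion.

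The main subtlety — not really an obstacle, but the one point that needs care — is making sure the implication $\hat\sigma \geq \sigma \Rightarrow \pKhat(\hat\sigma) \geq \pK(\sigma)$ is invoked with the correct monotonicity direction: Lemma~\ref{lemma:pkhat_cont_function} states $\pKhat(\cdot)$ is \emph{monotonically increasing} in its argument, so a larger (conservative) variance estimate can only inflate the $p$-value, which is exactly what makes the resulting test conservative rather than anti-conservative. One should also note that, unlike in Proposition~\ref{prop:estimated_sigma}, no continuous mapping theorem or joint convergence is needed: the one-sided event $\{\hat\sigma \geq \sigma\}$ has probability tending to $1$ and on it a deterministic inequality holds, so a simple union bound suffices and no limit of $\pKhat(\hat\sigma)$ itself need be identified.
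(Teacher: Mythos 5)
Your proposal is correct and follows exactly the route the paper indicates: the paper omits the proof of Corollary~\ref{cor:estimated_conserv_sigma}, stating only that it follows by combining the proof of Proposition~\ref{prop:estimated_sigma} with the monotonicity of $\pKhat(\hat\sigma)$ in $\hat\sigma$ from Lemma~\ref{lemma:pkhat_cont_function}, and your argument (the inclusion $\{\pKhat(\hat\sigma)\leq\alpha\}\cap\{\hat\sigma\geq\sigma\}\subseteq\{\pK(\sigma)\leq\alpha\}$, the exact level-$\alpha$ statement from Proposition~\ref{prop:single_param_p}, and the Bayes-rule bound with assumption (ii) to kill the conditional probability of $\{\hat\sigma<\sigma\}$) is a correct and complete fleshing-out of precisely that combination.
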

We omit the proof of Corollary~\ref{cor:estimated_conserv_sigma}, as it follows directly from combining the proof of Proposition~\ref{prop:estimated_sigma} and the fact that $\pKhat(\hat\sigma)$ is a  monotonically increasing function of $\hat\sigma$ (see Lemma~\ref{lemma:pkhat_cont_function}).

Finally, we remark that, in principle, the result in Proposition~\ref{prop:estimated_sigma} can be extended to an unknown covariance matrix ${\Sigma}$. However, estimating ${\Sigma}$ is challenging, especially when $q$ is comparable to, or larger than, $n$~\citep{Rousseeuw1987-gb,Bickel2008-if,Avella-Medina2018-qd}. It may be possible to leverage recent advances in robust covariance matrix estimation (e.g., \citet{Han2014-uw,Chen2018-qg,Belomestny2019-ni}) to obtain a consistent estimator of ${\Sigma}$ under model~\eqref{eq:data_full_cov_gen}. 

\subsection{Estimating $\sigma$ in \eqref{eq:data_gen}}
\label{appendix:variance_estimation}
Proposition~\ref{prop:estimated_sigma} states that, under appropriate assumptions, a consistent estimator of $\sigma$ in \eqref{eq:data_gen} leads to asymptotic selective Type I error control. In this section, we analyze the asymptotic behavior of the two variance estimators considered in Section~\ref{section:sim}, $\hat\sigma^2_{\text{MED}}$ and $\hat\sigma^2_{\text{Sample}}$. In particular, we prove that under model \eqref{eq:data_gen} and a sparsity assumption on ${\mu}$ (defined in \eqref{eq:data_gen}), a close analog of $\hat\sigma_{\text{MED}}^2$ in \eqref{eq:sigma_hat_MED} that does not subtract the column median is a consistent estimator of $\sigma^2$. Moreover, we prove that $\hat\sigma_{\text{Sample}}^2$ is a conservative estimator of $\sigma^2$, and characterize its exact bias.



We first introduce an auxiliary result that specifies the rate of convergence for a median-based estimator of the variance in the sparse vector model~\citep{Comminges2021-gm}. For a vector $\theta \in \mathbb{R}^n$, we use $\Vert\theta\Vert_0$ to denote its $\ell_0$ norm, i.e. $\Vert\theta\Vert_0 = \sum_{i=1}^n 1\qty{\theta_i\neq 0}$.

\begin{lemma}[Proposition 6 in \citet{Comminges2021-gm}]
\label{eq:lemma_sigma_estimation}
Consider the model 
\begin{align}
\label{eq:med_model}
Y_i = \theta_i + \sigma \xi_i, \quad i=1,\ldots, d,
\end{align}
where $\sigma$ is unknown, and the independently and identically distributed noise $\xi_i$ satisfies that (i) $\emph{\E}(\xi_i)= 0$; (ii) $\emph{\E}(\xi_i^2)=1$; and (iii) $\emph{\E}\qty(|\xi_i|^{2+\epsilon})<\infty$ for some $\epsilon>0$.
We further assume that the signal $\theta$ is $s$-sparse, i.e., $\Vert \theta \Vert_0 \leq s$. Denoting by $M_{\xi_1^2}$ the median of $\xi_1^2$, we consider the following estimator of $\sigma^2$:
\begin{align}
\label{eq:sigma_med_proof_linear}
\bar\sigma_{\text{MED}}^2 = \emph{\text{median}}(Y_1^2,\ldots, Y_d^2)/M_{\xi_1^2}.
\end{align}
Then, there exist constants $\gamma \in (0,1/8)$, $C>0$ depending only on the cumulative distribution function of $\xi_1$ such that for all integers $s$ and $d$ satisfying $1\leq s < \gamma d$,
\begin{align}
\sup_{\sigma>0}\sup_{\Vert \theta\Vert_0 \leq s} \frac{1}{\sigma^2}\emph{\E}\qty{\qty|\bar\sigma_{\text{MED}}^2-\sigma^2|} \leq C \max\qty(\frac{1}{{d}^{1/2}}, \frac{s}{d}).
\end{align}
\end{lemma}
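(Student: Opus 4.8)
The plan is to reduce the bound to a concentration statement for the empirical median of i.i.d.\ squared noise, exploiting the robustness of the median to a sparse perturbation. Replacing $Y_i$ by $Y_i/\sigma$ and $\theta_i$ by $\theta_i/\sigma$ turns $\frac{1}{\sigma^2}\E\{|\bar\sigma_{\text{MED}}^2 - \sigma^2|\}$ into $\E\{|\,\text{median}((\theta_i+\xi_i)^2)/M_{\xi_1^2} - 1|\}$ under the same sparsity constraint, so we may assume $\sigma = 1$; write $F$ for the c.d.f.\ of $\xi_1^2$ and $m = F^{-1}(1/2) = M_{\xi_1^2} > 0$.

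\emph{Step 1 (robustness to sparse contamination).} At most $s$ coordinates have $\theta_i \ne 0$; let $Z_{(1)} \le \cdots \le Z_{(d-s)}$ be the order statistics of the clean subsample $\{\xi_i^2 : \theta_i = 0\}$. A short counting argument with order statistics shows the median of the full (contaminated) sample always lies in $[Z_{(\lceil d/2\rceil - s)},\, Z_{(\lceil d/2\rceil)}]$, since altering $s$ of the $d$ entries can move the middle order statistic by at most $s$ positions relative to the clean ones. Hence it suffices to bound $\E\{|Z_{(j)} - m|\}$ uniformly over indices $j$ within $s$ of the midpoint of a sample of $d - s \ge (1-\gamma)d$ i.i.d.\ copies of $\xi_1^2$.

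\emph{Step 2 (quantile concentration, then conversion to a value bound).} Standard empirical-process bounds for the c.d.f.\ of the clean sample (e.g.\ the Dvoretzky--Kiefer--Wolfowitz inequality) control $\sup_t |\hat F_{d-s}(t) - F(t)|$ at rate $1/\sqrt d$, with enough tail decay to run the argument in expectation; this forces each relevant $Z_{(j)}$ to equal $F^{-1}(1/2 + \eta)$ for some $\eta$ with $|\eta| \lesssim s/d + 1/\sqrt d$. Because the constants are allowed to depend on $F$, we invoke regularity of $F$ at its median --- a density bounded below near $m$, equivalently a quantile function Lipschitz at $1/2$ --- to get $|F^{-1}(1/2 + \eta) - m| \le C|\eta|$, yielding the rate $C\max(1/\sqrt d,\, s/d)$. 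The one nontrivial point in passing to the expectation is the upper tail: $\text{median}(Z_i) > t$ implies at least $(d-s)/2$ of the $\xi_i^2$ exceed $t$, so a binomial tail bound together with $\pr(\xi_1^2 > t) \le C t^{-(1+\epsilon/2)}$ (from $\E|\xi_1|^{2+\epsilon} < \infty$) shows the contribution of the large-median event to $\E\{|\bar\sigma_{\text{MED}}^2 - 1|\}$ is of smaller order than $\max(1/\sqrt d,\, s/d)$.

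The main obstacle I anticipate is precisely this passage from an ``in probability'' bound to an ``in expectation'' bound with only a $(2+\epsilon)$-th moment on $\xi_1$: it requires a careful truncation of the median's tails using the binomial/moment estimates, and it is also where one must identify exactly which local regularity of $F$ near its median makes the quantile-to-value conversion linear in $\eta$ --- both of which are what tie the constant $C$ to the c.d.f.\ of $\xi_1$. A complete argument is given as Proposition~6 of \citet{Comminges2021-gm}.
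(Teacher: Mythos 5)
The paper does not actually prove this lemma: it is imported verbatim as Proposition 6 of \citet{Comminges2021-gm} and used as a black box in the proof of Corollary~\ref{prop:sigma_med_consistency}, so there is no in-paper argument to compare yours against. Judged on its own terms, your outline follows the standard route for results of this type, and the two central ideas are right: the sandwich $Z_{(\lceil d/2\rceil - s)} \le \text{median}(Y_1^2,\ldots,Y_d^2) \le Z_{(\lceil d/2\rceil)}$ over the clean order statistics correctly reduces sparse contamination to a shift of at most $s$ positions in the order statistics, and Dvoretzky--Kiefer--Wolfowitz plus a binomial-tail truncation (using $\pr(\xi_1^2>t)\lesssim t^{-(1+\epsilon/2)}$ from the $(2+\epsilon)$-moment) is the right mechanism for converting quantile concentration into a bound in expectation.

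Two caveats. First, the quantile-to-value conversion $|F^{-1}(1/2+\eta)-m|\le C|\eta|$ requires the distribution of $\xi_1^2$ to have a density bounded away from zero in a neighbourhood of its median; this is \emph{not} implied by the moment conditions (i)--(iii) as stated, and a noise distribution whose squared density vanishes at $M_{\xi_1^2}$ would defeat the claimed $\max(d^{-1/2},s/d)$ rate. You correctly identify this as the point where the constant is tied to the c.d.f., but it should be flagged as an additional regularity hypothesis (built into the noise class of \citet{Comminges2021-gm}) rather than something derivable from (i)--(iii); as transcribed in the paper, the lemma is slightly under-hypothesized on this point. Second, your Step 2 is a plan rather than a proof: the uniformity over $\sigma$ and $\theta$, the integration of the DKW tail in expectation, and the truncation of the upper tail of the median all remain to be executed, and you ultimately defer to the reference. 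Since the paper itself cites rather than proves the result, that deferral is an acceptable resolution, but the proposal should be read as a correct sketch of the external proof, not a self-contained argument.
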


Building on Lemma~\ref{eq:lemma_sigma_estimation}, in Corollary~\ref{prop:sigma_med_consistency}, we analyze the properties of an estimator closely related to $\hat{\sigma}^2_{\text{MED}}$ in \eqref{eq:sigma_hat_MED}. In particular, this estimator $\tilde\sigma_{\text{MED}}^2$ does not subtract the median of each column in the input data. While $\hat{\sigma}^2_{\text{MED}}$ and $\tilde\sigma_{\text{MED}}^2$ are very similar provided that ${\mu}$ is sparse, we expect $\hat\sigma^2_{\text{MED}}$ to perform better empirically in scenarios where ${\mu}$ is sparse \emph{up to a constant shift}, i.e., there exists a matrix ${C}$ such that (i) each column of ${C}$ takes on the same value; and (ii) ${\mu}+{C}$ is sparse.

\begin{corollary}
\label{prop:sigma_med_consistency}
Under model \eqref{eq:data_gen}, consider 
\begin{align}
\label{eq:sigma_med_proof_cluster}
\tilde\sigma_{\text{MED}}^2({X}) = \qty{\underset{1\leq i\leq n, 1\leq j\leq q}{\emph{\text{median}}}\qty({{X}}_{ij}^2)}/M_{\chi_1^2},
\end{align}
where $M_{\chi_1^2}$ is the median of the $\chi_1^2$ distribution.
Then, there exist constants $\gamma_0 \in (0,1/8)$, $c_0>0$ such that for all integers $s$ and $q$ satisfying $1\leq s < \gamma_0 q$,
\begin{align}
\sup_{\sigma>0}\;\sup_{\underset{1\leq i\leq n}{\max}\Vert {\mu}_i \Vert_0 \leq s} \frac{1}{\sigma^2}\emph{\E}\qty{\qty|\tilde\sigma_{\text{MED}}^2-\sigma^2|} \leq c_0 \max\qty(\frac{1}{{(nq)}^{1/2}}, \frac{s}{q}).
\end{align}
\end{corollary}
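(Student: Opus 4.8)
The plan is to recognize that $\tilde\sigma_{\text{MED}}^2$ in \eqref{eq:sigma_med_proof_cluster} is exactly the median-of-squares estimator \eqref{eq:sigma_med_proof_linear} applied to the $nq$ entries of ${X}$, and then to invoke Lemma~\ref{eq:lemma_sigma_estimation} directly. Under \eqref{eq:data_gen} the entries $X_{ij}$ are mutually independent with $X_{ij}\sim\mathcal{N}(\mu_{ij},\sigma^2)$; writing $X_{ij}=\mu_{ij}+\sigma\xi_{ij}$ with the $\xi_{ij}$ independent and identically distributed $\mathcal{N}(0,1)$, and stacking the index pairs $(i,j)$ into a single index running over $\{1,\dots,nq\}$, we see that $(X_{ij})$ follows model \eqref{eq:med_model} with dimension $d=nq$, signal $\theta=\mathrm{vec}(\mu)$, and noise $\xi_{ij}$. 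The noise satisfies hypotheses (i)--(iii) of Lemma~\ref{eq:lemma_sigma_estimation}: $\E(\xi_{ij})=0$, $\E(\xi_{ij}^2)=1$, and $\E|\xi_{ij}|^{2+\epsilon}<\infty$ for every $\epsilon>0$, since the standard normal has finite moments of all orders. Moreover $\xi_{ij}^2\sim\chi_1^2$, so the normalizing constant $M_{\xi_1^2}$ in \eqref{eq:sigma_med_proof_linear} equals $M_{\chi_1^2}$ in \eqref{eq:sigma_med_proof_cluster}, and hence $\tilde\sigma_{\text{MED}}^2=\bar\sigma_{\text{MED}}^2$ verbatim.

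It then remains to translate the sparsity condition. If $\max_{1\le i\le n}\Vert\mu_i\Vert_0\le s$, then $\Vert\mathrm{vec}(\mu)\Vert_0=\sum_{i=1}^{n}\Vert\mu_i\Vert_0\le ns$, so $\theta$ is $(ns)$-sparse in the $nq$-dimensional model. Let $\gamma\in(0,1/8)$ and $C>0$ be the constants supplied by Lemma~\ref{eq:lemma_sigma_estimation}. The lemma's requirement $1\le(\text{sparsity})<\gamma d$ becomes $1\le ns<\gamma nq$; since $s\ge1$ and $n\ge1$ force $ns\ge1$, this reduces to $s<\gamma q$. Taking $\gamma_0=\gamma$, whenever $1\le s<\gamma_0 q$ the hypotheses of Lemma~\ref{eq:lemma_sigma_estimation} hold for the vectorized model, and its conclusion gives
\begin{align*}
\sup_{\sigma>0}\;\sup_{\underset{1\le i\le n}{\max}\Vert\mu_i\Vert_0\le s}\frac{1}{\sigma^2}\E\qty{\qty|\tilde\sigma_{\text{MED}}^2-\sigma^2|}
&\le \sup_{\sigma>0}\;\sup_{\Vert\theta\Vert_0\le ns}\frac{1}{\sigma^2}\E\qty{\qty|\bar\sigma_{\text{MED}}^2-\sigma^2|}\\
&\le C\max\qty(\frac{1}{(nq)^{1/2}},\frac{ns}{nq})=C\max\qty(\frac{1}{(nq)^{1/2}},\frac{s}{q}),
\end{align*}
so the claim follows with $c_0=C$.

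There is essentially no substantive obstacle beyond careful bookkeeping: the entire content is the reduction of the matrix-valued problem to the scalar sparse-vector problem already solved in \citet{Comminges2021-gm}, once one observes that $\chi_1^2$ is the squared standard normal. The only point requiring a moment's care is that the relevant sparsity level in the vectorized model is $ns$ rather than $s$, which is why the admissible range is $s<\gamma_0 q$ (rather than $s<\gamma_0 nq$) and why the final rate retains the factor $s/q$ rather than $s/(nq)$. Since the class $\{\mu:\max_i\Vert\mu_i\Vert_0\le s\}$ is contained in $\{\theta:\Vert\theta\Vert_0\le ns\}$, the supremum inequality runs in the correct direction, which is all that is needed.
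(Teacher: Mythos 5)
Your proof is correct and follows essentially the same route as the paper's: vectorize the matrix model into the sparse-vector model \eqref{eq:med_model} with $d=nq$ and sparsity at most $ns$, verify the Gaussian noise conditions, and invoke Lemma~\ref{eq:lemma_sigma_estimation} to obtain the rate $\max\qty(1/(nq)^{1/2}, s/q)$. Your explicit check that $ns<\gamma nq$ reduces to $s<\gamma q$ is a small bookkeeping detail the paper leaves implicit, but the argument is identical in substance.
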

\begin{proof}
First note that \eqref{eq:data_gen} can be re-written into the form of \eqref{eq:med_model}:
\begin{align}
\label{eq:equiv_data_gen}
X_{ij} = \mu_{ij} + \sigma \xi_{ij}, \quad i=1,\ldots, n,\, j = 1,\ldots, q,
\end{align}
where $\xi_{ij}$ is independently and identically distributed as $\mathcal{N}(0,1)$. Therefore, the estimator $\tilde\sigma_{\text{MED}}^2({X})$ in \eqref{eq:sigma_med_proof_cluster} is the estimator \eqref{eq:sigma_med_proof_linear} applied to the model \eqref{eq:equiv_data_gen}. Moreover, $\underset{1\leq i\leq n}{\max}\Vert {\mu}_i \Vert_0 \leq s$ implies that $ \sum_{i=1}^n \sum_{j=1}^q 1\qty{\mu_{ij}\neq 0} \leq ns$. Applying  Lemma~\ref{eq:lemma_sigma_estimation}, we have that 
\begin{align*}
\sup_{\sigma>0}\;\sup_{\underset{1\leq i\leq n}{\max}\Vert {\mu}_i \Vert_0 \leq s} \frac{1}{\sigma^2}\E\qty{\qty|\tilde\sigma_{\text{MED}}^2({X})-\sigma^2|} \leq c_0 \max\qty{\frac{1}{{(nq)}^{1/2}}, \frac{ns}{nq}} = c_0 \max\qty{\frac{1}{{(nq)}^{1/2}}, \frac{s}{q}},
\end{align*}
where $c_0$ is some universal constant.
\end{proof}
In words, Corollary~\ref{prop:sigma_med_consistency} states that under model \eqref{eq:data_gen}, the rate of convergence of $\tilde\sigma_{\text{MED}}^2$ in mean (and therefore, in probability) is $\max\qty{{1}/{{(nq)}^{1/2}}, {s}/{q}}$. In particular, $\tilde\sigma_{\text{MED}}^2$ is a consistent estimator of $\sigma^2$ provided that $s/q \to 0$ as $q \to \infty$.

Next, we investigate the property of the sample variance estimator $\hat\sigma^2_{\text{Sample}}$.
\begin{proposition}
\label{prop:sigma_sample_consistency}
Under model \eqref{eq:data_gen}, for $\hat\sigma_{\text{Sample}}^2({X}) = \sum_{i=1}^n\sum_{j=1}^q \qty({X}_{ij}-\bar{{X}}_j)^2/(nq-q)$, we have that
\begin{align}
\label{eq:expectation_var_sample}
\emph{\E}\qty{\hat\sigma_{\text{Sample}}^2({X})} - \sigma^2 =  \frac{1}{2n(n-1)q}\sum_{j=1}^q\sum_{i=1}^n\sum_{i'=1}^n (\mu_{ij}-\mu_{i'j})^2.
\end{align}
Moreover, for any integers $s$ and $q$ such that $ns\leq q$, we have that, for some constant $\tilde{c}_0$,
\begin{align}
\label{eq:sigma_sample_statement}
\sup_{\sigma>0}\;\sup_{\underset{1\leq i\leq n}{\max}\Vert {\mu}_i \Vert_0 \leq s} \frac{1}{\sigma^2}\emph{\E}\qty{\qty|\hat\sigma_{\text{Sample}}^2({X})-\sigma^2|} \geq \tilde{c}_0\frac{s}{q}.
\end{align}
\end{proposition}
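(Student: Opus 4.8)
The plan is to establish the exact bias identity \eqref{eq:expectation_var_sample} by a direct second-moment computation, and then to deduce the lower bound \eqref{eq:sigma_sample_statement} from Jensen's inequality together with an explicit worst-case choice of ${\mu}$.

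For the bias, I would write ${X}_{ij} = \mu_{ij} + \sigma\xi_{ij}$ with the $\xi_{ij}$ independent and identically distributed as $\mathcal{N}(0,1)$, so that ${X}_{ij}-\bar{{X}}_j = (\mu_{ij}-\bar\mu_j) + \sigma(\xi_{ij}-\bar\xi_j)$, where $\bar\mu_j = \sum_{i=1}^n\mu_{ij}/n$ and $\bar\xi_j = \sum_{i=1}^n\xi_{ij}/n$. Expanding the square, taking expectations, and using that $\E(\xi_{ij}-\bar\xi_j) = 0$ kills the cross term (the deterministic factor pulls out), one gets $\E\{({X}_{ij}-\bar{{X}}_j)^2\} = (\mu_{ij}-\bar\mu_j)^2 + \sigma^2\var(\xi_{ij}-\bar\xi_j) = (\mu_{ij}-\bar\mu_j)^2 + \sigma^2(n-1)/n$. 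Summing over $i$ and $j$ and dividing by $nq-q = q(n-1)$ yields $\E\{\hat\sigma_{\text{Sample}}^2({X})\} = \sigma^2 + \{q(n-1)\}^{-1}\sum_{j=1}^q\sum_{i=1}^n(\mu_{ij}-\bar\mu_j)^2$. The claimed form \eqref{eq:expectation_var_sample} then follows from the elementary identity $\sum_{i=1}^n(\mu_{ij}-\bar\mu_j)^2 = (2n)^{-1}\sum_{i=1}^n\sum_{i'=1}^n(\mu_{ij}-\mu_{i'j})^2$, which is checked by expanding both sides and using $\sum_{i=1}^n\mu_{ij} = n\bar\mu_j$.

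For the lower bound, I would first note that the bias in \eqref{eq:expectation_var_sample} is nonnegative, so Jensen's inequality gives $\E\{|\hat\sigma_{\text{Sample}}^2({X})-\sigma^2|\} \geq \E\{\hat\sigma_{\text{Sample}}^2({X})\}-\sigma^2 = \{2n(n-1)q\}^{-1}\sum_{j=1}^q\sum_{i=1}^n\sum_{i'=1}^n(\mu_{ij}-\mu_{i'j})^2$ for every ${\mu}$ and every $\sigma>0$. It then suffices to exhibit, for each $\sigma>0$, a mean matrix ${\mu}^\star$ with $\max_{1\le i\le n}\Vert{\mu}^\star_i\Vert_0 \le s$ whose bias is at least a constant multiple of $\sigma^2 s/q$. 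Since $ns\le q$, I can take ${\mu}^\star$ supported on $ns$ distinct columns: assign to row $i$ the value $\sigma$ in columns $(i-1)s+1,\ldots,is$ and $0$ elsewhere. Then each row has exactly $s$ nonzero entries; each of the $ns$ active columns has one entry equal to $\sigma$ and $n-1$ entries equal to $0$, so $\sum_{i=1}^n(\mu^\star_{ij}-\bar\mu^\star_j)^2 = \sigma^2(n-1)/n$ for those columns and $0$ for the rest; hence the bias equals $\{q(n-1)\}^{-1}\cdot ns\cdot \sigma^2(n-1)/n = \sigma^2 s/q$. Dividing by $\sigma^2$ and taking suprema over ${\mu}$ and over $\sigma$ gives \eqref{eq:sigma_sample_statement} with $\tilde c_0 = 1$.

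I do not expect a serious obstacle: the moment computation is routine, and the only subtlety is that the worst-case ${\mu}^\star$ is allowed to depend on $\sigma$ — its entries must scale like $\sigma$ — which is exactly why a supremum over $\sigma$ appears in \eqref{eq:sigma_sample_statement}; one must also check that ${\mu}^\star$ obeys the per-row sparsity bound, which is precisely what the condition $ns\le q$ guarantees.
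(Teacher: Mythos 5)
Your proposal is correct and follows essentially the same route as the paper: an exact second-moment computation for the bias (the paper uses Lagrange's identity where you use the equivalent pairwise-difference form of the sample variance), then $\E\qty|Z|\geq \E(Z)$ together with a worst-case sparse $\mu$ having one nonzero entry per active column and $s$ per row. Your only real variation is scaling the entries of $\mu^\star$ with $\sigma$, which cleanly yields the explicit constant $\tilde c_0=1$, whereas the paper fixes the entries above a constant $M$ and absorbs it into $\tilde c_0$.
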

\begin{proof}
We start with the proof of \eqref{eq:expectation_var_sample}. Under \eqref{eq:data_gen}, the following holds:
\begin{align*}
\E\qty{\hat\sigma_{\text{Sample}}^2({X})} &= \E\qty{\sum_{i=1}^n\sum_{j=1}^q \qty({X}_{ij}-\bar{{X}}_j)^2/(nq-q)} \\
&= \frac{1}{(n-1)q} \E\qty[\sum_{i=1}^n\sum_{j=1}^q  \qty{X_{ij}^2 - (\bar{{X}}_j)^2} ] \\
&=  \frac{1}{(n-1)q} \sum_{i=1}^n\sum_{j=1}^q \qty[\sigma^2 + \mu_{ij}^2 - \qty{\frac{\sigma^2}{n}+\frac{1}{n^2}\qty(\sum_{i'=1}^n \mu_{i'j})^2} ] \\
&= \sigma^2 + \frac{1}{n^2(n-1)q}\sum_{i=1}^n\sum_{j=1}^q  \qty{ n^2\mu_{ij}^2  - \qty(\sum_{i'=1}^n \mu_{i'j})^2} \\
&= \sigma^2 + \frac{1}{n(n-1)q}\sum_{j=1}^q \qty{ \qty(\sum_{i=1}^nn\mu_{ij}^2)  - \qty(\sum_{i'=1}^n \mu_{i'j})^2} \\
&= \sigma^2 + \frac{1}{2n(n-1)q}\sum_{j=1}^q  \sum_{i=1}^n  \sum_{i'=1}^n \qty(\mu_{ij}-\mu_{i'j})^2.
\end{align*}
Here, the last equality follows from Langrange's identity, which states that $\qty(\sum_{i=1}^n a_i^2 )\qty(\sum_{i=1}^n b_i^2 ) - \qty(\sum_{i=1}^n a_ib_i )^2 = 1/2\sum_{i=1}^n\sum_{i'=1}^n \qty(a_i b_{i'}-a_{i'}b_i)^2$. 

To prove the second statement, we consider a specific matrix $\tilde{{\mu}} \in \mathbb{R}^{n\times q}$ with exactly $ns\leq q$ non-zero entries. In addition, each column of $\tilde{{\mu}}$ has at most one non-zero entry and each row of $\tilde{{\mu}}$ has exactly $s$ non-zero entries. This is possible because $ns$ is assumed to be less than $q$. Finally, we assume that the square of the minimal non-zero entry of $\tilde{{\mu}}$, $\underset{i,j:\tilde{\mu}_{ij} \neq 0}{\min}\tilde{\mu}_{ij}^2$, is lower bounded by some universal constant $M$. Then, we have that
{
\begin{align*}
\sup_{\sigma>0}\;\sup_{\underset{1\leq i\leq n}{\max}\Vert {\mu}_i \Vert_0 \leq s}& \frac{1}{\sigma^2}\E\qty{\qty|\hat\sigma_{\text{Sample}}^2({X})-\sigma^2|} \\
&\overset{a.}{\geq} \sup_{\sigma>0} \frac{1}{\sigma^2}\E_{X\sim\mathcal{MN}(\tilde{{\mu}},\textbf{I}_n,\sigma^2\textbf{I}_q)}\qty{\qty|\hat\sigma_{\text{Sample}}^2({X})-\sigma^2|} \\
&\overset{b.}{\geq} \sup_{\sigma>0} \frac{1}{\sigma^2}\E_{X\sim\mathcal{MN}(\tilde{{\mu}},\textbf{I}_n,\sigma^2\textbf{I}_q)}\qty{\hat\sigma_{\text{Sample}}^2({X})-\sigma^2} \\
&\overset{c.}{\geq} \sup_{\sigma>0}\frac{1}{\sigma^2}\frac{1}{2n(n-1)q}\sum_{j=1}^q\sum_{i=1}^n\sum_{i'=1}^n (\tilde{\mu}_{ij}-\tilde{\mu}_{i'j})^2 \\
&\overset{}{\geq} \sup_{\sigma>0}\frac{1}{\sigma^2}\frac{1}{2n(n-1)q}\sum_{j=1}^q\sum_{i=1}^n\sum_{i'=1}^n 1\qty{\tilde{\mu}_{ij}\neq 0}1\qty{\tilde{\mu}_{i'j}= 0} (\tilde{\mu}_{ij}-\tilde{\mu}_{i'j})^2 \\
&\overset{d.}{\geq} \sup_{\sigma>0}\frac{1}{\sigma^2}\frac{M(n-1)ns}{2n(n-1)q} \\
&\overset{}{\geq}\tilde{c}_0\frac{s}{q}.
\end{align*}
}
Here, $a.$ follows from picking any $\tilde{{\mu}}$ satisfying the conditions outlined above, since by construction, $\underset{1\leq i\leq n}{\max}\Vert \tilde{{\mu}}_i \Vert_0 = s$. Steps $b.$ and $c.$ follow from the inequality $\E\qty(|X|) \geq \E(X)$ and the expression for $\E\qty{\hat\sigma_{\text{Sample}}^2({X})}$ in \eqref{eq:expectation_var_sample}, respectively. Finally, to prove $d.$, we note that for each of the $ns$ columns with exactly one non-zero element, there are $n-1$ pairs of $(i,i'), i=1,\ldots, n;i'=1,\ldots ,n$ such that the product $1\qty{\tilde{\mu}_{ij}\neq 0}1\qty{\tilde{\mu}_{i'j}= 0}$ is non-zero. Moreover, each of pair contributes at least $M$ by the assumption that $\min_{i,j: \tilde{\mu}_{ij} \neq 0}\tilde{\mu}_{ij}^2\geq M$. 
\end{proof}

Contrasting the results in Corollary~\ref{prop:sigma_med_consistency} and Proposition~\ref{prop:sigma_sample_consistency}, we note that, under \eqref{eq:data_gen}, the convergence of $\tilde\sigma^2_{\text{MED}}$ depends critically on the sparsity parameter $s$ (or, equivalently, the $\ell_0$ norm of ${\mu}_i$), whereas the convergence of $\hat\sigma_{\text{Sample}}^2$ is determined by $\sum_{j=1}^q\sum_{i=1}^{n}\sum_{i'=1}^{n}\qty({\mu}_{ij}-{\mu}_{i'j})^2$. Thus, in scenarios where the underlying means ${\mu}_i,i=1,\ldots,n$ are sparse (e.g., \eqref{eq:power_model} in Section~\ref{section:sim}), we expect $\tilde\sigma_{\text{MED}}^2$ (and therefore its ``centered'' analog $\hat\sigma_{\text{MED}}^2$ in \eqref{eq:sigma_hat_MED}) to be a less conservative estimator of $\sigma^2$. As a result, we expect the test based on $\pKhat(\hat\sigma_{\text{MED}})$ to be more powerful than that based on $\pKhat(\hat\sigma_{\text{Sample}})$, as shown in Figure~\ref{fig:sim_power} of Section~\ref{section:sim}.

\subsection{Additional power comparisons}
\label{appendix:additional_simulation}

In Section~\ref{section:cond_power}, we compared the conditional power of the tests based on $\pK$, $\pKhat(\hat\sigma_{\text{MED}})$, and $\pKhat(\hat\sigma_{\text{Sample}})$ under \eqref{eq:power_model}. Here, we conduct two additional analyses.

In the first analysis, we consider a different notion of power that does not condition on $\hat{\mathcal{C}}_1$ and $\hat{\mathcal{C}}_2$ being true clusters. In this case, comparing the power of the tests requires a bit of care, because the effect size $\Vert{\mu}^\top \nu\Vert_2$ may differ across simulated datasets from the same data-generating distribution. As a result, we consider the power of the tests \emph{as a function of $\Vert{\mu}^\top \nu\Vert_2$}. We fit a regression spline using the \texttt{gam} function in the \texttt{R} package \texttt{mgcv}~\citep{wood_2017} to obtain a smooth estimate of power on the same simulated datasets from Section~\ref{section:cond_power}. 
The results are in Figure~\ref{fig:smooth_power}. The power of the tests that reject $H_0$ if 
$\pK$, $\pKhat(\hat\sigma_{\text{MED}})$, or $\pKhat(\hat\sigma_{\text{Sample}})$ is less than $\alpha=0.05$ increases as $\Vert{\mu}^\top \nu\Vert_2$ increases. For a given value of $\Vert{\mu}^\top \nu\Vert_2$ and $\sigma$, the test based on $\pK$ has the highest power, followed by that based on $\pKhat(\hat\sigma_{\text{MED}})$; the test based on $\pKhat(\hat\sigma_{\text{Sample}})$ has the lowest power. 

\begin{figure}[htbp!]
\centering
\includegraphics[width=0.7\linewidth]{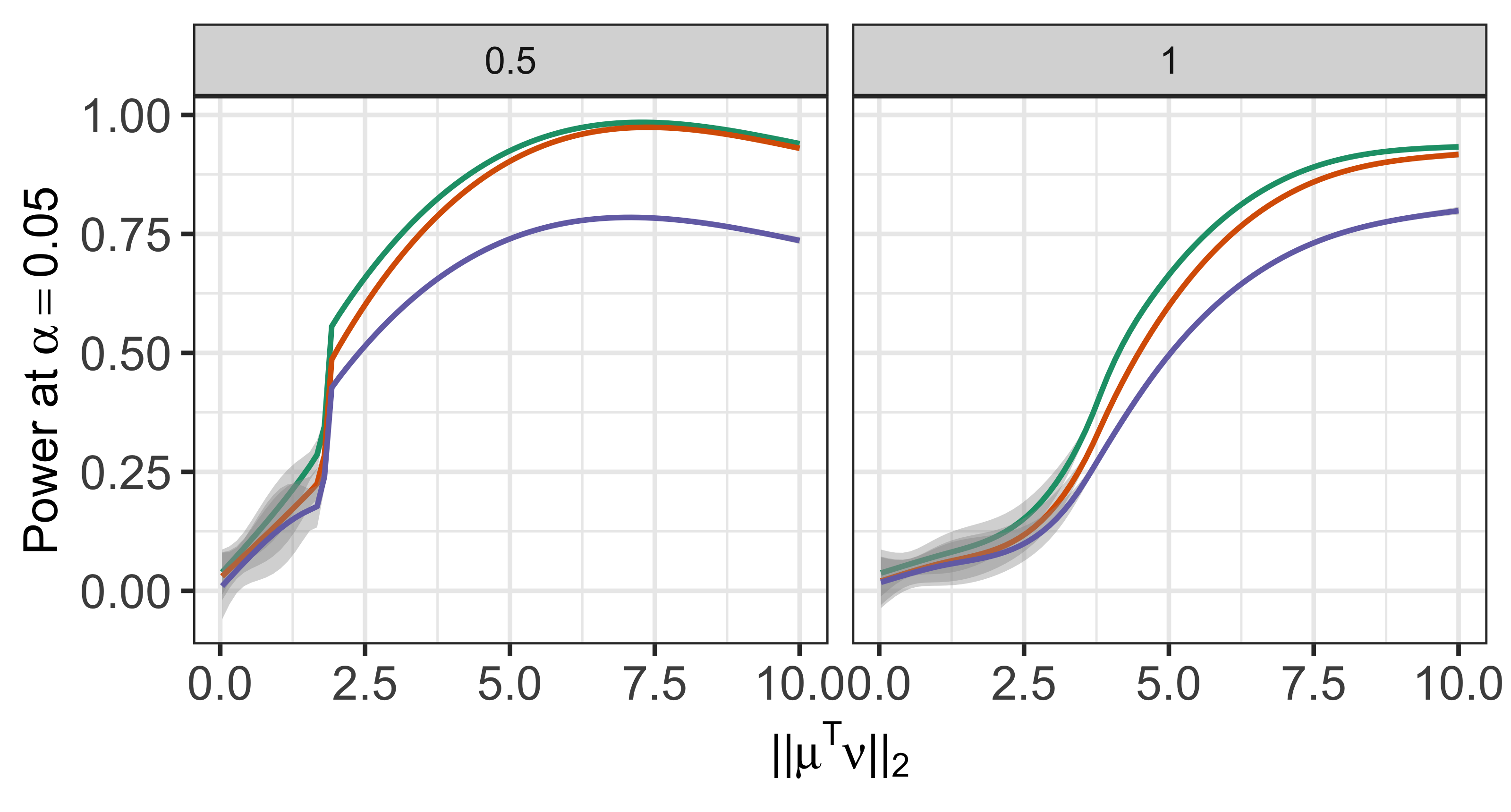}
\vspace*{-4mm}
\caption{\textit{Left }: Additional analysis of the data in Section~\ref{section:cond_power} with $\sigma=0.5$. We fit a regression spline to display the power of the tests based on $\pK$ (green line), $\pKhat(\hat\sigma_{\text{MED}})$ (orange line), and $\pKhat(\hat\sigma_{\text{Sample}})$ (purple line) as a function of $\Vert{\mu}^\top \nu\Vert_2$. \textit{Right }: Same as left, but for $\sigma=1$.}
\label{fig:smooth_power}
\end{figure}

In the second analysis, we consider the conditional power (defined in \eqref{eq:conditional_power}) of the tests based on $\pK$, $\pKhat(\hat\sigma_{\text{MED}})$, and $\pKhat(\hat\sigma_{\text{Sample}})$ under a different data generating model than \eqref{eq:power_model}. We generate data from \eqref{eq:data_gen} with $n=150$ and 
{
\begin{align}
\label{eq:less_sparse_power_model}
\hspace{-6mm}{\mu}_1 =\ldots = {\mu}_{\frac{n}{3}} = \begin{bmatrix}
{\theta}_1 \\ 0_{0.9q}
\end{bmatrix}, \; {\mu}_{\frac{n}{3}+1}=\ldots = {\mu}_{\frac{2n}{3}} = \begin{bmatrix}
 {\theta}_2\\0_{0.9q} 
\end{bmatrix} ,\;
{\mu}_{\frac{2n}{3}+1}=\ldots = {\mu}_{n} = \begin{bmatrix}
{\theta}_3 \\ 0_{0.9q}
\end{bmatrix},
\end{align}}
where, $q$ is taken to be a multiple of 10, and for $\delta>0$, ${\theta} \in \mathbb{R}^{3\times0.1q}$ has orthogonal rows, with $\Vert{\theta}_i\Vert_2^2 = \delta/2$ for $i=1,2,3$. As in Section~\ref{section:cond_power}, we can think of $\mathcal{C}_1 = \{1,\ldots,n/3\},\mathcal{C}_2 = \{(n/3)+1,\ldots,(2n/3)\},\mathcal{C}_3 = \{(2n/3)+1,\ldots,n\}$ as ``true clusters''. Under \eqref{eq:less_sparse_power_model}, the pairwise distance between each pair of true clusters is $\delta$.

\begin{figure}[htbp!]
\begin{centering}
\hspace{10mm} (a) \hspace{55mm} (b) \\
\end{centering}
\centering
\includegraphics[width=0.8\linewidth]{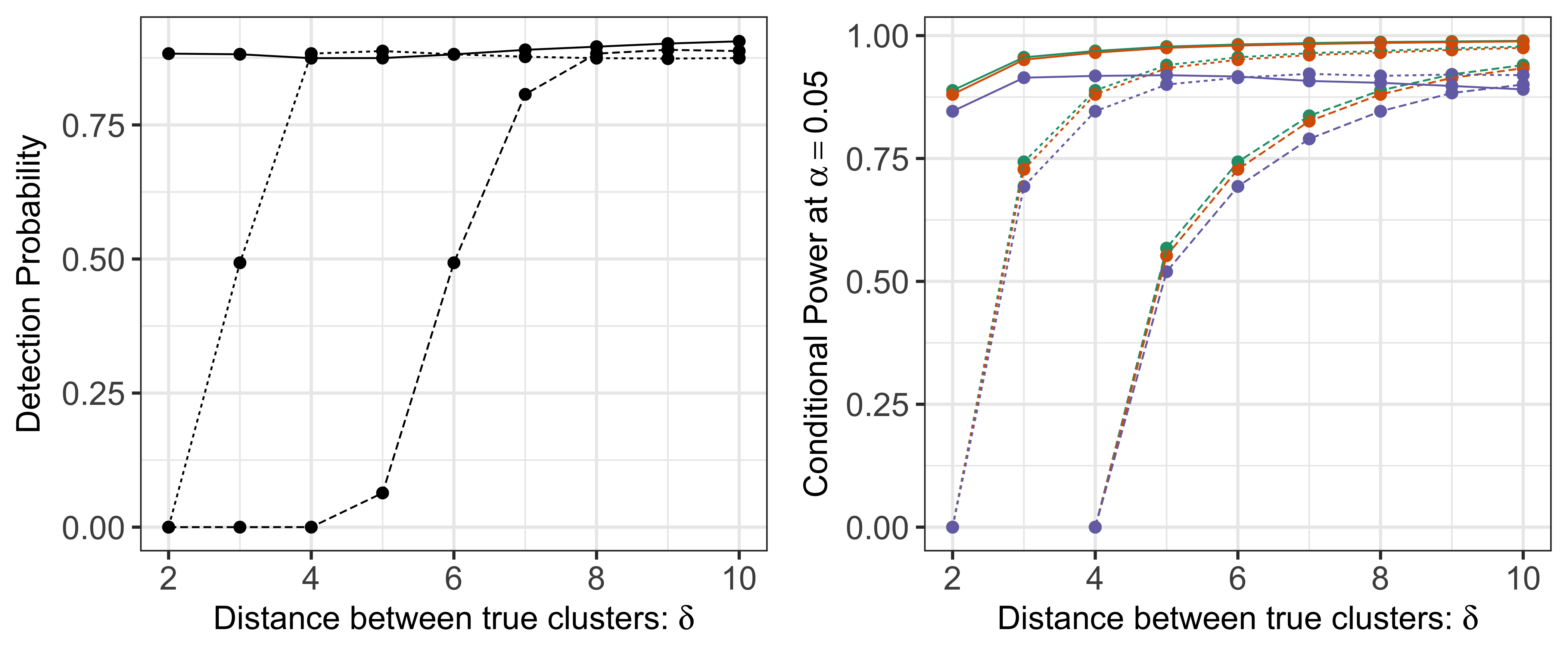}

\begin{centering}
\hspace{10mm} (c) \hspace{55mm} (d) \\
\end{centering}
\centering
\includegraphics[width=0.8\linewidth]{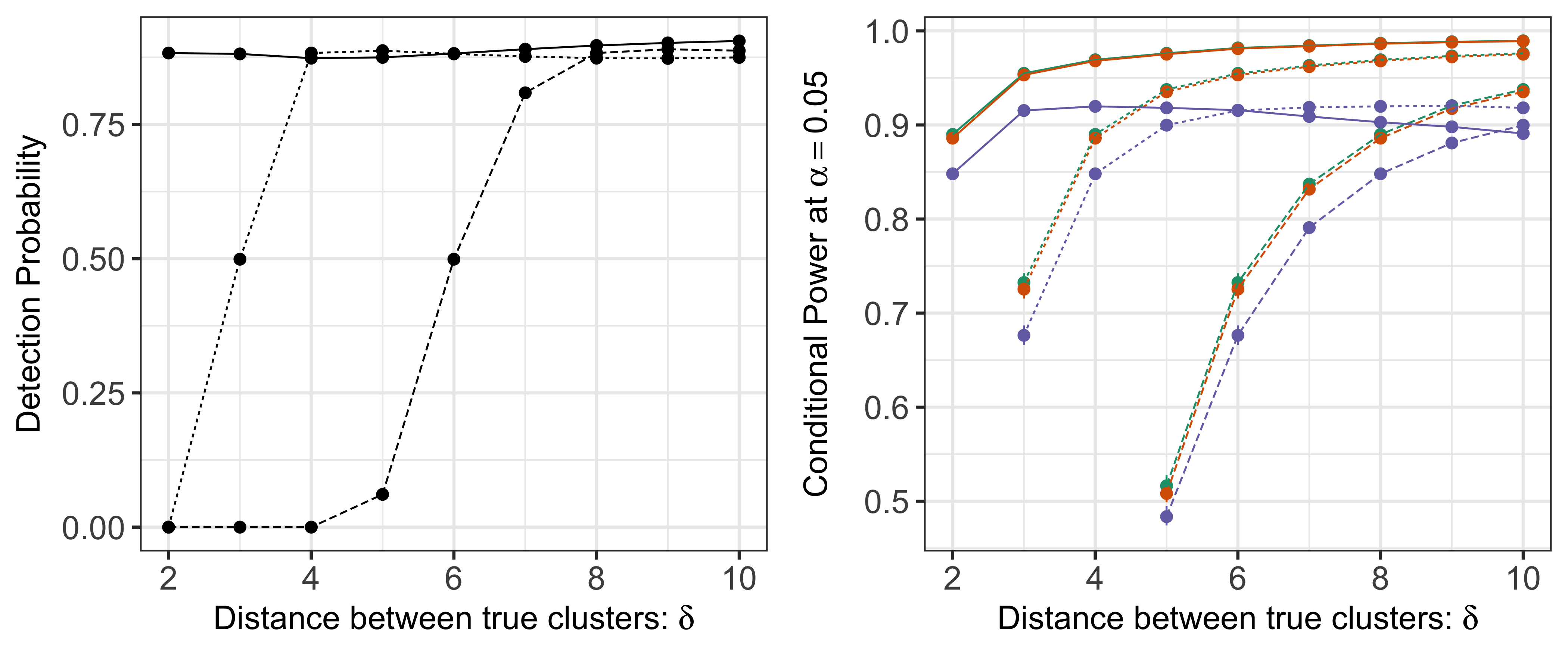}
\vspace*{-4mm}
\caption{\textit{(a): } Detection probability defined in \eqref{eq:detect_p} for $k$-means clustering with $K=3$ under model \eqref{eq:data_gen} with $n=150$, $q=50$, and ${\mu}$ in \eqref{eq:less_sparse_power_model}, across $\delta  = \Vert {\theta}_i - {\theta}_j \Vert_2$ in \eqref{eq:less_sparse_power_model} and $\sigma=0.25$ (solid lines), $0.5$ (dashed lines), and $1$ (long-dashed lines).
\textit{(b): } The conditional power \eqref{eq:conditional_power} at $\alpha=0.05$ for the tests based on $\pK$ (green), $\pKhat(\hat{\sigma}_{\text{MED}})$ (orange), and $\pKhat(\hat{\sigma}_{\text{Sample}})$ (purple), under model \eqref{eq:data_gen} with $n=150$, $q=50$, and ${\mu}$ in \eqref{eq:less_sparse_power_model}. \textit{(c): } Same as (a), but for ${\mu}$ in \eqref{eq:power_model}. \textit{(d): } Same as (b), but for ${\mu}$ in \eqref{eq:power_model}.}
\label{fig:sim_power_q_50}
\end{figure}

We generate $M=100,000$ datasets from \eqref{eq:less_sparse_power_model} with $q=50$,$\sigma=0.25, 0.5,1$, and $\delta=2,3,\ldots,10$. For each simulated dataset, we apply $k$-means clustering with $K=3$ and reject $H_0:{\mu}^\top \nu = 0_q$ if $\pK$, $\pKhat(\hat\sigma_{\text{MED}})$, or $\pKhat(\hat\sigma_{\text{Sample}})$ is less than $\alpha = 0.05$. Figure~\ref{fig:sim_power_q_50}(a) displays the detection probability \eqref{eq:detect_p} of $k$-means clustering as a function of $\delta$ in \eqref{eq:less_sparse_power_model}. Under model \eqref{eq:data_gen}, the detection probability increases as a function of $\delta$ in \eqref{eq:less_sparse_power_model} across all values of $\sigma$. For a given value of $\delta$, a larger value of $\sigma$ leads to lower detection probability. Figure~\ref{fig:sim_power_q_50}(b) displays the conditional power \eqref{eq:conditional_power} for the tests based on $\pK$, $\pKhat(\hat\sigma_{\text{MED}})$, and $\pKhat(\hat\sigma_{\text{Sample}})$. For some combinations of $\delta$ and $\sigma$, the conditional power is not displayed, because the true clusters are never recovered in simulation. For all tests and values of $\sigma$ under consideration, conditional power is an increasing function of $\delta$. For a given test and a value of $\delta$, smaller $\sigma$ leads to higher conditional power. Moreover, for the same values of $\delta$ and $\sigma$, the test based on $\pK$ has the highest conditional power, followed closely by the test based of $\pKhat(\hat{\sigma}_{\text{MED}})$. Using $\pKhat(\hat{\sigma}_{\text{Sample}})$ leads to a less powerful test, especially for larger values of $\delta$. As a comparison, we included the detection probability and conditional power under model \eqref{eq:power_model} with $q=50$ in panels (c) and (d) of Figure~\ref{fig:sim_power_q_50}. The tests under consideration behave qualitatively similarly as a function of $\delta$ and $\sigma$. Under \eqref{eq:power_model}, we observe an even larger gap between the power of the test based on $\pKhat(\hat{\sigma}_{\text{Sample}})$ and the power of the test based on $\pKhat(\hat{\sigma}_{\text{MED}})$.

\clearpage
\subsection{Additional results for real data applications}
\label{appendix:additional_real_data}

In this section, we visualize the estimated clusters for the single cell RNA-sequencing data in Section~\ref{subsection:zheng_real_data}.

\begin{figure}[htbp!]
\centering
\begin{subfigure}{0.4\linewidth}
\includegraphics[width=\linewidth]{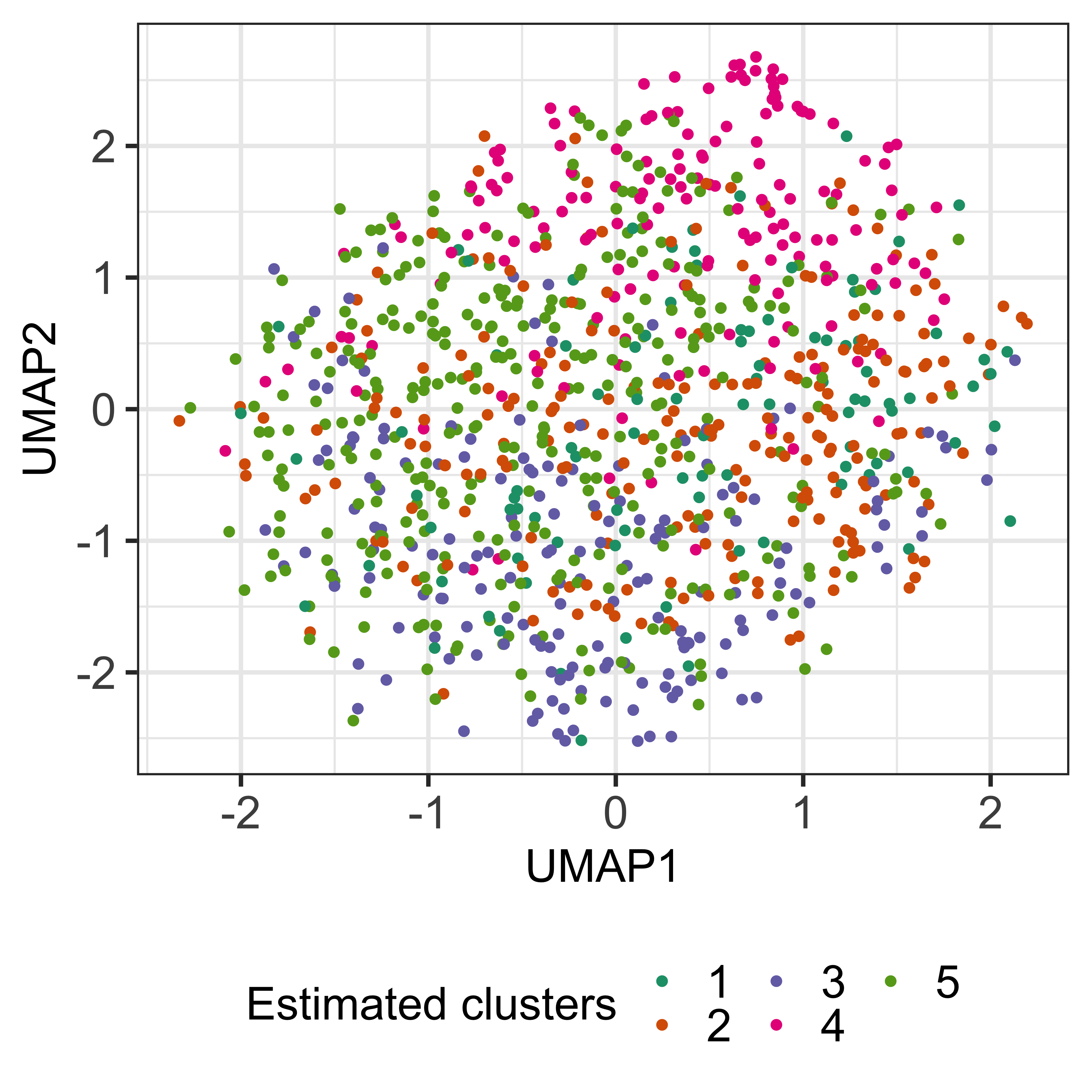}
\end{subfigure}
\begin{subfigure}{0.4\linewidth}
\includegraphics[width=\linewidth]{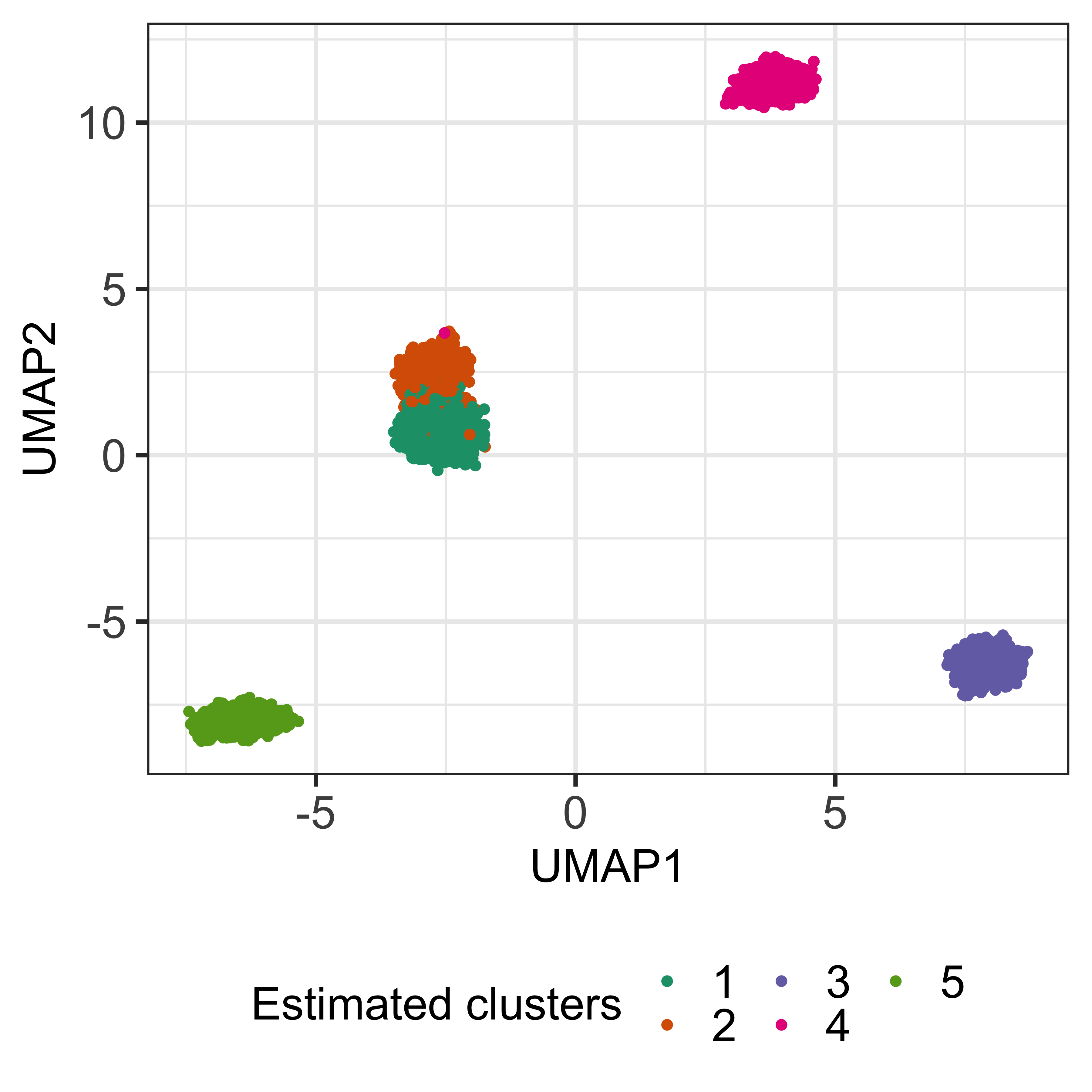}
\end{subfigure}
\vspace*{-4mm}
\caption{\textit{Left: } The two-dimensional UMAP embedding~\citep{McInnes2018-gn} of the ``no cluster'' dataset after preprocessing (as described in Section~\ref{subsection:zheng_real_data}), colored by the estimated cluster membership via $k$-means clustering.
 \textit{Right: } Same as left, but for the ``cluster'' dataset.
}
\label{fig:single_cell_visualization}
\end{figure}



\end{document}